\newcommand{\measurerestr}{%
\mathop{\lfloor}%
}
\newcommand{\R}{\mathbb{R}}
\newcommand{\Z}{\mathbb{Z}}
\newcommand{\W}{\mathcal{W}}
\renewcommand{\wp}{\widetilde{p}}
\newcommand{\rd}{\mathrm{d}}
\newcommand{\e}{\mathrm{e}}
\newcommand{\eps}{\epsilon}
\newcommand{\bydef}{\mathrel{\mathop{:=}}}
\newcommand{\ini}{\mathrm{I}}
\newcommand{\weakstar}{\overset{*}{\rightharpoonup}}
\newcommand{\A}{\ensuremath{\mathbb{A}}\xspace}
\newcommand{\B}{\ensuremath{\mathbb{B}}\xspace}
\renewcommand{\H}{\mathcal{H}}
\newcommand{\diag}[1]{\ensuremath{\mathop{\mathrm{Diag}}[#1]}}
\newcommand{\bx}{\mathbf{x}}
\newcommand{\bw}{\mathbf{w}}
\newcommand{\bz}{\mathbf{z}}
\newcommand{\bv}{\mathbf{v}}
\newcommand{\be}{\mathbf{e}}
\newcommand{\bp}{\mathbf{p}}
\newcommand{\bL}{\mathbf{L}}
\newcommand{\bq}{\mathbf{q}}
\newcommand{\bR}{\mathbf{R}}
\newcommand{\bpi}{\boldsymbol{\pi}}
\newcommand{\bpsi}{\boldsymbol{\psi}}
\newcommand{\bzero}{\mathbf{0}}
\newcommand{\tsp}{\mathsf{s}}
\newcommand{\degdiff}{\Theta}
\newcommand{\calA}{\mathcal{A}}
\newcommand{\bone}{\mathbf{1}}
\newcommand{\bF}{\mathbf{F}}
\newcommand{\bM}{\mathbf{M}}
\newcommand{\bP}{\mathbf{P}}
\newcommand{\bQ}{\mathbf{Q}}
\newcommand{\bT}{\mathbf{T}}
\newcommand{\bA}{\mathbf{A}}
\newcommand{\bK}{\mathbf{K}}
\newcommand{\bI}{\mathbf{I}}
\newcommand{\simplex}{\Delta}
\newcommand{\grad}{\operatorname{grad}}
\newcommand{\dive}{\operatorname{div}}
  \newtheorem{lemma}{Lemma}
  \newtheorem{corollary}{Corollary}
  \newtheorem{proposition}{Proposition}
  \newtheorem{theorem}{Theorem}
  \newtheorem{definition}{Definition}
  \newtheorem{remark}{Remark}
  \newtheorem{claim}{Claim}
 \newenvironment{acknowledgements}{\textbf{Acknowledgements}}{}
\title{
Gradient flow formulations of discrete and continuous evolutionary models: a unifying perspective.
}
\date{\today}
\author[1]{Fabio A. C. C. Chalub}
\author[2,3]{L\'eonard Monsaingeon}
\author[1]{Ana Margarida Ribeiro}
\author[4]{Max O. Souza}
\affil[1]{Departamento de Matem\'atica and Centro de Matem\'atica e Aplica\c c\~oes, Faculdade de Ci\^encias e Tecnologia, Universidade Nova de Lisboa, Quinta da Torre, 2829-516 Caparica, Portugal}
\affil[2]{IECL Universit\'e de Lorraine, F-54506 Vand\oe uvre-l\`es-Nancy Cedex, France}
\affil[3]{
Grupo de F\'isica Matem\'atica, GFMUL, Faculdade de Ci\^encias, Universidade de Lisboa,
1749-016 Lisboa, Portugal
}
\affil[4]{Instituto de Matem\'atica e Estatística, Universidade Federal Fluminense, Rua Prof. Marcos Waldemar de Freitas Reis, S/N, Campus do Gragoat\'a, Niter\'oi, RJ 24210-201, Brazil}
\begin{document}

\maketitle

\begin{abstract}
We consider three classical models of biological evolution:
(i) the Moran process, an example of a reducible Markov Chain;
(ii) the Kimura Equation, a particular case of a degenerated Fokker-Planck Diffusion;
(iii) the Replicator Equation, a paradigm in Evolutionary Game Theory.
While these approaches are not completely equivalent, they are intimately connected, since (ii) is the diffusion approximation of (i), and  (iii) is obtained  from (ii) in an appropriate limit.
It is well known that the Replicator Dynamics for two strategies  is a gradient flow with respect to the celebrated Shahshahani distance.
We reformulate the Moran process and the Kimura Equation as gradient flows and in the sequel we discuss conditions such that the associated gradient structures converge: (i) to (ii), and (ii) to (iii).
This provides a geometric characterisation of these evolutionary processes and provides a reformulation of the above examples as time minimisation of free energy functionals.

\textbf{Keywords:} Gradient Flow Structure; Optimal Transport; Replicator Dynamics; Shahshahani Distance;  Reducible Markov Chains; Kimura Equation.


\end{abstract}

\section{Introduction}

\subsection{Background}
From a contemporary perspective, evolution  can be conveniently described as being the product of changes in allele frequencies within a population --- cf. \cite{ewens2004mathematical}.
Albeit with an apparently simple definition, evolution is actually a complex phenomenon and, as such, it comprises many different mechanisms: natural selection, mutation, genetic-drift, to name only a few. 

The need to understand these different mechanisms and, more recently, their interplay, led to the development of a plethora of models in evolutionary dynamics: discrete time Markov chains were used in the early 20th century to study genetic drift (e.g., the Wright-Fisher~\cite{Wright_1931,Fisher_1922} and the Moran~\cite{Moran_1962} processes);  
continuous time stochastic processes in the mid 20 century geared towards molecular evolution, as is the case of Kimura Partial Differential Equation~\cite{Kimura}; 
and, finally, systems of Ordinary Differential Equations (ODE) that are used to model natural selection, cf.~\cite{TaylorJonker_1978}. 
These three classes of basic models can be considered as a classical triad \cite{ChalubSouza09b,ChalubSouza14a}, and they will be the focus of this work.  It should be also noticed that, more recently, new modelling paradigms have been  used --- most notably 
Individual Based Models
\cite{hauert2004spatial,deangelis2005individual} and kinetic models \cite{bellomo2008mathematical,veloz2014reaction}.  

The study of different connections between  models in this classical triad dates back at least to \cite{EthierKurtz}, where a frequency-dependent version of the Wright-Fisher process was introduced, and the large population regime was shown to be described by a generalised version of the Kimura Equation.
More recently a number of works have explored these links providing various approaches to a unified view on these models, in the weak selection regime with infinite population limit and suitable scaling relations between the time step and population size~\cite{champagnat2006unifying,ChalubSouza09b,ChalubSouza14a}.
In addition, the Kimura Equation and the Replicator Dynamics (RD) are connected for short times and strong selection.
However, despite all these connections, there are also important differences --- see \cite{ChalubSouza16} for results on the  qualitative difference of fixation probability in large populations compared to infinite ones, and  \cite{ChalubSouza_2017} for notable features of the fixation probability in finite populations that are not in the weak selection regime. These results suggest that the impact of all underlying assumptions  in each of these three models is not  yet fully understood.

The aim of this work is to investigate this classical triad  from yet a different perspective which, as far as we are aware of, seems to have been overlooked: namely, the fact that these  models can be formulated  (or at least made compatible with) some sort of local  maximisation principle. 
This approach has a long tradition in the biological literature, cf.~\cite{Kimura_1958,Ewens_1992,Behera_1996}; for the relation between \emph{optimal principles in evolution} and the \emph{Fundamental Theorem of Natural Selection}, due to Fisher~\cite{Fisher_1930}, see also~\cite{Ewens_2011,Ewens_2014,Ewens_2015}.

 We should point out that we are not attempting to obtain a global maximisation principle.
 The existence (or usefulness) of global principles is  a quite controversial topic in evolution, and we refer the reader to  \cite{smith1978optimization} for a review on optimising techniques in evolution and to \cite{metz1996does,metz2008does} for a  critique on this approach.  

\subsection{Gradient flow formulations of evolutionary models and main results}

It will turn out that  the fundamental tool that will allow us to accomplish the previous task is the concept of \emph{gradient flows}. 
This is a rather classical topic in differential equations that has raised much interest recently after the water-shedding work \cite{ambrosio2008gradient} --- see also \cite{santambrogio2017euclidean} for a very gentle introduction.

Gradient flows are hardly new in evolutionary dynamics: under some hypotheses, the RD can be reinterpreted as a gradient flow with respect to a specific metric --- known  by now as the  Shahshahani metric.
In particular, for the one dimensional case, the RD is always a gradient flow in this metric~\cite{Shahshahani_1979,akin1979geometry,akin1990differential}.

Motivated by the positive results of a  research program undertaken  by two of the authors (FACCC and MOS) in studying this triad starting from the discrete processes \cite{ChalubSouza09b,ChalubSouza14a,ChalubSouza16}, we will follow the same pattern with a slight modification: we will start from the continuous-time generalisation of the well-known Moran process \cite{etheridge2011some}.
This will allow us to adapt the framework recently developed by \cite{Maas_JFA} to our case, and obtain a formulation of the transient part of the Moran process as a gradient flow.
These adaptations turn out to be deeply connected with the so-called associated $Q$-process, which  describes the probability law of  eternal paths in this absorbing system.
This formulation will also provide a ``geometrisation"  for finite populations, and answers a question raised in \cite{akin1979geometry}. 

Finally, from well-known gradient flow formulations of Fokker-Planck type equations\cite{Jordan_Kindeleherer_Otto}, and based on the $Q$-process in the continuous setting, we obtain a gradient flow formulation for the Kimura Partial Differential Equation.

Summing up, we were able to derive independent gradient flow formulations of the triad. 
Subsequently, we study the natural compatibility between these models, which is expected to hold due to some of the authors' previous results. 

More precisely, we will:
\begin{enumerate}
	\item
	Reformulate all three models as gradient flows, i.e, in each case we find a free-energy $\mathsf{H}$ defined on a Riemannian manifold with distance $\mathsf{w}$ such that the model is equivalent, in a sense to be made precise, to the steepest descent flow of $\mathsf{H}$ as measured by $\mathsf{w}$. See~\cite{ambrosio2008gradient,santambrogio2015optimal,Villani_Topics,Ambrosio_Gigli_UserGuide} for an overview on the topic.
	\item
	Obtain these processes as a time-step minimisation of $\mathsf{H}$.
	Namely, consider that at time $t$ the system is at state $p$, where $p$ is a probability density distribution describing all possible states of the system. 
	In the next time step, the system will be at state $p'$ such that the infimum of $q\mapsto \frac{h}{2}\mathsf{w}^2(p,q)+\mathsf{H}(q)$ is attained, for a small and positive $h$.
	After the seminal paper by Jordan, Kinderlehrer and Otto~\cite{Jordan_Kindeleherer_Otto}, this second approach became known as JKO scheme. 
\end{enumerate}
Under appropriate convexity assumptions these two approaches are equivalent in a very general setting \cite{ambrosio2008gradient}, but the construction can also be made rigorous in the absence of convexity on a case-by-case basis; 
see, e.g.,~\cite{Laguzet_2018,DiFrancesco_2013,Blanchet_2013,Laurencot_Matioc_2013,Kinderlehrer_Monsaigeon_Xu_2017}. 
The three models are linked by two limiting processes: the partial differential equation (PDE) is connected to the Markov Chain by an infinite population limit, and to the ODE by a vanishing viscosity limit.
One may therefore wonder whether (or hope that) all these gradient flow formulations are compatible in some sense.
An appropriate tool to investigate this turns out to be the $\Gamma$-convergence of gradient flows \cite{serfaty2011gamma,sandier2004gamma} and we will discuss how this connection can be  obtained.

A more thorough discussion of modelling implications to evolution will be postponed to a subsequent work.
However, we should state that both the free energy and metric will be derived from a set of common assumptions in evolutionary dynamics and are not artificial quantities.
The metric is the Wasserstein distance between two probability measures built upon a generalisation of the so-called \emph{Shahshahani metric}~\cite{Shahshahani_1979}, introduced in the framework of gradient systems in the Replicator Equation; cf.~\cite{Harper_2011,Sigmund_1987a,Sigmund_1987b,Hofbauer_1985}, see also the discussion on \emph{Kimura's Maximum Principle} and the (Svirezhev-) Shahshahani metric in~\cite{burger2000mathematical}.
We note also that, for finite populations, short-term and long-term information on the dynamics has been recently obtained from the free energy~\cite{ChalubSouza18}. 

\subsection{Outline}

The structure of this paper is as follows. In the remainder of this section, we first introduce and fix the basic notation that will be used throughout this work and we recall the theory previously established related to the connections between the models.
In the sequel, we review the main results by FACCC, MOS, and collaborators related to the current work.

In Section~\ref{sec:CTDMC}, we introduce a class of matrices that encompass celebrated classes of matrices used in population genetics, and introduce a class of continuous-in-time discrete Markov chains.
In order to reformulate this class of processes as gradient flows, we appeal to the so-called $Q$-processes and generalise recent results  by \cite{Maas_JFA} for irreducible Markov chains to the case with two or more absorbing states. 

In Section~\ref{sec:DTDMC}, we digress and discuss the relation between the discrete and continuous time Markov chains.
While the discrete ones are not amenable to the treatment developed in Section~\ref{sec:CTDMC}, we opt to include it here as it is the most traditional framework for finite-population models in population genetics.
Furthermore, it has been the starting point of previous work from part of the authors, as described above.
In particular, we show how the results from the previous section can be used to define entropies in population dynamics.
The  framework developed herein is sufficiently general to treat simultaneously the Boltzmann-Gibbs-Shannon (BGS) entropy and the Tsallis entropies (both in discrete and continuous time).

In Section~\ref{sec:CTCMC}, we describe the gradient flow formulation of the Kimura Equation, a degenerated PDE of drift-diffusion type. As already shown in \cite{Chalub_Souza:CMS_2009,ChalubSouza14a} the appropriate formulation satisfies two conservation laws and has measure-valued solutions. However, as in the discrete case, it turns out that only the interior dynamic can be formulated as a gradient flow and we focus on this part of the solutions.

In Section~\ref{sec:replicator} we study the Replicator Equation, which is well known \cite{Shahshahani_1979,akin1979geometry} to have a variational structure.
In fact, we shall rather study its fully equivalent PDE version.
The latter is more appropriate for our framework, but all our results immediately apply to the former.

In Section~\ref{sec:gfcmp}, we discuss the compatibility between the variational structure (both gradient flow and JKO formalism) for all the models discussed in Sections~\ref{sec:CTDMC},~\ref{sec:CTCMC}, and~\ref{sec:replicator}.
As mentioned earlier the $\Gamma$-convergence of gradient flows will be the important tool in this section --- see \cite{dal2012introduction} for a general introduction  on $\Gamma$-convergence and \cite{serfaty2011gamma,sandier2004gamma} for $\Gamma$-convergence of gradient flows --- see also \cite{gigli2013gromov}.

We finish with some comments in Section~\ref{sec:conclusion}.

A roadmap of the paper summarising the main relationships between the three models is given in Figure~\ref{fig:outline}.

\begin{figure}
\includegraphics[width=\textwidth]{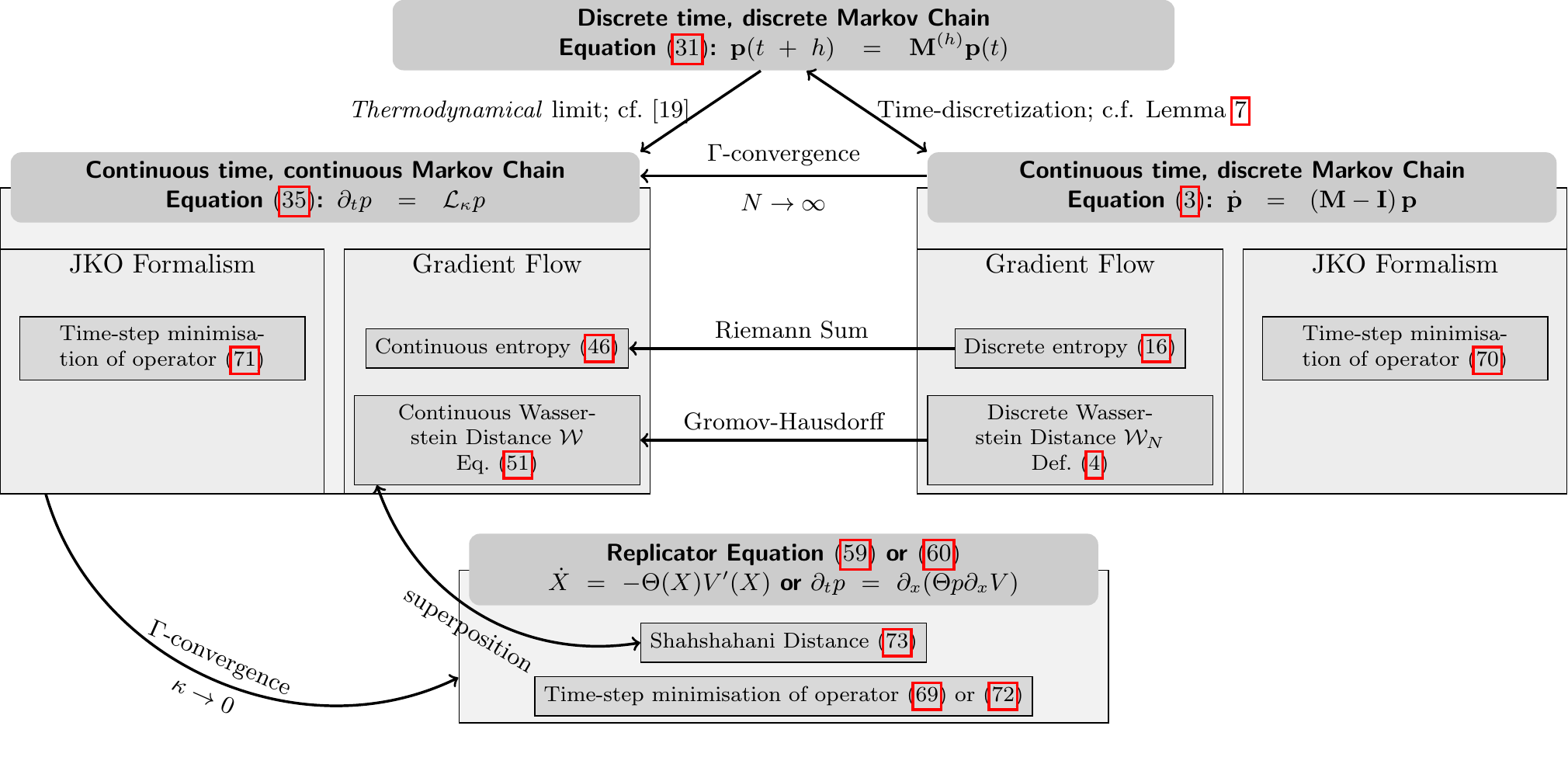}
\caption{Roadmap to the paper.} 
\label{fig:outline}
\end{figure}

\subsection{Notations}

Let us consider a $N+1$ Markov chain on an abstract finite space $\mathcal X_N$.
By abuse of notation, we write $\mathcal X_N\bydef\left\{0,\dots,N\right\}$.
We will use bold symbols for discrete vectors and matrices.
Vectors are considered as column vectors by default.
Given a matrix, and unless otherwise specified, the words \emph{stochastic} and \emph{substochastic} mean \emph{column-stochastic} and \emph{column-substochastic}, respectively.

By $\mathcal{P}(\mathcal X_N)$ we denote the space of probability measures in $\mathcal X_N$, canonically identified with vectors in the $N$-dimensional simplex $\simplex^N\bydef\{\bp\in\R_+^{N+1}|\sum_ip_i=1\}$. 
We shall use lowercase to denote probability vectors, and uppercase for their densities with respect to some reference measure: Typically $\bp=\left(p_i\right)_{i\in\mathcal X_N}$ will denote an arbitrary probability vector; if $\bpi=\left(\pi_i\right)_{i\in\mathcal X_N}$ is a particular reference probability measure, we write $\bP\bydef\frac{\rd\bp}{\rd\bpi}=\left(\frac{p_i}{\pi_i}\right)_{i\in\mathcal X_N}$ for the density of the measure $\bp$ with respect to the measure $\bpi$.

For Markov chains with more than one absorbing states, transient parts are particularly relevant.
Therefore, we denote them by tilded quantities $\widetilde\bp=(p_i)_{i=0,\dots, N-k}$, where $k$ is the number of linearly independent absorbing states.
In a slightly inconsistent notation, and whenever we focus on chains with only two absorbing states, these will sometimes be labelled instead as $i=0$ and $i=N$ for convenience, and tilded quantities will denote projections to the ``interior'', e.g. $\tilde\bp=(p_i)_{i=1,\dots, N-1}$.
In simple words, we discard the $i=0, N$ entries in the vector $\bp=(p_i)_{i=0,\dots, N}$.

The transition matrix of the Markov chain on the state space $\mathcal X_N$ is given by a $(N+1)\times(N+1)$ matrix $\bM=\left(M_{ij}\right)_{i,j\in\mathcal X_N}$.
The matrix $\widetilde{\bM}\bydef\left(M_{ij}\right)_{i,j=0,\dots,N-k}$ associated to the transient part of the process is called the \emph{core} matrix associated to $\bM$.
(Depending on the particular labelling, we also consider $\widetilde{\bM}=\left(M_{ij}\right)_{i,j=1,\dots,N-1}$.)

If $\mathbf{A},\mathbf{B}\in\R^{n\times m}$, then we will write $\mathbf{A}\circ\mathbf{B}\in\R^{n\times m}$ to denote their entry-wise (Hadamard) product --- i.e. $\left(\mathbf{A}\circ\mathbf{B}\right)_{ij}=A_{ij}B_{ij}$. Note that $\left(\mathbf{A}\circ\mathbf{B}\right)^\dagger=\mathbf{A}^\dagger\circ\mathbf{B}^\dagger$. For $\mathbf{a}\in\R^n$, we will write  $\diag{\mathbf{a}}\in\R^{n\times n}$ to denote the $n\times n$  matrix with its main diagonal given by $\mathbf{a}$ and zero elsewhere. Recall that the usual matrix product and the Hadamard product agree for  diagonal matrices --- in particular, $\diag{\mathbf{a}\circ\mathbf{b}}=\diag{\mathbf{a}}\diag{\mathbf{b}}$, and $\frac{\rd\bp}{\rd\bpi}=\bp\circ\bpi^{-1}$.

In the continuous setting, $x\in\Omega\bydef[0,1]$ denotes the state of the system, i.e., $x$ is the fraction of individuals of the focal type. 
In particular, the absorbing states are indicated by $x\in\partial\Omega\bydef\{0,1\}$. This explains the reordering in the indexes used in the discrete setting when there are exactly 2 absorbing states. We shall write $p\in \mathcal P(\Omega)$ for measures on the whole domain, while $\widetilde p$ will denote the restriction $p\measurerestr{\Omega}$ to the interior, i.e. $p=a\delta_0 +\widetilde p + b\delta_1$, for $a,b\ge0$. We denote by $q\ll p$ and $q\perp p$ if $q$ is absolutely continuous with respect to $p$, and if $q$ is singular with respect to $p$, respectively.

In general, tilded quantities, both in the discrete and continuous settings, are not probability measures and we will often need to rescale and/or renormalise those interior projections so that they become again probability measures: the resulting scaled variables will be denoted by $\bq=(q_i)_{i=1\dots N-1}$ or $q\in\mathcal P(\Omega)$, while we keep the letter $\bp,p$ for the initial variables.

Unless otherwise specified dotted quantities will denote time derivatives $\dot X(t)=\frac{\rd X}{\rd t}(t)$, while primes will stand for spatial derivatives $V'(x)=\frac{\rd V}{\rd x}(x)$. 

A time step will be denoted $\Delta t$ if it is a parameter of the model, or $h$ if it corresponds to a discretization of the continuous time variable $t$.
\subsection{State of the art}\label{subsec:state of art}

We finish this introduction with a more detailed explanation of the triad, and the links between these three different classes. Note that in part of the current work, however, we opt to describe more general processes than the one exemplified here: in Sections~\ref{sec:CTDMC} and~\ref{sec:DTDMC}, we consider stochastic processes with an arbitrary number $k>1$ of absorbing states, while in Sections~\ref{sec:CTCMC} and~\ref{sec:replicator}, we consider more general diffusion coefficients than in this subsection.

The Moran process was introduced in~\cite{Moran_1962} as a mathematical simplification of the older Wright-Fisher process~\cite{Wright_1931,Fisher_1922} and it is a particular example of a birth and death process.
In the setting we are interested in, we consider a population of fixed size $N$ divided into two groups of individuals, indistinguishable apart from the characteristic under study.
Let us call these two \emph{types} \A and \B.
Every $\Delta t$ seconds, one individual is chosen to die with uniform probability $1/N$, while a second one (possibly the same one) is chosen to reproduce according to a certain \emph{type selection probability vector} $(\tsp_0,\tsp_1,\dots,\tsp_N)$.
Here $\tsp_i$ indicates the probability to select for reproduction an individual of type \A in a population with $i$ individuals of type \A.
In this case, we say that the population is at state $i$. 
Because the Moran process has no mutations (i.e., $\tsp_0=1-\tsp_N=0$) there are two absorbing states, namely $i=0$ and $i=N$.

At each time step the transition probability from state $j$ to $i$ is thus given by a $(N+1)\times(N+1)$ matrix $\bM=\left(M_{ij}\right)_{i,j=0,\dots,N}$ with
\begin{equation}\label{def:Moran_TM}
M_{ij}=\left\{
\begin{array}{ll}
\frac{N-j}{N}\tsp_j,&\ i=j+1\ ,\\
\frac{j}{N}\tsp_j+\frac{(N-j)}{N}(1-\tsp_j),&\ i=j,\\
\frac{j}{N}(1-\tsp_j),&\ i=j-1,\\
0,&\text{otherwise.}
\end{array}
\right.
\end{equation}
The evolution equation (also known as \emph{master equation}) is given by 
\begin{equation}\label{DTDMC_ee}
\bp(t+\Delta t)=\bM\bp(t)\ ,
\end{equation}
where $\bp(t)=(p_0(t),\dots,p_N(t))$, and $p_i(t)$ indicates the probability to find the system at state $i$ at time $t$.

An alternative view of the Moran process will be presented in Section~\ref{sec:CTDMC}, where time will be considered a continuous variable and therefore, the evolution equation will rather be given by
\begin{equation}\label{CTDMC_ee}
\frac{\rd\bp}{\rd t}=\left(\bM-\bI\right)\bp,
\end{equation}
being $\bI$ the $(N+1)\times(N+1)$ identity matrix.
The obvious link between equations~\eqref{DTDMC_ee} and~\eqref{CTDMC_ee} will be fully exploited in Section~\ref{sec:DTDMC} and will be instrumental to build the link between finite and infinite population evolutionary models, and also in order to translate to the more usual setting \eqref{DTDMC_ee} all the results found for \eqref{CTDMC_ee}.
In fact, as we will see in a moment, infinite population in previous works of some of the authors is derived from the discrete-time evolution \eqref{DTDMC_ee}, while the gradient flow formulation, the main object of the present work, will require from the start a continuous time.
By contrast with previous works, the infinite population model will be derived here from the continuous time \eqref{CTDMC_ee}. 

In order to study the limit $N\to\infty$, $\Delta t\to 0$ it is necessary to assume a certain scaling relation between the population size and time-step, as well as the so-called \emph{weak selection principle}: At leading order, the type selection probability must be of the specific form
\begin{equation}\label{eq:wsp}
\tsp_i=\frac{i}{N}\left[1-\frac{2}{\kappa N}\left(1-\frac{i}{N}\right)V'\left(\frac{i}{N}\right)\right]
\end{equation}
for a given potential $V(x)$ (the gradient $-V'$ representing the fitness difference between the focal type, \A, and its opponent \B).
The parameter $\kappa>0$ is the inverse of the selection strength, see~\cite{ChalubSouza18} for a detailed analysis of each parameter in Equation~\eqref{eq:wsp}.
In fact, if $N\left(\Delta t\right)^{1/2}\to\kappa^{-1}$, then the vector $\bp$ obtained from Equation~\eqref{DTDMC_ee}, given a certain initial condition, converges in an appropriate sense to a measure $p$, where $p$ is the solution of a certain degenerate parabolic partial differential equation of drift-diffusion type known as \emph{the Kimura equation}
\begin{equation}\label{eq:KimuraPDE}
\partial_tp=\frac{\kappa}{2}\partial_{xx}^2\left(x(1-x)p\right)+\partial_x\left(x(1-x)V'(x)p\right)
\end{equation}
for $(x,t)\in[0,1]\times \R_+$.
This equation must be supplemented by two conservation laws
\[
\frac{\rd\ }{\rd t}\int_0^1p(t,x)\,\rd x=0
\qquad\mbox{and}\qquad
\frac{\rd}{\rd t}\int_0^1\varphi(x) p(t,x)\,\rd x=0,
\]
where $\varphi$ is the unique solution of $\varphi''-V'(x)\varphi'=0$ with boundary conditions $\varphi(0)=0$, $\varphi(1)=1$.
The initial condition will be the limit, in the same sense, of the initial conditions of the discrete process. See~\cite{ChalubSouza09b} for the derivation of Equation~\eqref{eq:KimuraPDE} from the Moran process, and~\cite{ChalubSouza14a} for its generalisation to an arbitrary number of types derived from the Wright-Fisher process, i.e. a process such that the transition matrix probability from state $j$ to $i$ is given by $M_{ij}=\binom{N}{i}\tsp_j^i(1-\tsp_j)^{N-i}$.
Finally, see~\cite{Chalub_Souza:CMS_2009,Danilkina_etal} for the detailed study of the Kimura Equation~\eqref{eq:KimuraPDE} from the partial differential equation point of view.

As a last remark, we note that when $\kappa\ll1$, the limit of the Kimura Equation is the transport equation $\partial_tp=\partial_x\left(x(1-x)V'(x)p\right)$, which is a PDE version of the well-known \emph{Replicator Dynamics}
\begin{equation}\label{eq:replicador}
\dot X=-X(1-X)V'(X)\ . 
\end{equation}

\section{Continuous in time, discrete Markov chains}\label{sec:CTDMC}

In this section, we consider continuous time Markov chains on $\mathcal X_N$, given by Equation~\eqref{CTDMC_ee} with $\bM$ a given stochastic matrix, and initial condition $\bp(0)=\bp^\ini$.
When $\bM$ is irreducible, $p(t)$ converges as $t\to\infty$ to a unique and strictly positive invariant probability measure~\cite{karlin1981second}.
Under the additional assumption that $\bM$ is reversible (see Subsection~\ref{subsec:micro_reversible}), one can construct as in \cite{Maas_JFA} a discrete Wasserstein distance $\mathcal{W}_N$ on the space of probabilities $\mathcal P(\mathcal X_N)$ such that~\eqref{CTDMC_ee} is the gradient flow of the relative Boltzmann-Shannon-Gibbs entropy, to be defined at Subsection~\ref{ssec:Grad_Flow_form}.

The same discrete distance was constructed independently in \cite{mielke2013geodesic,chow2012fokker,Maas_JFA}
and will be discussed later on in Subsection~\ref{sec:gradient flow}.
Therefore, we refrain from giving the details and precise definitions at this early stage, but it is worth pointing out that this theory of \emph{discrete optimal transport} crucially requires irreducibility \emph{and} reversibility of the Markov kernel $\bM$.
The non-irreducibility of a Markov process is typically due to the existence of absorbing states.

Our goal here is to obtain the aforementioned variational framework for a class of \emph{reducible} Markov chains. In particular we aim at providing a gradient flow structure for some models  of population dynamics that are not covered by a straightforward application of the results in~\cite{Maas_JFA}, yet include the Moran process.

With that goal in mind, we will first rephrase the aforementioned results to substochastic, irreducible and reversible chains, and subsequently apply the results to our chains of interest, introduced in Subsection~\ref{subsec:admissible_mat}.

Roughly speaking we shall focus on Markov processes for which a particular subdynamics can be identified and allows to reconstruct the whole dynamics, and such that the subdynamics can be recast into an irreducible, reversible Markov process.
As alluded to in the introduction, this subdynamics is the core dynamics and corresponds to the evolution of the transient states only.
We first make these structural assumptions on the Markov kernels $\bM$ more precise and technically explicit in terms of linear algebra (Subsections~\ref{subsec:admissible_mat} and~\ref{subsec:micro_reversible}).
We discuss next the relation between those technical assumptions and probabilistic conditioning of the original processes (Subsection~\ref{subsec:Q_process}), we discuss the resulting variational framework (Subsection~\ref{sec:gradient flow}), and finally we apply this framework to a time continuous version of one of the models of triad~(Subsection~\ref{ssec:moran_proc}).

\subsection{Admissible matrices}
\label{subsec:admissible_mat}

In the sequel we will consider  Markov processes with  $N+1$ states and  $1\leq k < N-1$  absorbing states and such that the chain conditioned on non-absorption is irreducible.
Moreover, we assume that all absorbing states are accessible.
More explicitly,

\begin{definition}[Admissible chains]
	\label{def:ac}
	Let $1\leq k<N-1$.
	We say that a $(N+1)\times (N+1)$ stochastic matrix  $\bM$ is $(N+1,k)$-admissible (admissible, in short), denoted $\bM\in \calA_{N+1,k}$, if there exists a permutation matrix $\bR$ such that
	\begin{equation}
	\bR^\dagger\bM\bR=\begin{pmatrix}
	\widetilde{\bM}&\mathbf{0}\\
	\bA&\bI
	\end{pmatrix}
	\label{eq:def_admissible}
	\end{equation}
	where $\bI$ is the $k\times k$ identity, $\widetilde{\bM}$ is an $(N+1-k)\times(N+1-k)$ irreducible matrix, and no row of $\bA$ is identically zero.
	We will refer to $\widetilde{\bM}$ as the core matrix associated to $\bM$. 
\end{definition}
One should think here of $k$ absorbing states that do not interact with each other and ${(N+1)-k}$ transient states that possibly self-interact or lose information by getting absorbed.
These features are encoded by the matrices $\bI$, $\widetilde\bM$, and $\bA$, respectively.

Observe that our structural normalisation~\eqref{eq:def_admissible} is not unique, since one can always further permute any of the first $(N+1-k)$ columns and rows (corresponding to relabelling the transient states) while keeping a similar structure.
By abuse of notations we will still talk of \emph{the} core matrix $\widetilde\bM$, which is thus defined only up to permutations.
As a consequence we always think of the permutation matrix $\bR$ as the identity matrix, and of the Markov kernel as already in the canonical form
\[
\bM=\begin{pmatrix}
	\widetilde{\bM}&\mathbf{0}\\
	\bA&\bI
	\end{pmatrix}
\]

In what follows, we shall denote the dynamics of the $(N+1-k)$ transient states as
\begin{equation}
\label{eq:evolution_p_tilde}
\widetilde \bp=(p_i)_{i=0,\dots, N-k},
\qquad
\frac{\rd\widetilde\bp}{\rd t}=(\widetilde\bM-\bI)\widetilde\bp.
\end{equation}
By definition~\ref{def:ac} of admissible chains the absorption matrix $\bA$ has non-zero rows and $\widetilde \bM$ is strictly \emph{sub}stochastic: $\sum_i\widetilde p_i(t)$ is therefore non-increasing in time, and the transient states leak information to the absorbed states.

\begin{remark}[Kimura matrices]
	\label{rmk:kimura}
	The class of $(N+1,k)$-admissible matrices is an extension of a number of classes previously investigated.
	In particular, $\calA_{N+1,2}$ denotes the so-called \emph{Kimura matrices}~\cite{ChalubSouza_2017}, which is relevant to evolutionary dynamics.
	For this class, which includes the Moran and Wright-Fisher processes discussed in the introduction, a different presentation and notation were used: the matrix is naturally ordered with $i$ indicating the presence of a focal type, and  the two absorbing states are labelled $i=0,N$ with
	\begin{equation}
	\label{eq:pres_kimura}
    \bM=\begin{pmatrix}1&\boldsymbol\alpha^\dagger&0\\
	\mathbf{0}&\widetilde{\bM}&\mathbf{0}\\
	0&\boldsymbol\beta^\dagger&1
	\end{pmatrix}
	\end{equation}
	Here $\boldsymbol\alpha$, $\boldsymbol\beta$ and $\bzero$ are $(N-1)\times 1$ vectors, with $\boldsymbol\alpha$ and $\boldsymbol\beta$ being  nonnegative and nonzero.
\end{remark}

	By stochasticity of $\bM$ we always have as an eigenvector $\bone\bydef(1,\dots,1)$, and we can choose a basis  $\lbrace\bF_1,\ldots,\bF_k\rbrace$ of the left-eigenspace of $\bM$ (which is indeed $k$-dimensional from Definition~\ref{def:ac}) comprised of non-negative vectors such that $\sum_{j=1}^k\bF_j=\bone$.
	One readily checks that solutions to~\eqref{CTDMC_ee} automatically satisfy the conservation laws
	\begin{equation}
	\label{eq:discrete_conservation_laws}
	\langle \bp(t),\bF_l\rangle=	\langle \bp(0),\bF_l\rangle,
	\qquad \forall \, t>0\text{ and }l=1,\ldots,k.
	\end{equation}
	We refer to ~\cite{ChalubSouza09b,ChalubSouza14a} for a discussion on how these conservation laws matter for the dynamics when considering the diffusive (continuous) approximation of Markov chains with absorbing states.
\begin{remark}
\label{rmk:discrete_fixation}
	\label{rmk:cons-law}
	In the particular case of Kimura matrices $\bM\in\calA_{N+1,2}$, the distinguished left eigenvectors are taken to be $\bone$ and $\bF$, with the first and the last entries of $\bF$ being zero and one according to the representation of Equation~\eqref{eq:pres_kimura}.
	The vector $\bF$ is the \emph{fixation probability} of the focal type  --- see \cite{ChalubSouza_2017} and references therein and Remark~\ref{rmk:fixation_continuous} for the continuous version.
\end{remark}

\begin{remark}[Multi-type admissible processes]
	\label{rmk:other_adm}
	As we shall see below, the Moran process with $k$-types is not admissible because the  matrix $\widetilde{M}$ is not irreducible.
	This is a consequence of non-interaction requirements  that must be satisfied by the inner dynamics (when a type is extinct,  it cannot reappear).
	On the other hand, it is possible to construct a birth-death process with $k$ absorbing states, such that $\widetilde{M}$ is irreducible.
	As an example, we mention a process  akin to the Moran process with \emph{frequency dependent mutations}.
	The easiest example of such a process can be briefly defined as follows: consider  this process with a three-type population and set to zero all transition probabilities from the homogeneous states to any other state --- this modification will still allow mutations from states with two types into states with three types.
	Such a process will have an irreducible $\widetilde{M}$ and three absorbing states. Naturally, this can be extended to $k$-type process, with $k$ absorbing states. 
\end{remark}

It will be also convenient to make explicit the structure of the semigroup associated to the forward Equation~\eqref{CTDMC_ee}:

\begin{lemma}
	\label{lem:expl_sg}
	Let $\bM$ be admissible.
	Then the fundamental solution to~\eqref{CTDMC_ee} is given by
    \[
	\exp\left(t(\bM-\bI)\right)=\begin{pmatrix}
	\exp\left(t(\widetilde{\bM}-\bI)\right)&\bzero\\
	\bA(\widetilde{\bM}-\bI)^{-1}\left(\exp\left(t(\widetilde{\bM}-\bI)\right)-\bI\right)&\bI
	\end{pmatrix},
	\]
	where we write indistinctly $\bI$ for the identity matrix of dimension either $N+1$ or $N+1-k$.
\end{lemma}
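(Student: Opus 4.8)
The plan is to exploit the block lower-triangular structure of the generator and solve the matrix ODE $\dot\Phi=(\bM-\bI)\Phi$, $\Phi(0)=\bI$, block by block. In the canonical form~\eqref{eq:def_admissible} the lower-right corner of $\bM$ is the $k\times k$ identity, so it cancels against the corresponding block of $\bI$ and
\[
\bM-\bI=\begin{pmatrix}\widetilde{\bM}-\bI&\mathbf{0}\\ \bA&\mathbf{0}\end{pmatrix}=\begin{pmatrix}\widetilde{\bL}&\mathbf{0}\\ \bA&\mathbf{0}\end{pmatrix},
\qquad \widetilde{\bL}\bydef\widetilde{\bM}-\bI.
\]
First I would partition the unknown fundamental solution $\Phi(t)=\exp\!\big(t(\bM-\bI)\big)$ into blocks $P,Q,R,S$ conformally with this partition. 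Multiplying out $\dot\Phi=(\bM-\bI)\Phi$ then yields the decoupled linear system $\dot P=\widetilde{\bL}P$, $\dot Q=\widetilde{\bL}Q$, $\dot R=\bA P$, $\dot S=\bA Q$, subject to $P(0)=S(0)=\bI$ and $Q(0)=R(0)=\mathbf{0}$.

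Three of the four blocks are then immediate. The first equation with its initial datum gives $P(t)=\exp(t\widetilde{\bL})$. Since $Q\equiv\mathbf{0}$ solves the homogeneous equation $\dot Q=\widetilde{\bL}Q$ with $Q(0)=\mathbf{0}$, uniqueness for linear ODEs forces $Q(t)=\mathbf{0}$, whence $\dot S=\bA Q=\mathbf{0}$ and $S(t)=\bI$. This already reproduces the top row and the bottom-right corner of the claimed formula. The only substantial block is $R$, governed by $\dot R=\bA\exp(t\widetilde{\bL})$ with $R(0)=\mathbf{0}$; integrating (Duhamel / variation of parameters) gives $R(t)=\bA\int_0^t\exp(s\widetilde{\bL})\,\rd s$.

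It remains to evaluate this integral in closed form, and here lies the one genuinely non-formal step: I claim that $\widetilde{\bL}=\widetilde{\bM}-\bI$ is invertible, so that $\int_0^t\exp(s\widetilde{\bL})\,\rd s=\widetilde{\bL}^{-1}\big(\exp(t\widetilde{\bL})-\bI\big)$ and therefore $R(t)=\bA(\widetilde{\bM}-\bI)^{-1}\big(\exp(t(\widetilde{\bM}-\bI))-\bI\big)$, exactly as asserted. Invertibility follows from the spectral bound $\rho(\widetilde{\bM})<1$. Indeed $\widetilde{\bM}$ is nonnegative and irreducible by Definition~\ref{def:ac}, and it is strictly substochastic: column-stochasticity of $\bM$ gives that each transient column of $\widetilde{\bM}$ sums to one minus the corresponding column sum of $\bA$, and since no row of $\bA$ vanishes we have $\bA\neq\mathbf{0}$, so at least one column of $\widetilde{\bM}$ sums to strictly less than one. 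By the Perron--Frobenius bound for irreducible nonnegative matrices, $\rho(\widetilde{\bM})$ lies between the smallest and largest column sums, with the maximum attained only if all column sums coincide; in either case $\rho(\widetilde{\bM})<1$, so $1$ is not an eigenvalue of $\widetilde{\bM}$ and $\widetilde{\bM}-\bI$ is invertible.

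The main obstacle is precisely this invertibility claim; the rest is the routine integration of a triangular linear system. I would take care to present the Perron--Frobenius step cleanly, since it is exactly the admissibility hypothesis $\bA\neq\mathbf{0}$ (no absorption-free transient class) that forces $\rho(\widetilde{\bM})<1$ and makes $\widetilde{\bL}^{-1}$, and hence the whole formula, well defined; this is what distinguishes the strictly substochastic core setting from the merely stochastic one.
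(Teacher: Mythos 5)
Your proof is correct and follows the only natural route — the block lower-triangular structure of $\bM-\bI$, variation of parameters for the off-diagonal block, and invertibility of $\widetilde{\bM}-\bI$ via $\rho(\widetilde{\bM})<1$ — which is exactly what the paper's one-line proof (``this follows from~\eqref{eq:def_admissible}'') leaves implicit, the spectral radius bound being stated by the authors immediately after the lemma. Your Perron--Frobenius justification of $\rho(\widetilde{\bM})<1$ from irreducibility and the nonvanishing of $\bA$ is a correct filling-in of the one detail the paper does not spell out.
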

\begin{proof}
 This follows from~\eqref{eq:def_admissible}.
 \qed
\end{proof}

As already discussed $\widetilde{\bM}$ is strictly substochastic -- the rows of $\bA$ being non zero in~\eqref{eq:def_admissible} -- and therefore its spectral radius
\[
\mu\bydef\rho(\widetilde{\bM})\in(0,1).
\]
The Perron-Frobenius theorem implies next that $\mu$ is the dominant eigenvalue of $\widetilde\bM$, and both its associated left and right eigenvectors can be chosen positive, c.f.~\cite{horn1990matrix}.
Following up on the rough idea that the transient dynamics determines the whole evolution, we define next
\begin{definition}[Characteristic triple]
\label{def:characteristic_triple}
	 Let $\bM$ be admissible with core matrix $\widetilde{\bM}$, and $\mu=\rho(\widetilde{\bM})$.
	 In addition, let  $\widetilde{\bw}$ and $\widetilde{\bz}$ be the unique positive left and right eigenvectors associated to $\mu$, normalised as $\langle\widetilde{\bw},\widetilde{\bone}\rangle=\langle\widetilde{\bw},\widetilde{\bz}\rangle=1$.
	 We will term $(\mu,\widetilde{\bw},\widetilde{\bz})$ the \emph{characteristic triple}.
\end{definition}

In general, it is difficult to derive explicit expressions for either $\mu$  or $\widetilde{\bw},\widetilde{\bz}$.
Nevertheless, formulas can be obtained in certain particular cases or asymptotically in the limit of large populations $N\to\infty$, see Subsections~\ref{ssec:moran_proc} and~\ref{ssec:kimura_eq}.

\subsection{Micro-reversible processes}
\label{subsec:micro_reversible}
In order to exploit the results from \cite{Maas_JFA}, we need to restrict ourselves to processes that satisfy some reversibility, at least to some extent.
We introduce below a generalised notion of reversibility, adapted for the case of substochastic matrices, that we shall call \emph{micro-reversibility}.
Intuitively, micro-reversible processes should be time-reversible in the meta-stable regime.
	More precisely, micro-reversibility means that, when considering the difference between the mass flow from $i$ to $j$ and the mass flow from $j$ to $i$, the total loss of mass at site $i$ is independent of both $i$ and $j$, provided neither $i$ or $j$ are absorbing sates:
\begin{definition}[Micro-reversible matrices]
	\label{def:mr_mat}
	Let $\bM$ be an admissible matrix with core $\widetilde \bM$ in the sense of Definition~\ref{def:ac} and with characteristic triple $(\mu,\widetilde{\bw},\widetilde{\bz})$ as in Definition~\ref{def:characteristic_triple}.
	We say that $\bM$ is \emph{micro-reversible} if 
	\begin{equation}
	\label{eq:generalized_detailed_balance}
	\widetilde{w}_i\widetilde{M}_{ij}\widetilde{z}_j=\widetilde{w}_j\widetilde{M}_{ji}\widetilde{z}_i
	,\quad \forall\, i,j=0,\ldots,N-k.
	\end{equation}
	In matrix notation this amounts to requiring symmetry of $\diag{\widetilde{\bw}}\widetilde{\bM}\diag{\widetilde{\bz}}$. 
\end{definition}
\noindent
In a certain sense, this means that for these \emph{slow} processes, a strong equilibrium relation is valid at each step, which resembles the \emph{quasi-stationary} or \emph{ergodic} processes in physics.

Note that this definition generalises the usual notion of reversibility for irreducible stochastic processes.
Indeed for irreducible and column-stochastic Markov chains we have by definition $\widetilde \bM=\bM$, the left leading eigenvector is $\widetilde\bw=\bw=\bone$, and our condition~\eqref{eq:generalized_detailed_balance} reduces to the usual reversibility (\emph{detailed balance}) $M_{ij}z_j=M_{ji}z_i$, $\forall i,j$.
For irreducible row-stochastic matrices the right-eigenvector $\bz=\bone$, and reversibility reads instead $w_iM_{ij}=w_jM_{ji}$.
In these cases, we say that $\bM$ is column- or row-reversible, respectively.
When no confusion arises we simply say that $\mathbf{M}$ is reversible, cf.~\cite{kelly2011reversibility,mielke2013geodesic}.

As we will see in a moment, micro-reversibility is satisfied at least for a particular class of processes, the birth-death processes with two absorbing states.
This includes the Moran process (but not the Wright-Fisher one).
For irreducible chains, birth-death processes with two absorbing states are among the simplest examples of reversibility, cf. \cite{kelly2011reversibility}.
For micro-reversibility, it is not difficult to prove that
\begin{lemma}\label{lem:tridiagonal_microreversible}
 Let $\bM\in\mathcal{A}_{N+1,2}$ such that $\widetilde\bM$ is tridiagonal. Then, $\bM$ is micro-reversible.
\end{lemma}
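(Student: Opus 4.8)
The plan is to rewrite micro-reversibility as the symmetry of the matrix $S:=\diag{\widetilde\bw}\widetilde\bM\diag{\widetilde\bz}$, and then to establish this symmetry via the classical symmetrization of tridiagonal (Jacobi) matrices. Since $\widetilde\bM$ is tridiagonal, $S$ is tridiagonal as well, and condition~\eqref{eq:generalized_detailed_balance} is trivially satisfied for $j=i$ and for $|i-j|\geq 2$. Hence it suffices to verify the nearest-neighbour balance
\[
\widetilde w_i\,\widetilde M_{i,i+1}\,\widetilde z_{i+1}=\widetilde w_{i+1}\,\widetilde M_{i+1,i}\,\widetilde z_i
\]
for all admissible $i$.

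First I would record the elementary but essential observation that irreducibility of the core forces every sub- and super-diagonal entry of $\widetilde\bM$ to be strictly positive. Indeed, in a tridiagonal matrix the only one-step transitions are between neighbouring sites, so a vanishing entry $\widetilde M_{i+1,i}=0$ (or $\widetilde M_{i,i+1}=0$) would decouple $\{0,\dots,i\}$ from $\{i+1,\dots,N-k\}$, contradicting irreducibility. This positivity is precisely what makes the symmetrizer in the next step well-defined.

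Next I would construct a positive diagonal matrix $\bD=\diag{(d_i)}$ that symmetrizes the core: fix $d_0=1$ and set recursively $d_{i+1}=d_i\sqrt{\widetilde M_{i+1,i}/\widetilde M_{i,i+1}}$, which is positive and well-defined by the previous step. A direct check shows that $\widehat\bM:=\bD^{-1}\widetilde\bM\bD$ has the same diagonal as $\widetilde\bM$ and equal off-diagonals $\widehat M_{i,i+1}=\widehat M_{i+1,i}=\sqrt{\widetilde M_{i,i+1}\,\widetilde M_{i+1,i}}$, so $\widehat\bM$ is a symmetric, irreducible, nonnegative tridiagonal matrix. Finally I would transfer this symmetry to the characteristic triple. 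Being similar to $\widetilde\bM$, the matrix $\widehat\bM$ shares the dominant eigenvalue $\mu=\rho(\widetilde\bM)$ and, by Perron--Frobenius, admits a unique positive eigenvector $\bv$ that serves simultaneously as left and right eigenvector. Tracing the similarity gives $\widetilde\bM(\bD\bv)=\mu\,\bD\bv$ and, using $\widetilde\bM^\dagger=\bD^{-1}\widehat\bM\bD$, also $\widetilde\bM^\dagger(\bD^{-1}\bv)=\mu\,\bD^{-1}\bv$; by uniqueness of the positive eigenvectors this identifies $\widetilde\bz\propto\bD\bv$ and $\widetilde\bw\propto\bD^{-1}\bv$. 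Substituting, $\widetilde w_i\widetilde M_{ij}\widetilde z_j$ is proportional to $v_i\,\widehat M_{ij}\,v_j$, which is symmetric in $(i,j)$ because $\widehat\bM$ is. Since rescaling $\widetilde\bw$ and $\widetilde\bz$ by the normalisation constants of Definition~\ref{def:characteristic_triple} only multiplies $S$ by a positive scalar, symmetry is preserved and $\bM$ is micro-reversible.

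I do not expect a serious obstacle here: the construction is the standard Jacobi-matrix symmetrization. The only points requiring genuine care are (i) deducing strict positivity of the off-diagonal entries from irreducibility, which underpins the whole argument, and (ii) correctly matching the \emph{left} eigenvector to $\bD^{-1}\bv$ and the \emph{right} one to $\bD\bv$ across the similarity, together with the remark that the characteristic-triple normalisation is irrelevant to symmetry.
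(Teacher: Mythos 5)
Your proposal is correct and follows essentially the same route as the paper: diagonal symmetrization of the irreducible tridiagonal core (using strict positivity of the off-diagonal entries), identification of $\widetilde\bw$ and $\widetilde\bz$ with $\bD^{-1}\bv$ and $\bD\bv$ through the similarity, and the resulting symmetry of $\diag{\widetilde\bw}\widetilde\bM\diag{\widetilde\bz}$. You merely make explicit the recursive construction of the symmetrizer and the eigenvector matching that the paper cites as standard linear algebra.
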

Note that this is not true in general for admissible matrices with $k>2$ absorbing states, even if $\widetilde{\bM}$ is tridiagonal.
\begin{proof}
Let $(\mu,\widetilde{\bw},\widetilde{\bz})$ be the characteristic triple of $\bM$.
By assumption and up to permutation if needed, the core $\widetilde \bM$ is irreducible and tridiagonal, hence
from standard linear algebra \cite{horn1990matrix} there exists a positive vector $\mathbf{d}$ such that $\bT=\diag{\mathbf{d}}\widetilde{\bM}\diag{\mathbf{d}^{-1}}$ is symmetric, i.e, $\widetilde{\mathbf{w}}\circ\mathbf{d}^{-1}=\mathbf{d}\circ\widetilde{\mathbf{z}}$.
From the identity 
\[
\diag{\widetilde{\mathbf{w}}}\widetilde{\bM}\diag{\widetilde{\mathbf{z}}}
 =\diag{\widetilde{\mathbf{w}}\circ\mathbf{d}^{-1}}\bT\diag{\mathbf{d}\circ\widetilde{\mathbf{z}}}\ ,
 \]
and the symmetry of $\bT$,
 the micro-reversibility condition follows immediately. \qed
\end{proof}

\subsection{The associated $Q$-process}
\label{subsec:Q_process}

Given an absorbing Markov process $X_t$, the associated $Q$-process consists in conditioning it to non-absorption.
Roughly speaking, the corresponding law at a fixed time $s\geq0$ is given by $\lim_{T\to \infty}\mathbb{P}\left[X_s=x\,|\,\tau>T \right]$, where $\tau$ is the absorbing stopping time --- cf. \cite{lamperti1968conditioned,athreya1972branching,cattiaux2009quasi,cattiaux2010competitive,cmm2013}.
In our current setting, the importance of these processes is twofold:
(i) there is a one-to-one correspondence between the transition matrix of the original and conditioned processes;
(ii) the conditioned chain fits the framework of~\cite{Maas_JFA}.
We refrain from going into the technical details and give instead a more direct definition in terms of linear algebra.
The interested reader can check that the evolution of the above limiting process is indeed given by the transition matrix $\bK$ below:

\begin{lemma}[$Q$-process kernels]
 \label{lem:Qproc_is_reversible}
	Let $\bM\in\calA_{N+1,k}$ with characteristic triple $(\mu,\widetilde{\bw},\widetilde{\bz})$.
	The associated $Q$-process is defined by its $(N+1-k)\times(N+1-k)$ transition matrix
	\begin{equation}
	\label{eq:def_K}
	\bK:=\frac 1\mu \diag{\widetilde{\bw}}^{-1}\widetilde{\bM}^\dagger\diag{\widetilde{\bw}}.
	\end{equation}
The kernel $\bK$ is irreducible and row-stochastic, and its unique positive stationary probability distribution is given by 
\begin{equation}
\label{eq:def_pi_discrete}
\bpi:=\widetilde{\bw}\circ\widetilde{\bz} .
\end{equation}
Furthermore, if $\bM$ is micro-reversible then $\bK$ is row-reversible.
\end{lemma}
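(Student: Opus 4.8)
The plan is to verify each assertion by direct linear-algebra manipulation, the only inputs being the two eigenrelations $\widetilde{\bM}\widetilde{\bz}=\mu\widetilde{\bz}$ and $\widetilde{\bM}^\dagger\widetilde{\bw}=\mu\widetilde{\bw}$ together with the normalisation $\langle\widetilde{\bw},\widetilde{\bz}\rangle=1$ from Definition~\ref{def:characteristic_triple}, while keeping careful track of the column- versus row-stochastic conventions.

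First I would establish row-stochasticity. Since $\diag{\widetilde{\bw}}\bone=\widetilde{\bw}$, applying $\bK$ to $\bone$ collapses the rightmost diagonal factor and leaves $\bK\bone=\frac1\mu\diag{\widetilde{\bw}}^{-1}\widetilde{\bM}^\dagger\widetilde{\bw}$; the left-eigenvector relation $\widetilde{\bM}^\dagger\widetilde{\bw}=\mu\widetilde{\bw}$ then yields $\bK\bone=\bone$. Nonnegativity of the entries is immediate from $K_{ij}=\frac1\mu\frac{\widetilde w_j}{\widetilde w_i}\widetilde M_{ji}\ge 0$, so $\bK$ is a genuine row-stochastic kernel. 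For irreducibility I would observe that $\bK$ is obtained from $\widetilde{\bM}^\dagger$ by a positive-diagonal similarity transform together with the positive scalar factor $1/\mu$; both operations preserve the zero/nonzero pattern, and transposition merely reverses all edges of the associated directed graph while keeping it strongly connected, so irreducibility of $\bK$ follows from that of $\widetilde{\bM}$.

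Next I would check that $\bpi=\widetilde{\bw}\circ\widetilde{\bz}$ is stationary, i.e. that $\bK^\dagger\bpi=\bpi$. Transposing gives $\bK^\dagger=\frac1\mu\diag{\widetilde{\bw}}\widetilde{\bM}\diag{\widetilde{\bw}}^{-1}$, and the key simplification is $\diag{\widetilde{\bw}}^{-1}\bpi=\diag{\widetilde{\bw}}^{-1}\diag{\widetilde{\bw}}\widetilde{\bz}=\widetilde{\bz}$; the right-eigenvector relation $\widetilde{\bM}\widetilde{\bz}=\mu\widetilde{\bz}$ then gives $\bK^\dagger\bpi=\diag{\widetilde{\bw}}\widetilde{\bz}=\bpi$. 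Positivity of $\bpi$ is clear, and the normalisation $\langle\bpi,\bone\rangle=\langle\widetilde{\bw},\widetilde{\bz}\rangle=1$ shows it is a probability vector; uniqueness of the positive stationary distribution then follows from the irreducibility established above via the Perron--Frobenius theorem applied to the stochastic kernel $\bK$.

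Finally, for the reversibility statement I would expand the detailed-balance identity $\pi_i K_{ij}=\pi_j K_{ji}$ entrywise. Writing $\pi_i=\widetilde w_i\widetilde z_i$ and $K_{ij}=\frac1\mu\frac{\widetilde w_j}{\widetilde w_i}\widetilde M_{ji}$, the two sides reduce after cancellation to $\frac1\mu\widetilde w_j\widetilde M_{ji}\widetilde z_i$ and $\frac1\mu\widetilde w_i\widetilde M_{ij}\widetilde z_j$ respectively, so detailed balance holds if and only if $\widetilde w_i\widetilde M_{ij}\widetilde z_j=\widetilde w_j\widetilde M_{ji}\widetilde z_i$, which is exactly the micro-reversibility condition~\eqref{eq:generalized_detailed_balance}. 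I do not anticipate a genuine obstacle: the entire lemma is a chain of one-line computations, and the only points demanding care are the bookkeeping of transposes and of the row- versus column-stochastic convention, together with invoking Perron--Frobenius (rather than a bare computation) for the uniqueness claim.
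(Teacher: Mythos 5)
Your proposal is correct and follows essentially the same route as the paper's proof: direct verification of $\bK\bone=\bone$ and $\bpi^\dagger\bK=\bpi^\dagger$ (your $\bK^\dagger\bpi=\bpi$ is the transpose of the same computation) from the two eigenrelations, irreducibility inherited from $\widetilde{\bM}$ through the positive diagonal conjugation, uniqueness and positivity of $\bpi$ via Perron--Frobenius, and an entrywise reduction of detailed balance to the micro-reversibility condition~\eqref{eq:generalized_detailed_balance}. The only difference is that you spell out a few steps (the zero-pattern argument for irreducibility and the explicit cancellation in the reversibility check) that the paper leaves implicit.
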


\begin{proof}
	Since $\widetilde\bw$ is strictly positive and $\widetilde \bM$ is irreducible, clearly $\bK$ is irreducible.
	First, we check that $\bpi$ defined by \eqref{eq:def_pi_discrete} is indeed a left-eigenvector:
	\begin{align*}
	\bpi^\dagger\bK
	&=(\widetilde{\bz}\circ \widetilde{\bw})^\dagger\, \mu^{-1}\diag{\widetilde{\bw}}^{-1}\widetilde{\bM}^\dagger\diag{\widetilde{\bw}}
	\\
	&=\mu^{-1}\widetilde{\bz}^\dagger\widetilde{\bM}^\dagger\diag{\widetilde{\bw}}
	=\widetilde{\bz}^\dagger\diag{\widetilde{\bw}}
	=\bpi^\dagger.
	\end{align*}
	By standard Perron-Frobenius theory $\bpi$ is thus the unique dominant eigenvector, and positive.
From definition~\ref{def:characteristic_triple} we see that $\bpi$ is correctly normalised to be a probability vector.
The fact that $\bK$ is row-stochastic follows from $\bK\bone=\bone$.
The row-reversibility of $\bK$, i.e. $\pi_iK_{ij}=\pi_jK_{ji}$ for all $i,j$ follows from the definition of $\bpi$ and $\bK$, and from the micro-reversibility of $\bM$, Equation~\eqref{eq:generalized_detailed_balance}. \qed
\end{proof}
As already discussed, the transient dynamics leaks mass to the absorbed states.
More explicitly, from~\eqref{eq:evolution_p_tilde} and because $\widetilde\bw$ is a left-eigenvector, we have
\begin{equation*}
\frac{\rd}{\rd t}\langle\widetilde\bw,\widetilde\bp(t)\rangle
=\langle \widetilde \bw,(\widetilde\bM-\widetilde\bI)\widetilde\bp(t)\rangle
=(\mu-1)\langle\widetilde\bw,\widetilde\bp(t)\rangle
\end{equation*}
which implies
\begin{equation}
\label{eq:expo_decay_p}
\langle\widetilde\bw,\widetilde\bp(t)\rangle = e^{(\mu-1)t}\langle\widetilde\bw,\widetilde\bp(0)\rangle\ .
\end{equation}
In particular, since $\widetilde \bM$ is substochastic with spectral radius $\mu\in(0,1)$ the $\widetilde\bw$-weighted mass $\langle\widetilde\bw,\widetilde\bp(t)\rangle$ of the transient states decays at an exponential rate $\mu-1<0$.
Moreover, since $\widetilde w_i>0$ we see that either the initial data is completely absorbed $\widetilde{\bp}(0)=\bzero$ and the dynamics trivially remains absorbed $\widetilde{\bp}(t)\equiv \bzero$, or $\widetilde{\bp}(0)\neq \bzero$ and therefore $\widetilde{\bp}(t)\neq \bzero$ for all $t>0$.
We therefore discard the trivial case $\widetilde{\bp}(0)=\bzero$, and thus we can assume that $\langle\widetilde\bw,\widetilde\bp(t)\rangle\neq 0$ for all times.
We then define the two new $(N+1-k)$-dimensional variables 
\begin{equation}\label{eq:rescaling}
 \bq(t)\bydef\frac{\widetilde \bw\circ\widetilde\bp(\sfrac{t}{\mu})}{\langle\widetilde \bw,\widetilde\bp(\sfrac{t}{\mu})\rangle}
 \quad\text{and}\quad
\bQ(t)\bydef\frac{\rd \bq}{\rd\bpi}(t)
=\left(\dfrac{q_i(t)}{\pi_i}\right)_i.
\end{equation}

By definition of $\bq$ it is clear that $\sum_i q_i=1$.
We think here of $\bpi=\widetilde\bw\circ\widetilde\bz$ as the new reference stationary measure, of $\bq$ as a new probability evolving on a new time-scale $\sfrac t\mu$, and $\bQ=\frac{d\bq}{d\bpi}$ as the density of $\bq$ with respect to $\bpi$.
J. Maas' theory \cite{Maas_JFA} of discrete Wasserstein distances rather take the \emph{density} $\bQ$ as a primary variable, while the $\bq$ probability variable will be more convenient to address the diffusive limit of large populations later on.
Direct substitution and elementary matrix algebra yields:
\begin{lemma}\label{lem:three_equivalent}
	The following three dynamics are equivalent:
	\begin{enumerate}
		\item $\dfrac{\rd \widetilde{\bp}}{\rd t} = \left(\widetilde{\bM}-\bI\right)\widetilde{\bp}$;
		\item $\dfrac{\rd\bq}{\rd t} = \left(\bK^\dagger-\bI\right)\bq$;
		\item $\dfrac{\rd\bQ }{\rd t}= \left(\bK-\bI\right)\bQ$\ .
	\end{enumerate}
In addition, $\langle \bQ(t),\bpi\rangle =\langle \bq(t),\bone\rangle =1$.
\end{lemma}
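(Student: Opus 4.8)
The plan is to verify the chain $(1)\Rightarrow(2)\Rightarrow(3)$ by direct differentiation, each passage being the combination of the time rescaling $t\mapsto t/\mu$, the normalisation built into the definition of $\bq$, and a diagonal conjugation that turns $\widetilde{\bM}$ into $\bK$. Since every one of these operations is invertible, the three equivalences follow at once, and the normalisation identities come out as a by-product.

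First I would establish $(1)\Rightarrow(2)$. Writing $m(t)\bydef\langle\widetilde{\bw},\widetilde{\bp}(t)\rangle$, the definition reads $\bq(t)=\diag{\widetilde{\bw}}\widetilde{\bp}(t/\mu)/m(t/\mu)$. Since $\widetilde{\bw}$ is the left eigenvector of $\widetilde{\bM}$ for the eigenvalue $\mu$, one has $\dot m=(\mu-1)m$, which is exactly~\eqref{eq:expo_decay_p}; so differentiating numerator and denominator and applying the quotient rule, the chain rule for $t\mapsto t/\mu$ producing the prefactors $1/\mu$, gives
\[
\frac{\rd\bq}{\rd t}=\frac{1}{\mu}\,\frac{\diag{\widetilde{\bw}}(\widetilde{\bM}-\bI)\widetilde{\bp}(t/\mu)}{m(t/\mu)}-\frac{\mu-1}{\mu}\,\bq(t).
\]
The key algebraic fact is the diagonal conjugation $\diag{\widetilde{\bw}}\widetilde{\bM}\diag{\widetilde{\bw}}^{-1}=\mu\,\bK^\dagger$, immediate from the definition~\eqref{eq:def_K} of $\bK$. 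Substituting $\diag{\widetilde{\bw}}\widetilde{\bM}\widetilde{\bp}(t/\mu)=\mu\,m(t/\mu)\,\bK^\dagger\bq$ and $\diag{\widetilde{\bw}}\widetilde{\bp}(t/\mu)=m(t/\mu)\,\bq$ turns the first term into $(\bK^\dagger-\tfrac{1}{\mu}\bI)\bq$, and adding $-\tfrac{\mu-1}{\mu}\bq$ collapses the diagonal part to $-\bI$, yielding exactly $\dot\bq=(\bK^\dagger-\bI)\bq$.

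For $(2)\Leftrightarrow(3)$ I would pass to the density $\bQ=\diag{\bpi}^{-1}\bq$, i.e. $\bq=\diag{\bpi}\bQ$, and conjugate the evolution to get $\dot\bQ=\diag{\bpi}^{-1}(\bK^\dagger-\bI)\diag{\bpi}\,\bQ$. The point is that $\diag{\bpi}^{-1}\bK^\dagger\diag{\bpi}=\bK$, which is precisely the $\bpi$-reversibility (row-reversibility) of $\bK$ recorded in Lemma~\ref{lem:Qproc_is_reversible}; this reduces the bracket to $\bK-\bI$ and gives $(3)$. The normalisation is then immediate: $\langle\bq,\bone\rangle=\sum_iq_i=1$ by construction of $\bq$, and $\langle\bQ,\bpi\rangle=\sum_i(q_i/\pi_i)\pi_i=\sum_iq_i=1$; that both are conserved in time also follows a posteriori from $\bK\bone=\bone$ and $\bpi^\dagger\bK=\bpi^\dagger$.

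The one genuinely delicate point is the conjugation $\diag{\bpi}^{-1}\bK^\dagger\diag{\bpi}=\bK$ used in $(2)\Leftrightarrow(3)$. A bare computation starting from $(1)$ produces $(3)$ with $\widetilde{\bM}$ conjugated by $\diag{\widetilde{\bz}}$ rather than by $\diag{\widetilde{\bw}}$, that is with $\tfrac1\mu\diag{\widetilde{\bz}}^{-1}\widetilde{\bM}\diag{\widetilde{\bz}}$ in place of $\bK$; these two kernels coincide exactly when the micro-reversibility identity~\eqref{eq:generalized_detailed_balance}, $\widetilde{w}_i\widetilde{M}_{ij}\widetilde{z}_j=\widetilde{w}_j\widetilde{M}_{ji}\widetilde{z}_i$, holds. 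Thus $(1)\Leftrightarrow(2)$ is unconditional, whereas writing $(3)$ with the kernel $\bK$ of~\eqref{eq:def_K} relies on the row-reversibility of $\bK$ from Lemma~\ref{lem:Qproc_is_reversible}, i.e. on the standing micro-reversibility assumption. This is the step I would be most careful about; the rest is routine matrix algebra.
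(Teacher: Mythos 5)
Your proof is correct, and it is exactly the ``direct substitution and elementary matrix algebra'' that the paper alludes to before omitting its own proof: the quotient-rule computation for $(1)\Rightarrow(2)$, using $\dot m=(\mu-1)m$ and the conjugation identity $\diag{\widetilde\bw}\widetilde\bM\diag{\widetilde\bw}^{-1}=\mu\bK^\dagger$ read off from~\eqref{eq:def_K}, is the intended argument. Your closing observation is also correct and worth emphasising: computing $\frac{\rd\bQ}{\rd t}$ directly from $(1)$ gives $\frac{\rd\bQ}{\rd t}=\bigl(\frac1\mu\diag{\widetilde\bz}^{-1}\widetilde\bM\diag{\widetilde\bz}-\bI\bigr)\bQ$, and this kernel coincides with the $\bK$ of~\eqref{eq:def_K} precisely when the micro-reversibility identity~\eqref{eq:generalized_detailed_balance} holds, so that $(1)\Leftrightarrow(2)$ is unconditional while item $(3)$ as stated implicitly uses the row-reversibility of $\bK$ from Lemma~\ref{lem:Qproc_is_reversible}. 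That hypothesis does not appear in the statement of the lemma, but it is in force in every subsequent application (Maas' framework requires reversibility of $\bK$); your proof therefore both supplies the omitted argument and makes the needed assumption explicit.
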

\noindent
For the sake of brevity we omit the (elementary) proof.

It is worth pointing out that the change of timescale $\sfrac t\mu$ in \eqref{eq:rescaling} is needed for notational convenience only, otherwise an additional factor $\mu$ would appear in the evolution laws below.
Also, in the limit of large populations $N\to\infty$ considered later on, the subdominant eigenvalue $\mu=\mu_N\to 1$ so this rescaling becomes irrelevant.

\subsection{Gradient flow formulation}\label{sec:gradient flow}

In the previous section, we gave a canonical construction of the irreducible, reversible $Q$-process starting from the initial reducible, irreversible process.
With irreducibility and reversibility newly satisfied by the transition matrix $\bK$ of this $Q$-process (Lemma~\ref{lem:Qproc_is_reversible}), we can now apply Maas' theory \cite{Maas_JFA} and identify the $\bq$ evolution as a gradient flow for some discrete optimal transport structure (to be recalled in a moment).

Given an irreducible, reversible Markov kernel $\bK$ (indexed as above by $i=0\dots N-k$) and its unique stationary distribution $\bpi$, the BGS entropy, also known as Kullback-Leibler divergence, of a probability $\bq$ computed relatively to $\bpi$, is defined as
\begin{equation}
\label{eq:def_entropy_BGS_discrete}
H(\bq|\bpi):=\sum\limits_{i=0}^{N-k} \frac{q_i}{\pi_i}\log\left(\frac{q_i}{\pi_i}\right)\pi_i
=
\sum\limits_{i=0}^{N-k} q_i\log\left(\frac{q_i}{\pi_i}\right).
\end{equation}

Note that we use here the definition of entropy with reverted sign with respect to the historical definition -- and also most common among physicists. 
Therefore, we expect its value to be non-increasing in time.

Let $\beta$ be the logarithmic mean
\begin{equation}
\label{eq:def_log_mean}
\beta(x,y):=
\left\{
\begin{array}{ll}
\frac{x-y}{\log x-\log y} & \mbox{if }x\neq y\\
x & \mbox{otherwise}
\end{array}
\right..
\end{equation}
J. Maas defined the following discrete optimal transport distance between probability densities:
\begin{definition}[Discrete Wasserstein distance \cite{Maas_JFA}]
\label{def:disc_Wasserstein}
Let $\bK$ be a stochastic, irreducible, and reversible transition kernel, and let $\bpi$ denote its unique stationary measure.
Given two probability densities $\bQ^0=\frac{\rd\bq^0}{\rd\bpi},\bQ^1=\frac{\rd\bq^1}{\rd\bpi}$ with respect to $\bpi$, the discrete squared Wasserstein distance is
\begin{equation}
\label{eq:def_action_discrete_Wasserstein}
\mathcal W_N^2(\bQ^0,\bQ^1):=
\inf \limits_{\bQ,\bpsi}
\left\{
\frac{1}{2}\int_0^1\sum \limits_{i,j}|\psi^\tau_i-\psi^\tau_j|^2K_{ij}\, \beta(Q^\tau_i,Q^\tau_j)\pi_i\,\rd \tau
\right\},
\end{equation}
where the infimum runs over all piecewise $\mathcal C^1$ curves of probability densities $\bQ:[0,1]\ni \tau\mapsto \bQ^\tau\in \mathbb R_+^N$ and all measurable functions  $\bpsi:[0,1]\ni \tau\mapsto \bpsi^\tau\in\R^N$ satisfying the discrete continuity equation with endpoints $\bQ^0,\bQ^1$
\begin{equation}
\label{eq:discrete_continuity}
\left\{
\begin{array}{l}
 \frac{\rd}{\rd\tau}Q^\tau_i + \sum\limits_j (\psi^\tau_j-\psi^\tau_i)K_{ij} \, \beta(Q^\tau_i,Q^\tau_j)=0\quad \,\text{a.e. }\tau\in[0,1], \, \forall i\ ,\\
 \bQ^\tau|_{\tau=0}=\bQ^0,\qquad \bQ^\tau|_{\tau=1}=\bQ^1.
\end{array}
\right.
\end{equation}
\end{definition}
This is a discrete counterpart to the celebrated Benamou-Brenier formula \cite{benamou2000computational} for the continuous Wasserstein distance \cite{Villani_Topics,santambrogio2015optimal}, and we emphasise the dependence of $\W_N$ on $N$,
since we will typically consider the large population limit $N\to\infty$ later on.

In what follows, we will slightly abuse notation  by noticing that one can canonically define the discrete Wasserstein distance between probabilities in terms of their densities $\W_N(\bq_0,\bq_1)\bydef\W_N\left(\frac{\rd \bq_0}{\rd\bpi},\frac{\rd \bq_1}{\rd\bpi}\right)$. In the sequel we will keep abusing the notations, and we shall simply speak of \emph{the} discrete Wasserstein distance and Riemannian structure 
	-- see e.g. \cite{erbar2016gradient} for similar issues on the subtle distinction between measures and densities.
Likewise, we will also write $H(\bq|\bpi)$ for $H\left(\bQ\circ\bpi|\bpi\right)$.

\begin{theorem}[Properties of the discrete Wasserstein distance \cite{Maas_JFA}]
\label{theo:prop_discrete_wasserstein}
With the same assumptions as before, denote by $\mathcal{P}_{\mathrm{dens}}$ the space of (finite) probability densities with respect to $\bpi$, and by $\mathcal P_{\mathrm{dens}}^*$ the subspace of everywhere strictly positive densities.
Then
\begin{enumerate}[(i)]
 \item 
 $\mathcal W_N$ defines a distance on $\mathcal P_{\mathrm{dens}}$;
 \item
 \label{item:discrete_Riemannian}
 The metric space $(\mathcal P_{\mathrm{dens}}^*,\mathcal W_N)$ is a Riemannian manifold;
 \item
 \label{item:grad_WN_coordinate}
 
 Given $f:\R^+\to \R$, the gradient of a functional $F(\bQ):=\sum_i f\left(Q_i\right)\pi_i$ with respect to the Riemannian structure in \eqref{item:discrete_Riemannian} reads, in local coordinates,
 \begin{equation}
 \label{eq:def_discrete_grad_WN}
 \left[\grad_{\mathcal W_N} F(\bQ)\right]_i=-\sum\limits_j K_{ij}\beta(Q_i,Q_j) \left(f'(Q_j)-f'(Q_i)\right).
 \end{equation}
 \item
 \label{item:discrete_grad_flow_Mass}
 The three equations of Lemma~\ref{lem:three_equivalent} are equivalent to the gradient flow
 \[
 \frac{\rd\bq}{\rd t}=-\grad_{\mathcal W_N} H(\bq|\bpi)
  \]
  of the relative BGS entropy.
\end{enumerate}
\end{theorem}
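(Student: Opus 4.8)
The plan is to treat items (i)--(iii) as an essentially verbatim transcription of J.\ Maas' construction \cite{Maas_JFA}, since all of its standing hypotheses have already been secured: Lemma~\ref{lem:Qproc_is_reversible} shows that the $Q$-process kernel $\bK$ is irreducible, row-stochastic, and row-reversible with respect to the everywhere-positive stationary measure $\bpi=\widetilde{\bw}\circ\widetilde{\bz}$, which is exactly the setting in which $\W_N$ is built. Thus (i) that $\W_N$ is a genuine metric on $\mathcal P_{\mathrm{dens}}$ and (ii) that $(\mathcal P_{\mathrm{dens}}^*,\W_N)$ is a smooth Riemannian manifold follow directly from \cite{Maas_JFA}; the point I would stress is that reversibility is precisely what renders the action in \eqref{eq:def_action_discrete_Wasserstein} symmetric and the induced Onsager operator well defined. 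For (iii) I would recall the derivation rather than reprove it: tangent vectors at $\bQ$ are represented through the continuity equation \eqref{eq:discrete_continuity} by potentials $\bpsi$, the Riemannian metric is the quadratic form $\bpsi\mapsto\frac12\sum_{i,j}|\psi_i-\psi_j|^2K_{ij}\beta(Q_i,Q_j)\pi_i$, and the differential of $F$ is $dF(\bQ)[\dot\bQ]=\sum_i f'(Q_i)\dot Q_i\,\pi_i$; matching this against the metric pairing and antisymmetrising in $\psi_j-\psi_i$ produces formula \eqref{eq:def_discrete_grad_WN}.

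The genuine content is item (iv), and here I would give a short explicit computation rather than a citation. Writing the relative BGS entropy \eqref{eq:def_entropy_BGS_discrete} as $H(\bq|\bpi)=\sum_i f(Q_i)\pi_i$ with the convex generator $f(s)=s\log s$, one has $f'(s)=\log s+1$, so $f'(Q_j)-f'(Q_i)=\log Q_j-\log Q_i$, and the gradient formula \eqref{eq:def_discrete_grad_WN} gives
\[
[\grad_{\W_N}H(\bQ)]_i=-\sum_j K_{ij}\,\beta(Q_i,Q_j)\,(\log Q_j-\log Q_i).
\]
The key step is the defining identity of the logarithmic mean \eqref{eq:def_log_mean}, namely $\beta(Q_i,Q_j)(\log Q_j-\log Q_i)=Q_j-Q_i$, which is exactly why this particular mean appears in Maas' metric: it linearises the logarithmic derivative of the entropy. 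After this cancellation, and using row-stochasticity $\sum_j K_{ij}=1$,
\[
[\grad_{\W_N}H(\bQ)]_i=-\sum_j K_{ij}(Q_j-Q_i)=Q_i-\sum_j K_{ij}Q_j=-[(\bK-\bI)\bQ]_i,
\]
so that the gradient flow $\dot\bQ=-\grad_{\W_N}H$ is precisely equation (3) of Lemma~\ref{lem:three_equivalent}. By the equivalences recorded in that lemma it coincides with the $\bq$- and $\widetilde\bp$-dynamics, which proves (iv). A small bookkeeping remark I would add is that $\bq=\bQ\circ\bpi=\diag{\bpi}\bQ$ is a fixed linear diffeomorphism, so expressing the flow in the $\bq$ variable as in the statement is harmless.

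Honestly there is no deep obstacle here: (i)--(iii) are Maas' theorem under hypotheses already verified in Lemma~\ref{lem:Qproc_is_reversible}, and (iv) collapses to a one-line manipulation. The only place that demands real care is the bookkeeping of conventions --- row- versus column-stochasticity, the distinction between the density $\bQ$ and the probability $\bq$, and the time rescaling $\sfrac{t}{\mu}$ baked into \eqref{eq:rescaling} --- so that Maas' results are applied to the correct object ($\bK$, not $\widetilde\bM$) with the correct stationary measure $\bpi$. The one structural fact worth emphasising is that the logarithmic-mean identity is exactly what makes the \emph{entropic} gradient flow \emph{linear}; for other generators $f$ (such as the Tsallis family treated later) the analogous computation does not collapse to $(\bK-\bI)\bQ$, and it is this special cancellation that ties the BGS entropy to the linear $Q$-process dynamics.
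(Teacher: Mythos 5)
Your proposal is correct and follows essentially the same route as the paper: items (i)--(iii) are imported from \cite{Maas_JFA} once the irreducibility and reversibility of $\bK$ from Lemma~\ref{lem:Qproc_is_reversible} are in place, and item (iv) is verified by the same direct computation in which the logarithmic mean cancels against $\log Q_j-\log Q_i$ to yield $-[(\bK-\bI)\bQ]_i$, followed by the identification with Lemma~\ref{lem:three_equivalent}. Your closing observations on the $\bq$ versus $\bQ$ bookkeeping and on why the BGS generator alone linearises the flow match the paper's own remarks.
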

In \eqref{item:grad_WN_coordinate} and \eqref{item:discrete_grad_flow_Mass} the intrinsic Riemannian gradient $\grad_{\mathcal W_N} F(\bQ)$ of a function $F:\mathcal P_{\mathrm{dens}}(\mathcal X_N)\to\R$ is defined such that, along any differentiable curve $\tau\mapsto \bQ^\tau$ with speed $\dot \bQ^\tau\in T_{\bQ_t}\mathcal P_{\mathrm{dens}}$, the chain rule holds as
\[
\frac{\rd\ }{\rd\tau}F(\bQ^\tau)=\left\langle \grad_{\mathcal W_N} F(\bQ^\tau),\frac{\rd\bQ^\tau}{\rd\tau}\right\rangle_{\bQ^\tau}.
\]
The precise definition of the scalar product $\langle.,.\rangle_{\bQ}$ in the tangent plane $T_\bQ \mathcal P_{\mathrm{dens}}$ at a point $\bQ$ involves a certain weighted Onsager operator $L_\bQ\bydef-\mathbf{\operatorname{div}}_{\bQ}(\boldsymbol{\nabla}\cdot)$, defined in terms of suitable discrete divergence and gradient operators $\mathbf{\operatorname{div}}_{\bQ},\boldsymbol{\nabla}$.
This allows to formally rewrite \eqref{eq:def_discrete_grad_WN} in a more intrinsic fashion as
\begin{equation}
 \label{eq:formula_gradient_discrete_Wasserstein}
\grad_{\mathcal W_N}F(\bQ)=-\mathbf{\operatorname{div}}_{\bQ}(\boldsymbol{\nabla} f'(\bQ)).
\end{equation}
This will have a clear counterpart later on in the continuous world, see in particular \eqref{eq:formula_gradient_Wasserstein} and section~\ref{subsec:Wasserstein}.
This Onsager operator precisely gives the one-to-one correspondence between $\psi^\tau$ and $\frac{\rd\bQ^\tau}{\rd\tau}=L_{\bQ^\tau}\psi^\tau$ implicitly appearing in the continuity equation \eqref{eq:discrete_continuity}; see~\cite[section 3]{Maas_JFA}.
The fact that \eqref{item:discrete_grad_flow_Mass} holds is consequence of~\cite[Theorem 4.7]{Maas_JFA}.
This can be checked directly with \eqref{eq:def_discrete_grad_WN}:
with $F(\bQ)=H(\bQ\circ\bpi|\bpi)=\sum_i \{Q_i\log Q_i\}\pi_i=\sum_i\{Q_i\log Q_i-Q_i+1\}\pi_i$ we have $f'(Q_i)=\log (Q_i)$ in \eqref{eq:def_discrete_grad_WN}, hence
\begin{align*}
\left[\grad_{\mathcal W_N} H(\bQ\circ\bpi|\bpi)\right]_i&=-\sum\limits_j K_{ij}\beta(Q_i,Q_j) \left(f'(Q_j)-f'(Q_i)\right)
\\
&=-\sum\limits_j K_{ij}\frac{Q_j-Q_i}{\log Q_j-\log Q_i}\left(\log(Q_j)-\log(Q_i)\right)
 \\
&=-\left(\sum\limits_j K_{ij} Q_j\right) + \left(\sum\limits_j K_{ij}\right) Q_i\\
&=-(\bK \bQ)_i + Q_i=\left[(-\bK+\bI)\bQ\right]_i
\end{align*}
and $\dfrac{\rd\bQ}{\rd t}=-\grad_{\mathcal W_N} H(\bQ\circ\bpi|\bpi)$ reads indeed $\dfrac{\rd\bQ }{\rd t}= \left(\bK-\bI\right)\bQ$ as in Lemma~\ref{lem:three_equivalent}.

As pointed out in \cite{Maas_JFA}, the restriction to \emph{positive} densities in \eqref{item:discrete_Riemannian} is not an issue:
since the kernel $\bK$ is irreducible any solution of the Heat Equation $\frac{\rd\bQ}{\rd t}=(\bK-\bI)\bQ$ becomes instantaneously positive, $Q_i(t)>0$ for all $i$ and $t>0$.

We would like to stress that our main interest lies in \eqref{item:discrete_grad_flow_Mass}:
although the original evolution $\frac{\rd\bp}{\rd t}=(\bM-\bI)\bp$ is not truly speaking a gradient flow (due to absorbing states causing reducibility and irreversibility) one can in fact change the relevant variables so that the new effective (lower dimensional) $Q$-process kernel $\bK$ becomes irreducible and reversible, and obtain a variational structure via discrete mass transport.
Summarising the previous discussions, we have established in this section:
\begin{theorem}[gradient flow structure for reducible irreversible kernels]
	\label{thm:vs_rik}
Let $\bM$ be admissible in the sense of Definition~\ref{def:ac}, let $\bK$ be the associated $Q$-process with stationary measure $\bpi$, and let $\bq(t)$ be defined by~\eqref{eq:rescaling}.
Then the evolution of  $\bp$ is ``variational''  in the sense that $\bq$ is driven by the gradient flow
\begin{equation}
\label{eq:discrete_grad_flow_Q}
 \frac{\rd\bq}{\rd t}=-\grad_{\mathcal W_N} H(\bq|\bpi).
\end{equation}
and that the complete dynamics of $\bp$  can be uniquely recovered from $\bq$. 
\end{theorem}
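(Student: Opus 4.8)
The plan is to assemble the machinery already in place: the gradient-flow identity is immediate from Maas' theory once $\bK$ is recognised as an admissible kernel for it, while the reconstruction claim amounts to inverting the rescaling \eqref{eq:rescaling} with the help of the decay law \eqref{eq:expo_decay_p} and then integrating the absorption flux. No new estimates are required; the content is in correctly chaining the previous lemmas.

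For the first assertion I would invoke Lemma~\ref{lem:Qproc_is_reversible}: the $Q$-process kernel $\bK$ is irreducible and row-stochastic, with unique positive stationary measure $\bpi=\widetilde\bw\circ\widetilde\bz$, and---through the micro-reversibility needed to enter the discrete-transport framework---row-reversible, so that $(\bK,\bpi)$ meets exactly the hypotheses under which Definition~\ref{def:disc_Wasserstein} and Theorem~\ref{theo:prop_discrete_wasserstein} are stated. Lemma~\ref{lem:three_equivalent} then identifies the transient evolution $\frac{\rd\widetilde\bp}{\rd t}=(\widetilde\bM-\bI)\widetilde\bp$ with $\frac{\rd\bq}{\rd t}=(\bK^\dagger-\bI)\bq$, and Theorem~\ref{theo:prop_discrete_wasserstein}\eqref{item:discrete_grad_flow_Mass} rewrites the latter precisely as the gradient flow \eqref{eq:discrete_grad_flow_Q} of the relative BGS entropy.

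For the reconstruction I would discard the trivial absorbed case $\widetilde\bp(0)=\bzero$ and set $m_0\bydef\langle\widetilde\bw,\widetilde\bp(0)\rangle>0$. Evaluating \eqref{eq:rescaling} at $t=\mu s$ gives $\widetilde\bw\circ\widetilde\bp(s)=\langle\widetilde\bw,\widetilde\bp(s)\rangle\,\bq(\mu s)$, while \eqref{eq:expo_decay_p} supplies the missing scalar $\langle\widetilde\bw,\widetilde\bp(s)\rangle=e^{(\mu-1)s}m_0$; since every entry $\widetilde w_i>0$, this recovers the transient coordinates entrywise as $\widetilde p_i(s)=e^{(\mu-1)s}m_0\,q_i(\mu s)/\widetilde w_i$. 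The canonical form \eqref{eq:def_admissible} makes the lower-right block of $\bM-\bI$ vanish, so the absorbed coordinates $\bp^{\mathrm{abs}}$ satisfy $\frac{\rd}{\rd t}\bp^{\mathrm{abs}}=\bA\widetilde\bp$ and are obtained by quadrature, $\bp^{\mathrm{abs}}(t)=\bp^{\mathrm{abs}}(0)+\int_0^t\bA\widetilde\bp(s)\,\rd s$, from the already recovered $\widetilde\bp$ together with the initial datum. Assembling $\widetilde\bp$ and $\bp^{\mathrm{abs}}$ reconstitutes $\bp(t)$ uniquely, given the characteristic triple and $\bp(0)$.

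The one genuinely substantive point---the rest being bookkeeping with the block structure and the chain of equivalences already recorded---is that the passage $\widetilde\bp\mapsto\bq$ in \eqref{eq:rescaling} is information-losing in two ways: it normalises away the total transient mass and it rescales time by $\mu$. The reconstruction cannot therefore be a pointwise inverse of \eqref{eq:rescaling} alone; it works precisely because the discarded mass is not free but is rigidly pinned to $\bq$ through the exponential law \eqref{eq:expo_decay_p}, which is itself forced by $\widetilde\bw$ being a left eigenvector of $\widetilde\bM$. I expect this to be the only place where care is needed, the gradient-flow half being a direct corollary of Theorem~\ref{theo:prop_discrete_wasserstein}.
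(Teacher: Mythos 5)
Your proof is correct and follows essentially the same route as the paper, which simply summarises the chain Lemma~\ref{lem:Qproc_is_reversible} $\to$ Lemma~\ref{lem:three_equivalent} $\to$ Theorem~\ref{theo:prop_discrete_wasserstein}\eqref{item:discrete_grad_flow_Mass} for the gradient-flow half and relegates the reconstruction to Remark~\ref{rmk:equiv_pq} and Lemma~\ref{lem:expl_sg} (your quadrature $\bp^{\mathrm{abs}}(t)=\bp^{\mathrm{abs}}(0)+\int_0^t\bA\widetilde\bp(s)\,\rd s$ is precisely the off-diagonal block of the semigroup there, and your inversion of \eqref{eq:rescaling} via \eqref{eq:expo_decay_p} is the "undoing of the change of variables" the remark alludes to). You also rightly make explicit that micro-reversibility of $\bM$ is needed for $\bK$ to be row-reversible and hence for Maas' framework to apply, a hypothesis the theorem statement leaves implicit.
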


 \begin{remark}\label{rmk:equiv_pq}
 The fact that the evolution of $\bq$ fully determines the dynamics of $\bp$ can be seen by undoing the change of variables from $\widetilde{\bp}$ to $\bq$ and then using the conservation laws to recover $\bp$, see Lemma~\ref{lem:expl_sg} and Equation~\eqref{eq:rescaling}.
 \end{remark}

The above identification of the gradient flow structure only involved the BGS entropy $H(\bq|\bpi)$ as a driving functional.
One can also define more general relative $\phi$-entropies of the form
\begin{equation}
\label{eq:def_entropy_tsallis_discrete}
G_\phi(\bq|\bpi)\bydef\sum\limits_{i=0}^{N-k} \phi\left(\frac{q_i}{\pi_i}\right)\pi_i
\end{equation}
for a given convex function $\phi:\R^+\to\R$, still computed relatively to the reference measure $\bpi$.
This covers the so-called Tsallis entropies, a generalisation of the Boltzmann-Gibbs entropy for non-additive systems with growing importance in biology (see~\cite{Xuan_etal} and references therein); see also~\cite{Karev} to its application in a model of prebiotic evolution akin to the replicator equation.
Tsallis entropies will be further discussed in Section~\ref{sec:DTDMC}.%

It turns out that the $G_\phi$ functional is a Lyapunov functional for the $\bq$-evolution (which is rather driven by the $H$ entropy.)

\begin{lemma}\label{lem:entropy_CD}
Let $\bM\in\mathcal A_{N+1,k}$ be an admissible and micro-reversible transition kernel, let $G_\phi$ be as in~\eqref{eq:def_entropy_tsallis_discrete} for a given differentiable convex function $\phi:\R^+\to\R$, and consider a solution $\bQ(t)$ of the previous gradient flow (Lemma~\ref{lem:three_equivalent} and Theorem~\ref{theo:prop_discrete_wasserstein}).
Then
\begin{equation}
\label{eq:cross_dissipation}
  \frac{\rd\ }{\rd t}G_\phi(\bQ|\bpi)
  =-\frac{1}{2}\sum_{i,j=0}^{N-k}K_{ij}\beta(Q_i,Q_j) \left(\phi'\left(Q_j\right)-\phi'\left(Q_i\right)\right)(\log Q_j-\log Q_i)\pi_i\leq 0.
\end{equation}
Moreover, if $\phi$ is locally strongly convex in the sense that $\phi''(x)\geq c_M>0$ in any bounded interval $x\in[0,M]$, then there exists $C>0$ depending only on $\phi,\bpi$ (but not explicitly on $\bK$ or on the solution $\bQ$) such that there holds the improved dissipation estimate
\begin{equation}
 \label{eq:cross_dissipation_H}
\frac{\rd\ }{\rd t}G_\phi(\bQ|\bpi) \leq -C\left|\frac{\rd\bQ}{\rd t}\right|_{\bpi}^2\leq 0
\end{equation}
Furthermore, if $\phi(x)\ge x-1$, then $G_\phi(\bQ|\bpi)\ge0$.
\end{lemma}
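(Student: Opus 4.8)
The plan is to dispatch the three claims in order, all resting on the evolution equation $\frac{\rd\bQ}{\rd t} = (\bK-\bI)\bQ$ (Lemma~\ref{lem:three_equivalent}) and the row-reversibility $\pi_i K_{ij} = \pi_j K_{ji}$ (Lemma~\ref{lem:Qproc_is_reversible}). For the identity in \eqref{eq:cross_dissipation} I would differentiate $G_\phi(\bQ|\bpi) = \sum_i \phi(Q_i)\pi_i$ in time and insert $\dot Q_i = \sum_j K_{ij}(Q_j - Q_i)$ (using $\sum_j K_{ij}=1$). Because the weights $\pi_i K_{ij}$ are symmetric in $i,j$ by reversibility, the resulting double sum can be symmetrised (relabel $i \leftrightarrow j$ and average) to produce $-\frac12\sum_{i,j}\pi_i K_{ij}(\phi'(Q_j) - \phi'(Q_i))(Q_j - Q_i)$. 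The precise form stated in \eqref{eq:cross_dissipation}, with the logarithmic mean, then follows from the defining identity $\beta(Q_i, Q_j)(\log Q_j - \log Q_i) = Q_j - Q_i$. The sign $\leq 0$ is immediate once one notes that $\beta \geq 0$ and that, $\phi'$ being nondecreasing (convexity) and $\log$ increasing, the factors $\phi'(Q_j) - \phi'(Q_i)$ and $\log Q_j - \log Q_i$ both carry the sign of $Q_j - Q_i$, so every summand is nonnegative.

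For the strengthened dissipation \eqref{eq:cross_dissipation_H} I would start from the equivalent form $-\frac12\sum_{i,j}\pi_i K_{ij}(\phi'(Q_j)-\phi'(Q_i))(Q_j - Q_i)$ and use the mean value theorem to write $(\phi'(Q_j)-\phi'(Q_i))(Q_j-Q_i) = \phi''(\xi_{ij})(Q_j-Q_i)^2 \geq c_M(Q_j - Q_i)^2$, valid provided all densities lie in a fixed interval $[0,M]$. The estimate relating $\sum_{i,j}\pi_i K_{ij}(Q_j-Q_i)^2$ to the weighted norm $|\frac{\rd\bQ}{\rd t}|_\bpi^2 := \sum_i \pi_i \dot Q_i^2$ comes from Jensen's inequality: since $\sum_j K_{ij}=1$, one has $\dot Q_i^2 = (\sum_j K_{ij}(Q_j - Q_i))^2 \leq \sum_j K_{ij}(Q_j - Q_i)^2$, whence $|\frac{\rd\bQ}{\rd t}|_\bpi^2 \leq \sum_{i,j}\pi_i K_{ij}(Q_j - Q_i)^2$. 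Combining the two bounds gives $C = c_M/2$.

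The nonnegativity is immediate from the normalisation: when $\phi(x) \geq x-1$, we get $G_\phi(\bQ|\bpi) \geq \sum_i(Q_i - 1)\pi_i = \langle\bQ,\bpi\rangle - \sum_i\pi_i = 0$, using $\langle\bQ,\bpi\rangle = 1$ (Lemma~\ref{lem:three_equivalent}) and $\sum_i \pi_i = 1$. I expect the only genuine obstacle to lie in the second part, namely arguing that the constant $C$ can be taken independent of $\bK$ and of the solution $\bQ$. The difficulty is that the strong-convexity constant $c_M$ depends on the interval $[0,M]$ containing the densities, which is a priori solution-dependent. The resolution is a uniform a priori bound: because $\bQ$ is a probability density, the relation $\langle\bQ,\bpi\rangle = \sum_i Q_i \pi_i = 1$ with nonnegative entries forces $Q_i \leq 1/\pi_i \leq 1/\min_i \pi_i$ for all $i$ and all times. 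Taking $M = 1/\min_i\pi_i$ then makes $c_M$, and hence $C$, depend only on $\phi$ and $\bpi$, as claimed.
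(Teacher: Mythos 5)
Your proof is correct. Parts one and three coincide with the paper's argument: the identity \eqref{eq:cross_dissipation} is obtained exactly as you describe (insert $\dot Q_i=\sum_jK_{ij}(Q_j-Q_i)$, symmetrise using $\pi_iK_{ij}=\pi_jK_{ji}$, and convert $Q_j-Q_i$ into $\beta(Q_i,Q_j)(\log Q_j-\log Q_i)$), and the nonnegativity and the a priori bound $0\le Q_i\le 1/\min_j\pi_j$ are verbatim the paper's. For the improved estimate \eqref{eq:cross_dissipation_H}, however, you take a genuinely different route. The paper does \emph{not} symmetrise: it bounds $\phi'(Q_i)\bigl(\sum_jK_{ij}Q_j-Q_i\bigr)\le\phi\bigl(\sum_jK_{ij}Q_j\bigr)-\phi(Q_i)-\frac{c_M}{2}\bigl|\sum_jK_{ij}Q_j-Q_i\bigr|^2$ by Taylor's theorem with strong convexity, then applies Jensen's inequality $\phi\bigl(\sum_jK_{ij}Q_j\bigr)\le\sum_jK_{ij}\phi(Q_j)$ and lets reversibility together with row-stochasticity cancel the zeroth-order terms, arriving at $-C\bigl|\frac{\rd\bQ}{\rd t}\bigr|_{\bpi}^2$ in one pass. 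You instead reuse the symmetrised expression from the first part, lower-bound each summand by $c_M(Q_j-Q_i)^2$ via the mean value theorem applied to $\phi'$, and then dominate $\bigl|\frac{\rd\bQ}{\rd t}\bigr|_{\bpi}^2$ by the discrete Dirichlet form $\sum_{i,j}\pi_iK_{ij}(Q_j-Q_i)^2$ through Cauchy--Schwarz. Both arguments yield $C=c_M/2$ with $M=1/\min_j\pi_j$, so the constant depends only on $\phi$ and $\bpi$ as claimed. Your version has the merit of recycling the computation from the first part and of exhibiting the stronger intermediate bound by the Dirichlet form (which dominates the weighted norm of the velocity); the paper's version avoids the extra Cauchy--Schwarz step and works directly with the unsymmetrised derivative.
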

We stress that the right-hand side in \eqref{eq:cross_dissipation} is nothing but the expression in local coordinates of the Riemannian chain rule
\begin{align*}
 \frac{\rd\ }{\rd t}G_\phi(\bQ|\bpi)
 &=\langle\grad_{\mathcal W_N} G_{\phi}(\bQ|\bpi),\frac{\rd}{\rd t}\bQ\rangle
 \\
&=-\langle\grad_{\mathcal W_N} G_{\phi}(\bQ|\bpi),\grad_{\mathcal W_N} H(\bQ|\bpi)\rangle.
\end{align*}
That this quantity is indeed nonnegative simply follows from our assumption that $\phi'$ is nondecreasing, the summand in \eqref{eq:cross_dissipation} being nonnegative for all $i,j$.
In \eqref{eq:cross_dissipation_H} we stress that $|.|^2_{\bpi}$ is the squared norm induced by the weighted scalar product $\langle a,b\rangle_{\bpi}=\sum a_ib_i\,{\pi}_i$ introduced earlier, which should not be confused with the Riemannian scalar product $\langle.,.\rangle_{\bQ}$ at a point $\bQ$.
Note that the right-hand side vanishes if and only if $\bQ$ is a stationary point, $\frac{\rd\bQ}{\rd t}=0\Leftrightarrow \bQ=\bK\bQ$.
\begin{proof}
Let $\bK$ be given by \eqref{eq:def_K}, with $\frac{\rd\bQ}{\rd t}=(\bK-\bI)\bQ$ and $Q_i(t)=\frac{q_i(t)}{\pi_i}$.
We compute
 \begin{align*}
  \frac{\rd\ }{\rd t}G_\phi(\bQ|\bpi)
  &=
  \frac{\rd\ }{\rd t}\left(\sum_{i=0}^{N-k} \phi\left(Q_i\right)\pi_i\right)\\
  &=
  \sum_{i=0}^{N-k}\phi'\left(Q_i\right)\frac{\rd Q_i}{\rd t}\pi_i
  =\sum_{i=0}^{N-k}\phi'\left(Q_i\right)\left[\sum_{j=0}^{N-k}K_{ij}Q_j-Q_i\right]\pi_i
  \\
  &=
  \sum_{i=0}^{N-k}\phi'\left(Q_i\right)\left[\sum_{j=0}^{N-k}K_{ij}(Q_j-Q_i)\right]\pi_i,
  \end{align*}
  where the last equality simply comes from the stochasticity $\sum_j K_{ij}=1$.
Leveraging the microreversibility $K_{ij}\pi_i=K_{ji}\pi_j$, a straightforward summation by parts leads to
\begin{align*}
  \frac{\rd\ }{\rd t}G_\phi(\bQ|\bpi)
  &=
  -\frac{1}{2}\sum_{i,j=0}^{N-k}K_{ij}\left(\phi'\left(Q_j\right)-\phi'\left(Q_i\right)\right)(Q_j-Q_i)\pi_i
  \\
  &=-\frac{1}{2}\sum_{i,j=0}^{N-k}K_{ij}\frac{Q_j-Q_i}{\log Q_j-\log Q_i}  \left(\phi'\left(Q_j\right)-\phi'\left(Q_i\right)\right)(\log Q_j-\log Q_i)\pi_i\ .
\end{align*}
Recalling the definition of the logarithmic mean, Equation~\eqref{eq:def_log_mean}, this proves Equation~\eqref{eq:cross_dissipation}.

Assume now that $\phi$ satisfies the additional strong local convexity as in our statement.
Just as before we can write
\begin{equation}
\label{eq:dGphi}
\frac{\rd\ }{\rd t}G_\phi(\bQ|\bpi)
=\sum_{i=0}^{N-k}\phi'\left(Q_i\right)
\left[\sum_{j=0}^{N-k}K_{ij}Q_j-Q_i\right]\pi_i
\end{equation}
Observe that due to $Q_i=\frac{q_i}{\pi_i}\leq \frac{\sum_j q_j}{\pi_i}=\frac{1}{\pi_i}$ we have $0\leq Q_i\leq M:=\frac{1}{\min_j \pi_j}<+\infty$.
Recalling that $\bK$ is row-stochastic we see that the convex combination $\sum _{j=0}^{N-k}K_{ij}Q_j\leq M$ as well, for any fixed $i$.
Exploiting our strong convexity assumption on $\phi$, a straightforward application of Taylor's theorem gives that
\[
\phi'\left(Q_i\right)
\left[\sum_{j=0}^{N-k}K_{ij}Q_j-Q_i\right]
\leq
\phi\left(\sum_{j=0}^{N-k}K_{ij}Q_j\right)-\phi(Q_i)
- \frac{c_M}2\left|
\sum_{j=0}^{N-k}K_{ij}Q_j-Q_i
\right|^2
\]
where $c_M>0$ is a lower bound for $\phi''(x)$ in the interval $[0,M]$.
Substituting in \eqref{eq:dGphi} gives then, by convexity of $\phi$ and reversibility $K_{ij}\pi_i=K_{ji}\pi_j$,
\begin{align*}
\frac{\rd\ }{\rd t}G_\phi(\bQ|\bpi)
&\leq
\sum_{i=0}^{N-k} 
\left\{\phi\left(\sum_{j=0}^{N-k}K_{ij}Q_j\right)-\phi(Q_i)
-C\left|
\sum_{j=0}^{N-k}K_{ij}Q_j-Q_i
\right|^2
\right\}\pi_i
\\
&\leq 
\sum_{i=0}^{N-k} 
\left\{\sum_{j=0}^{N-k}K_{ij}\phi\left(Q_j\right)-\phi(Q_i)
-C\left|
\sum_{j=0}^{N-k}K_{ij}Q_j-Q_i
\right|^2
\right\}\pi_i
\\
&=
\sum_{j=0}^{N-k}\sum_{i=0}^{N-k} K_{ji}\phi(Q_j)\pi_j-\sum\limits_{i=0}^{N-k}\phi(Q_i)\pi_i -C \sum_{i=0}^{N-k}\left|\frac{\rd Q_i}{\rd t}\right|^2\pi_i
\\
&=
\sum_{j=0}^{N-k}\underbrace{\left(\sum_{i=0}^{N-k} K_{ji}\right)}_{=1}\phi(Q_j)\pi_j-\sum\limits_{i=0}^{N-k}\phi(Q_i)\pi_i -C \sum_{i=0}^{N-k}\left|\frac{\rd Q_i}{\rd t}\right|^2\pi_i
\\
&=
\sum\limits_{j=0}^{N-k}\phi(Q_j)\pi_j-
\sum\limits_{i=0}^{N-k}\phi(Q_i)\pi_i -C \sum_{i=0}^{N-k}\left|\frac{\rd Q_i}{\rd t}\right|^2\pi_i
\\
&=-C \sum_{i=0}^{N-k}\left|\frac{\rd Q_i}{\rd t}\right|^2\pi_i
= -C\left|\frac{\rd\bQ}{\rd t}\right|_{\bpi}^2
\end{align*}
as desired.

Finally, the fact that $G_\phi(\bQ|\bpi)\geq 0$ immediately follows from
\[
G_\phi(\bQ|\bpi)
  =
  \sum_{i=0}^{N-k} \phi\left(Q_i\right)\pi_i
  \geq \sum_{i=0}^{N-k} \left(Q_i-1\right)\pi_i=\sum_{i=0}^{N-k} Q_i\pi_i-\sum_{i=0}^{N-k} \pi_i=0,
\]
where we used $\sum_i Q_i\pi_i=\sum_iq_i=\sum_i\pi_i=1$. \qed
\end{proof}

We will state later on in Section~\ref{sec:DTDMC} a discrete-in time monotonicity of the entropy, Proposition~\ref{lem:decrasing}.
This discrete-time framework will be based on a ``Euclidean'' discretization (linear 1st order difference quotient), and we believe that in this context a discrete equivalent of \eqref{eq:cross_dissipation_H} can be established.
However, \eqref{eq:cross_dissipation} strongly leverages the Riemannian structure from discrete optimal transport, and seems therefore incompatible with the Euclidean time discretization.
An alternative and natural discretization in time would rather be the JKO scheme, more compatible with the discrete Wasserstein and gradient flow structure (see Section~\ref{sec:gfcmp} for a short discussion). For such JKO discretization, it should be possible to write down a discrete version of \eqref{eq:cross_dissipation}. However, even if this turns out to be possible, the relationship of this new discrete process with the embedded chain  is not clear, and we will not pursue this direction here.

We also point out that $G_\phi=H$ is admissible in Lemma~\ref{lem:entropy_CD}, in which case one is actually computing the dissipation of $H$ along its own gradient flow and the right-hand side of \eqref{eq:cross_dissipation} is a discrete version of the Fisher information $F(\rho|\pi)=\int \left|\nabla \log\left(\frac{\rho}{\pi}\right)\right|^2\rho$.

\subsection{The continuous in time Moran process}
\label{ssec:moran_proc}

In this subsection, we study the Moran process with the techniques developed above. 
As already explained, the only two absorbing states are labelled here $i=0$ and $i=N$.
We start with explicit results for the so-called \emph{neutral} evolution, in which both types $\A$ and $\B$ have the same reproductive viability, namely, $\tsp_i^{(\mathsf{n})}=i/N$ in the transition matrix given by Equation~\eqref{def:Moran_TM}. We indicate the neutral evolution in this subsection by the superscript $\mathsf{n}$.
We will obtain an explicit formula for the Wasserstein distance between adjacent sites in birth and death processes, and finish with some comments on more general evolutionary processes including the Wright-Fisher process.

For the neutral Moran process, a simple calculation shows that $z_i^{(\mathsf{n})}=\frac{1}{N-1}$, $w_i^{(\mathsf{n})}=\frac{6i(N-i)}{N(N+1)}$ and $\lambda^{(\mathsf{n})}=-\frac{2}{N^2}$.
Introducing  the auxiliary notation $x_i=i/N$ and considering the  BGS entropy given by Equation~\eqref{eq:def_entropy_BGS_discrete} as a function of $\bp$ (rather than $\bq$) through the change of variables \eqref{eq:rescaling}, we arrive at the entropy for the  neutral evolution:
\begin{equation}\label{eq:Moran_neutral_Entropy}
H^{(\mathsf{n})}(\bp|\bpi^{(\mathsf{n})})=\sum_{i=1}^{N-1}\frac{x_i(1-x_i)p_i}{\sum_jx_j(1-x_j)p_j}\log\left(\frac{N^2-1}{6N}\frac{p_i}{\sum_jx_j(1-x_j)p_j}\right)\ .
\end{equation}

For non-neutral Moran process, one should not expect analytical formulas for  $\lambda$, $\bw$ and $\bz$. However, as a consequence of Lemma~\ref{lem:tridiagonal_microreversible} and Equation~\eqref{eq:generalized_detailed_balance}, we find that
\[
 \frac{w_{i+1}}{z_{i+1}}=\frac{M_{i,i+1}}{M_{i+1,i}}\times\frac{w_i}{z_i}=\frac{(i+1)(1-\tsp_{i+1})}{(N-i)\tsp_i}\frac{w_i}{z_i}
=\frac{(i+1)!(N-i-1)!\prod_{j=2}^{i+1}(1-\tsp_j)}{(N-1)!\prod_{j=1}^i\tsp_j}\frac{w_1}{z_1}\ .
\]
Therefore
\begin{equation}\label{eq:ratio_ev}
\frac{w_i}{z_i}=CN\binom{N}{i}^{-1}\frac{\prod_{j=1}^i(1-\tsp_j)}{\prod_{j=1}^{i-1}\tsp_j}\ ,\quad i=1,\ldots,N-1.
\end{equation}
where $C$ is a certain normalisation constant.
This equation will be used in Subsection~\ref{ssec:kimura_eq} to identify the correct macroscopic limit of the Moran process, i.e., the correct model when $N\to\infty$.

For general Moran processes, we are able to compute explicitly the Wasserstein distance between adjacent sites. Namely,
\begin{lemma}
\label{lem:1st_order_expansion_WN}
Let $\bK$ be a tridiagonal, irreducible, and reversible transition matrix, let $\beta(x,y)$ be the logarithmic mean defined in~\eqref{eq:def_log_mean}, and let us write $\be_i$ for the discrete probability vector concentrated on the $i$-th state.
For adjacent sites $j=i+1$ we have
\begin{equation}
\label{eq:1st_order_expansion_WN}
 \W_N(\be_i,\be_{i+1})=\int_0^1\frac{\rd r}{\sqrt{\beta\big(K_{\small i+1,i}r,K_{\small i,i+1}(1-r)\big)}}\ .
\end{equation}
\end{lemma}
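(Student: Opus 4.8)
The plan is to prove matching upper and lower bounds for the Benamou--Brenier action of Definition~\ref{def:disc_Wasserstein}, with $\be_i,\be_{i+1}$ understood (via the usual abuse of notation) as the densities $\frac{\rd\be_i}{\rd\bpi},\frac{\rd\be_{i+1}}{\rd\bpi}$. The guiding idea is that, because $\bK$ is tridiagonal and $\beta(x,0)=0$, any transport from $\be_i$ to $\be_{i+1}$ must cross the single cut-edge $(i,i+1)$, so the problem collapses to a one-dimensional geodesic computation along that edge.

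First I would set up the cut. For an arbitrary admissible pair $(\bQ^\tau,\bpsi^\tau)$ joining the two densities, write the continuity equation~\eqref{eq:discrete_continuity} in mass form $\dot q_k=\sum_j(\psi_k-\psi_j)K_{kj}\beta(Q_k,Q_j)\pi_k$ and sum it over $k\ge i+1$. Setting $g(\tau):=\sum_{k\ge i+1}q_k^\tau$, the antisymmetry of the fluxes (from reversibility and symmetry of $\beta$) cancels all interior terms, and tridiagonality leaves only the edge $(i,i+1)$, so that $g(0)=0$, $g(1)=1$ and
\[
\dot g=(\psi_{i+1}-\psi_i)K_{i,i+1}\beta(Q_i,Q_{i+1})\pi_i=:vB,\qquad B:=K_{i,i+1}\beta(Q_i,Q_{i+1})\pi_i.
\]
For the lower bound I would discard every term of the action except the one carried by the edge $(i,i+1)$; reversibility $K_{i,i+1}\pi_i=K_{i+1,i}\pi_{i+1}$ makes the two orderings equal, so their combined contribution is $v^2B$ and the action is bounded below by $\int_0^1 v^2B\,\rd\tau=\int_0^1 \dot g^2/B\,\rd\tau$. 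Next I would use $q_i\le 1-g$ and $q_{i+1}\le g$ together with the monotonicity of the logarithmic mean in each argument to obtain $B\le\bar B(g):=K_{i,i+1}\pi_i\,\beta\big(\tfrac{1-g}{\pi_i},\tfrac{g}{\pi_{i+1}}\big)$, hence $\int_0^1 \dot g^2/B\,\rd\tau\ge\int_0^1 \dot g^2/\bar B(g)\,\rd\tau$. A Cauchy--Schwarz step then gives $\int_0^1 \dot g^2/\bar B(g)\,\rd\tau\ge\big(\int_0^1|\dot g|\,\bar B(g)^{-1/2}\,\rd\tau\big)^2\ge\big(\int_0^1\bar B(r)^{-1/2}\,\rd r\big)^2$, where the last inequality holds because $\int_0^1|\dot g|\,\bar B(g)^{-1/2}\,\rd\tau\ge\big|\int_0^1\dot g\,\bar B(g)^{-1/2}\,\rd\tau\big|=\int_0^1\bar B(r)^{-1/2}\,\rd r$, so no monotonicity of $g$ is needed.

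For the matching upper bound I would exhibit the explicit competitor supported on the edge: $Q_k^\tau\equiv0$ for $k\ne i,i+1$, together with $Q_i^\tau=(1-g)/\pi_i$, $Q_{i+1}^\tau=g/\pi_{i+1}$, with $\psi_{i+1}-\psi_i$ fixed by the continuity equation and the time parametrisation chosen so that $\dot g^2/\bar B(g)$ is constant. For this curve one has $B=\bar B(g)$, and the edge-restriction and Cauchy--Schwarz inequalities are all equalities, so its action equals $\big(\int_0^1\bar B(r)^{-1/2}\,\rd r\big)^2$. Finally, the degree-one homogeneity of $\beta$ and reversibility give $\bar B(g)=\beta\big(K_{i,i+1}(1-g),K_{i+1,i}g\big)=\beta\big(K_{i+1,i}g,K_{i,i+1}(1-g)\big)$; renaming $g=r$ produces exactly~\eqref{eq:1st_order_expansion_WN}. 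The delicate point I anticipate is making the ``only the cut-edge matters'' optimality fully rigorous — namely the bound $B\le\bar B(g)$ and the admissibility of the constructed $\bpsi$, including integrability of the integrand near the endpoints, where $\beta(\cdot,0)=0$ makes it singular but only logarithmically, hence integrably, so.
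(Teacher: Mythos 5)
Your proof is correct, and it takes a genuinely different route from the paper's. The paper simply \emph{asserts} that, by tridiagonality, the minimising curve must be of the form $\bQ^\tau=r(\tau)\frac{\be_i}{\pi_i}+(1-r(\tau))\frac{\be_{i+1}}{\pi_{i+1}}$, reduces the action to a functional of $r$ alone, and solves the resulting one-dimensional Euler--Lagrange equation $r'=c\sqrt{F(r)}$; the localisation of the optimiser to the single edge is never justified. You instead prove two matching bounds: a lower bound valid for \emph{every} admissible pair $(\bQ,\bpsi)$, obtained by summing the continuity equation over $k\ge i+1$ to isolate the flux $\dot g=(\psi_{i+1}-\psi_i)K_{i,i+1}\beta(Q_i,Q_{i+1})\pi_i$ across the cut, discarding the nonnegative off-edge action terms, using $q_i\le 1-g$, $q_{i+1}\le g$ with the monotonicity of the logarithmic mean to replace $B$ by $\bar B(g)$, and finishing with Cauchy--Schwarz (correctly observing that no monotonicity of $g$ is required); and an explicit edge-supported competitor achieving equality, whose admissibility rests on $\beta(0,\cdot)=0$ killing all other fluxes and on the merely logarithmic, hence integrable, singularity of $\bar B^{-1/2}$ at the endpoints. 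The bookkeeping checks out: homogeneity $\beta(\lambda x,\lambda y)=\lambda\beta(x,y)$ and reversibility $K_{i,i+1}\pi_i=K_{i+1,i}\pi_{i+1}$ turn $\bar B(g)$ into $\beta\big(K_{i+1,i}g,K_{i,i+1}(1-g)\big)$, which is exactly~\eqref{eq:1st_order_expansion_WN} (the paper's displayed integrand is related to it by the substitution $r\mapsto 1-r$ and gives the same value). What your approach buys is precisely the rigour the paper forgoes: the two-sided bound argument \emph{proves} that only the cut edge matters, rather than assuming it, at the cost of a slightly longer argument and the minor measure-theoretic care needed in the change-of-variables step $\int_0^1|\dot g|\,\bar B(g)^{-1/2}\,\rd\tau\ge\int_0^1\bar B(r)^{-1/2}\,\rd r$ near the endpoints where $\bar B$ vanishes.
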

\begin{proof}
Since $\bK$ is tridiagonal, the minimising curve in Definition~\ref{def:disc_Wasserstein} should only involve the two neighbouring sites, i.e. the density $\bQ^\tau=r(\tau)\frac{\be_i}{\pi_i}+(1-r(\tau))\frac{\be_{i+1}}{\pi_{i+1}}$ for some function $r$ with $r(0)=1$ and $r(1)=0$ to be determined.
From~\eqref{eq:discrete_continuity}, we conclude that
\[
 \psi_{i+1}^\tau-\psi_i^\tau=-\frac{\dot Q^\tau_i}{K_{i,i+1}\beta(Q^\tau_i,Q^\tau_{i+1})}\ ,
\]
where $\dot{}=\frac{\rd\ }{\rd\tau}$.
Plugging this into the action~\eqref{eq:def_action_discrete_Wasserstein} and leveraging the reversibility of $\bK$, we find a functional of $r$ only. Namely
\begin{align*}
&\frac{1}{2}\int_0^1\sum \limits_{j,k}|\psi^\tau_j-\psi^\tau_k|^2K_{jk}\, \beta(Q^\tau_j,Q^\tau_k)\pi_j\,\rd \tau
\\
&\quad=
\frac{1}{2}\int_0^1\left|\psi^\tau_i-\psi^\tau_{i+1}\right|^2K_{i,i+1}\beta(Q_i^\tau,Q_{i+1}^\tau)\pi_i\rd\tau
\\
&\qquad+
\frac{1}{2}\int_0^1\left|\psi^\tau_{i+1}-\psi^\tau_{i}\right|^2K_{i+1,i}\beta(Q_{i+1}^\tau,Q_{i}^\tau)\pi_{i+1}\rd\tau
\\
&\quad=
\int_0^1\frac{\left(\dot Q^\tau_i\right)^2\pi_i\rd\tau}{K_{i,i+1}\beta(Q^\tau_i,Q^\tau_{i+1})}
=
\int_0^1\frac{\left(\frac{r'(\tau)}{\pi_i}\right)^2\pi_i\rd\tau}{K_{i,i+1}\beta\left(\frac{r(\tau)}{\pi_i},\frac{1-r(\tau)}{\pi_{i+1}}\right)}
\\
&\quad=\int_0^1\frac{r'(\tau)^2\rd\tau}{\beta(K_{i,i+1}r(\tau),K_{i+1,i}(1-r(\tau))}\ .
\end{align*}
Thus $\W_N^2(\be_i,\be_{i+1})$ is the minimal value of this functional over all $r\in\mathcal C^1([0,1])$, such that $r(0)=1$, $r(1)=0$, the Euler-Lagrange Equation implies that
\[
 \frac{\rd\ }{\rd\tau}\left(\frac{r'(\tau)^2}{F(r(\tau))}\right)=0\ ,
\]
i.e., $r'(\tau)=c\sqrt{F(r(\tau))}$, where $F(r)=\beta(K_{i,i+1}r,K_{i+1,i}(1-r))$ and 
\[
 c=\int_0^1\frac{r'(\tau)\rd\tau}{\sqrt{F(r(\tau))}}=\int_0^1\frac{\rd r}{\sqrt{\beta(K_{i,i+1}r,K_{i+1,i}(1-r))}}\ .
\]
The result follows immediately. \qed
\end{proof}

Note that the fact that the curve linking two neighbouring states is a linear combination of $\be_i$ and $\be_{i+1}$ only holds here because the matrix is tridiagonal.
Otherwise the nonlocal effects start building up and many other states play a role too, and the problem becomes more complicated than the optimisation over scalar functions $r(\tau)$ above.

We finish with one more comment about general processes in the Kimura class:
\begin{remark}
	For general $\bM\in\mathcal{A}_{N+1,2}$, the micro-reversibility does not necessarily hold. In particular, for the Wright-Fisher process this is not true even in the neutral case. However, 
for sufficiently large $n$ 
\[
 \bM^n\approx\left(
 \begin{matrix}
  1&(\bone-\bF)^\dagger&0\\
  \bzero &\lambda^n\widetilde{\bz}\otimes\widetilde{\bw}&\bzero\\
  0&\bF^\dagger&1
 \end{matrix}\right)\ ,
\]
where $\bF$ is the unique solution of $\bF^\dagger=\bF^\dagger\bM$ with $F_0=0$, $F_N=1$. 
Therefore ${(\bM^n)_{ij}\approx\lambda^n z_iw_j}$ for $i,j=1,\dots,N-1$, i.e., $w_iM_{ij}^n z_j\approx w_jM_{ji}^n z_i$. In a loose sense, we would say that any process in the Kimura class is asymptotically micro-reversible. However, we will not explore these ideas in the current work.
\end{remark}

\section{Interlude: discrete in time, discrete Markov chains}
\label{sec:DTDMC}

The notion of relative entropy, and even the entropy itself, is usually defined for \emph{irreducible} Markov processes only, see e.g. \cite{Sober_2011} for BGS entropy, and \cite{Furuichi_etal_2004} for Tsallis entropies.
Due to the existence of absorbing states in the transition kernel, this notion of course does not apply here to our admissible matrices.

In this section, we study discrete in time Markov processes.
Discrete in time models cannot be recast in the gradient flow formalism, however we opt to include this discussion in the present work because, as for the Moran process, these are frequently used in population genetics.
Furthermore, the results of the previous section can shed light on the precise notion of entropy for processes with two or more absorbing states.

We shall see that the right notion of entropy only depends on the transient states, and we will therefore speak of \emph{substochastic entropies}. 
We would like to point out that this issue is a priori non trivial: when $\bM$ has absorbing states, usual  entropy functionals often lead to quantities that increase along the evolution. (We recall that with our sign conventions entropies should rather be nonincreasing in time.)

We first show in Subsection~\ref{ssec:embedded_chain} how to make discrete in time and continuous in time models compatible, given a certain transition matrix $\bM$ for the continuous in time model.
In the sequel, we show that the entropy functional is independent of the time step and show how a generalisation of the BGS and Tsallis entropies for substochastic models naturally arises from our analysis.

\subsection{The embedded chain}
\label{ssec:embedded_chain}

Every stochastic matrix $\bM$  can always  be seen as a discrete Markov chain in either  discrete or continuous time. 
Indeed, $\bM$ can be seen as a transition matrix and the evolution of $\bp$ is given by Equation~\eqref{DTDMC_ee};  alternatively,  it can be seen as the so-called embedded chain, whose evolution is given by Equation~\eqref{CTDMC_ee}, see~\cite{norris1998markov}.
Let $h>0$ be a small time step and define the kernel 
\begin{equation}\label{eq:def_Mh}
\bM^{(h)}\bydef\bI+h\left(\bM-\bI\right)\ .
 \end{equation}
Let also $\bp^{(h)}(t)$ be the (piecewise constant) time-interpolation defined by the recursion
\begin{equation}\label{eq:discrete_evolution}
\bp^{(h)}(t+h)=\bM^{(h)}\bp^{(h)}(t) ,
\end{equation}
with $\bp^{(h)}(0)=\bp^\ini$.
From the fact that
\[
\bp^{(h)}(t+h)=\bM^{(h)}\bp^{(h)}(t)
\qquad \Leftrightarrow\qquad
\frac{\bp^{(h)}(t+h)-\bp^{(h)}(t)}{h}=(\bM-\bI)\bp^{(h)}(t)\ ,
\]
we see that~\eqref{eq:discrete_evolution} is of course the explicit Euler discretization of~\eqref{CTDMC_ee}.
Thus, the following convergence is not surprising:
\begin{lemma}\label{lem:disc_to_cont_time}
In the limit of small time steps we have
	\[
	\lim_{h\to 0}\bp^{(h)}(t)=\bp(t)\ ,
	\]
	where $\bp(t)$ is the solution of Equation~\eqref{CTDMC_ee} with $\bp(0)=\bp^\ini$ and the convergence is locally uniform in time.
\end{lemma}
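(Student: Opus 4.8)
The plan is to recognise the recursion~\eqref{eq:discrete_evolution} as the explicit Euler scheme for the linear system~\eqref{CTDMC_ee} and to exploit the fact that, being linear and autonomous, both the continuous and the discrete evolutions admit closed matrix-exponential (resp. matrix-power) representations. Writing $\bL := \bM - \bI$, the exact solution is $\bp(t) = e^{t\bL}\bp^{\ini}$, while at the grid points $t = nh$ the scheme~\eqref{eq:def_Mh}--\eqref{eq:discrete_evolution} gives $\bp^{(h)}(nh) = (\bI + h\bL)^n \bp^{\ini}$. Since $t = nh$ also yields $e^{t\bL} = (e^{h\bL})^n$, the whole problem reduces to comparing the two matrix powers $(\bI + h\bL)^n$ and $(e^{h\bL})^n$.

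First I would establish the local (one-step) consistency estimate. From the power series $e^{h\bL} = \bI + h\bL + \tfrac{h^2}{2}\bL^2 + \cdots$ one gets directly $\|e^{h\bL} - (\bI + h\bL)\| \le h^2\|\bL\|^2 e^{h\|\bL\|}$, so the one-step error is $O(h^2)$ in any submultiplicative matrix norm. Next I would propagate this via the telescoping identity
\[
\bA^n - \bB^n = \sum_{k=0}^{n-1} \bA^k (\bA - \bB)\, \bB^{\,n-1-k}
\]
applied to $\bA = \bI + h\bL$ and $\bB = e^{h\bL}$. The stability of both propagators follows from the elementary bounds $\|(\bI + h\bL)^k\| \le (1 + h\|\bL\|)^k \le e^{kh\|\bL\|} \le e^{t\|\bL\|}$ and $\|e^{(n-1-k)h\bL}\| \le e^{t\|\bL\|}$, valid for all $0 \le k \le n-1$. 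Combining these with the one-step estimate and using $nh = t$ gives
\[
\big\|(\bI + h\bL)^n - e^{t\bL}\big\| \le n\, e^{2t\|\bL\|}\, h^2 \|\bL\|^2 e^{h\|\bL\|} = t\, e^{2t\|\bL\|}\|\bL\|^2 e^{h\|\bL\|}\, h .
\]
This is $O(h)$ and, crucially, the bound is uniform for $t$ in any compact interval $[0,T]$, which already proves locally uniform convergence at the grid points.

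Finally I would address a general (non-grid) time $t$, recalling that $\bp^{(h)}$ is the piecewise-constant interpolation. Writing $n = \lfloor t/h \rfloor$, so that $\bp^{(h)}(t) = \bp^{(h)}(nh)$ and $0 \le t - nh < h$, I split
\[
|\bp^{(h)}(t) - \bp(t)| \le |\bp^{(h)}(nh) - \bp(nh)| + |\bp(nh) - \bp(t)| .
\]
The first term is controlled by the grid estimate above, while the second is bounded by $h\sup_{[0,T]}|\dot\bp| = h\,\|\bL\|\sup_{[0,T]}|\bp|$, using that $\bp$ is continuously differentiable. Both terms tend to $0$ uniformly on $[0,T]$ as $h \to 0$, which gives the claim. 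I do not anticipate a genuine obstacle: this is the textbook convergence of explicit Euler for a linear autonomous ODE, and the only points requiring a little care are keeping the norm bounds uniform in $t$ (to upgrade pointwise to locally uniform convergence) and the minor bookkeeping introduced by the piecewise-constant interpolation at non-grid times.
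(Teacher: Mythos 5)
Your argument is correct and follows essentially the same route as the paper, which simply invokes the standard convergence of the explicit Euler scheme and notes that with $n=\lfloor t/h\rfloor$ one has $(\bM^{(h)})^n\to\exp(t(\bM-\bI))$. You have merely filled in the details (one-step consistency, telescoping, stability, and the bookkeeping at non-grid times) that the paper delegates to a textbook reference.
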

\begin{proof}
	The convergence of the Euler scheme is a standard result, cf.~\cite{SB2002}. 
	In our particular context of linear ODEs, a simple proof consists in writing
	$n=\left\lfloor t/h\right\rfloor$, and noticing that
	$\lim_{h\to 0}\left(\bM^{(h)}\right)^n=\exp\left(t\left(\bM-\bI\right)\right)$. \qed
\end{proof}

\subsection{Dynamics and generalised entropies}
In this section we only consider admissible transition kernels, 
$\bM\in\mathcal A_{N+1,k}$.

Note that the entropy $H(\bq|\bpi)$, given by~\eqref{eq:def_entropy_BGS_discrete} as a function of the $(N+1)-k$ dimensional $Q$-process $\bq$, depends on $\bM$ only through the characteristic eigenvectors $\widetilde\bw,\widetilde\bz$ - since by definition the reference measure $\bpi=\widetilde \bw \circ \widetilde \bz$.
With this fact in mind, it is clear that changing the transition kernel from $\bM$ to $\bM^{(h)}$, for any $h>0$ will not change the entropy. More precisely,

\begin{lemma}\label{lem:caracteritic_triple_h}
Let $(\mu,\widetilde{\bw},\widetilde{\bz})$ be the characteristic triple of $\bM$ in the sense of Definition~\ref{def:characteristic_triple}, and let $\bM^{(h)}$ be the discrete-in-time transition kernel defined in~\eqref{eq:def_Mh}.
 Then $\bM^{(h)}$ has characteristic triple $(\mu_h,\widetilde{\bw},\widetilde{\bz})$, with 
 \[
 \mu_h\bydef1-h(1-\mu)\ .
\]
In particular $\bM$ and $\bM^{(h)}$ share their characteristic eigenvectors.
Moreover, $\bM^{(h)}$ is micro-reversible if and only if $\bM$ is.
Finally, if  $\bM\in\calA_{N+1,2}$ is a Kimura matrix with fixation probability $\bF$, then so is $\bM^{(h)}$.
\end{lemma}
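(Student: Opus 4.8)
The plan is to trace everything back to the single affine identity $\bM^{(h)}=(1-h)\bI+h\bM$ coming from~\eqref{eq:def_Mh}, restricted to the transient block. First I would write $\bM$ in the canonical form~\eqref{eq:def_admissible} and observe that
\[
\bM^{(h)}=\begin{pmatrix}(1-h)\widetilde{\bI}+h\widetilde{\bM}&\bzero\\ h\bA&\bI\end{pmatrix},
\]
so that the core matrix of $\bM^{(h)}$ is exactly $\widetilde{\bM}^{(h)}:=(1-h)\widetilde{\bI}+h\widetilde{\bM}$. For $0<h\le 1$ this matrix is nonnegative, its off-diagonal entries are $h\widetilde{M}_{ij}$ and hence share the zero/nonzero pattern of $\widetilde{\bM}$, so irreducibility is preserved; since $h\bA$ still has no identically zero row, $\bM^{(h)}\in\calA_{N+1,k}$ is admissible in the sense of Definition~\ref{def:ac}. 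Applying $\widetilde{\bw}^\dagger$ on the left and $\widetilde{\bz}$ on the right of $\widetilde{\bM}^{(h)}$ and using $\widetilde{\bw}^\dagger\widetilde{\bM}=\mu\,\widetilde{\bw}^\dagger$ and $\widetilde{\bM}\widetilde{\bz}=\mu\,\widetilde{\bz}$, I get immediately that $\widetilde{\bw},\widetilde{\bz}$ remain left and right eigenvectors of $\widetilde{\bM}^{(h)}$, now for the eigenvalue $\mu_h=(1-h)+h\mu=1-h(1-\mu)$.

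Next I would upgrade ``an eigenvalue'' to ``the dominant eigenvalue''. Since $\widetilde{\bM}^{(h)}$ is nonnegative and irreducible, Perron--Frobenius guarantees that its spectral radius is the only eigenvalue admitting a strictly positive eigenvector; as $\widetilde{\bw}$ and $\widetilde{\bz}$ are strictly positive by Definition~\ref{def:characteristic_triple}, it follows that $\mu_h=\rho(\widetilde{\bM}^{(h)})$. The normalisations $\langle\widetilde{\bw},\widetilde{\bone}\rangle=\langle\widetilde{\bw},\widetilde{\bz}\rangle=1$ depend only on the eigenvectors and are therefore untouched, so the characteristic triple of $\bM^{(h)}$ is precisely $(\mu_h,\widetilde{\bw},\widetilde{\bz})$; in particular $\bM$ and $\bM^{(h)}$ share their characteristic eigenvectors.

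For micro-reversibility (Definition~\ref{def:mr_mat}) I would simply compute
\[
\diag{\widetilde{\bw}}\widetilde{\bM}^{(h)}\diag{\widetilde{\bz}}=(1-h)\diag{\widetilde{\bw}\circ\widetilde{\bz}}+h\,\diag{\widetilde{\bw}}\widetilde{\bM}\diag{\widetilde{\bz}}.
\]
The first summand is diagonal and hence symmetric, so the left-hand side is symmetric if and only if $\diag{\widetilde{\bw}}\widetilde{\bM}\diag{\widetilde{\bz}}$ is (here one uses $h\neq 0$); by~\eqref{eq:generalized_detailed_balance} this is exactly the statement that $\bM^{(h)}$ is micro-reversible precisely when $\bM$ is.

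Finally, for a Kimura matrix $\bM\in\calA_{N+1,2}$ written as in~\eqref{eq:pres_kimura}, the same affine identity preserves the block structure: $\bM^{(h)}$ has absorption vectors $h\boldsymbol\alpha,h\boldsymbol\beta$, still nonnegative and nonzero, so $\bM^{(h)}$ is again of Kimura type. The fixation probability $\bF$ is characterised in Remark~\ref{rmk:cons-law} as the left eigenvector of $\bM$ for the eigenvalue $1$ normalised by $F_0=0$, $F_N=1$, i.e. $\bF^\dagger(\bM-\bI)=\bzero$; consequently $\bF^\dagger\bM^{(h)}=\bF^\dagger+h\,\bF^\dagger(\bM-\bI)=\bF^\dagger$, and since the entries of $\bF$ are unchanged, $\bF$ is still the fixation probability of $\bM^{(h)}$. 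The only point that calls for a word of care is the admissible range of $h$: one must take $0<h\le 1$ (the small-time-step regime of the Euler discretisation~\eqref{eq:def_Mh}) to keep $\bM^{(h)}$ nonnegative and stochastic; beyond ensuring this, everything is a direct consequence of the affine relation, and I expect no genuine obstacle.
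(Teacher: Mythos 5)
Your proposal is correct and follows exactly the route the paper intends: the paper's own proof is the one-line remark that everything ``easily follows from the expression~\eqref{eq:def_Mh}'', i.e. from the affine relation $\bM^{(h)}=(1-h)\bI+h\bM$, which is precisely the identity you exploit throughout. Your write-up simply makes explicit the details the authors omit (preservation of the block structure and irreducibility, the Perron--Frobenius identification of $\mu_h$ as the dominant eigenvalue, the symmetry argument for micro-reversibility, and the restriction $0<h\le 1$), all of which check out.
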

\begin{proof}
 This easily follows from the expression~\eqref{eq:def_Mh} for $\bM^{(h)}$. \qed
\end{proof}
For notational convenience, we extend the previous transient reference measure $\bpi=(\pi_i)_{i=0,\dots, N-k}$  by $k$ zeros to form the corresponding full $(N+1)$-dimensional probability measure
\[
\overline\bpi:=(\bpi,\mathbf{0}).
\]
It is also natural to extend definition~\eqref{eq:def_entropy_tsallis_discrete} to the full dynamics, i.e., by abuse of notation, using the variable $\bp$ instead of $\bq$, which are equivalent in view of Remark~\ref{rmk:equiv_pq}, we write 
\begin{equation}\label{eq:def_entropy_tsallis_p}
 G_\phi(\bp|\overline\bpi)\bydef
 \begin{cases}
 \sum\limits_{i=0}^{N-k}\phi\left(\frac{\widetilde p_i}{\widetilde z_i\langle\widetilde \bw,\widetilde \bp\rangle}\right)\widetilde w_i\widetilde z_i 
 &\text{if}\quad\langle\widetilde\bw,\widetilde\bp\rangle\ne 0,
 \\
 +\infty 
 &\text{otherwise}.
 \end{cases}
\end{equation}
This definition is completely independent of the time step $h$.

\begin{remark}
The entropy is defined on $\bq$, and therefore on the transient states $\widetilde\bp$, and not on the whole probability $\bp$.
In terms of measure theoretic considerations, our definition can be summarised as follows: performing a Lebesgue decomposition with respect to the reference measure $\overline\bpi$, i.e.
$\bp=\widetilde\bp +\bp^{\mathrm{s}}$ with $\widetilde \bp\ll\overline\bpi$ and $\bp^{\mathrm{s}}\perp \overline\bpi$,
our entropy is finite if and only if $\widetilde \bp\neq 0$.
This departs from the usual definition of relative entropies, where one usually sets the entropy to $+\infty$ if $\bp$ is not absolutely continuous w.r.t. the reference measure $\overline\bpi$ (i.e. if the singular part $\bp^{\mathrm{s}}\neq 0$).
Here we have a whole freedom between those two scenarios, and our entropy takes finite values even for $\bp$'s that are partially absorbed with both $\widetilde\bp\neq 0$ and $\bp^{\mathrm{s}}\neq 0$.
Note also that our reference measure $\overline\bpi$ does not have full support.
\end{remark}

For the reader aiming to apply the previous results in the traditional (i.e., discrete time) Moran process, we state the entropy inequality  in the natural variables.
Namely, taking as primary variables the probability distribution $\bp$ and considering generalised entropies, we state the following result.
\begin{proposition}[Discrete monotonicity of generalised entropies]\label{lem:decrasing}
Let $\bM\in\mathcal A_{N+1,k}$ be an admissible and micro-reversible transition kernel, let $\phi:\R^+\to\R$ be a convex function, and let $G_\phi(\cdot|\overline\bpi)$ be the associated generalised entropy from Definition~\ref{eq:def_entropy_tsallis_discrete}. 
Then the one-step monotonicity
\[
G_\phi(\bM^{(h)}\bp|\overline\bpi)\le G_\phi(\bp|\overline\bpi)
\]
 holds for any sufficiently small time step $h$ (depending only on $\bM$ but not on $\bp$).
 Furthermore, if $\phi(x)\ge x-1$, then $G_\phi(\bp|\overline\bpi)\ge 0$ for any probability vector $\bp$.
\end{proposition}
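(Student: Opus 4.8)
The plan is to push the single step $\bp\mapsto\bM^{(h)}\bp$ through the change of variables of Subsection~\ref{subsec:Q_process} and reduce the whole statement to a Jensen-type inequality for a reversible $Q$-process kernel. First I would invoke Lemma~\ref{lem:caracteritic_triple_h}: $\bM^{(h)}$ is again admissible and micro-reversible and shares the characteristic eigenvectors $\widetilde\bw,\widetilde\bz$ with $\bM$, only its spectral radius changing to $\mu_h=1-h(1-\mu)$. Consequently the reference measure $\bpi=\widetilde\bw\circ\widetilde\bz$ and the functional $G_\phi(\cdot\,|\,\overline\bpi)$ of~\eqref{eq:def_entropy_tsallis_p} are literally unchanged when passing from $\bM$ to $\bM^{(h)}$, so it suffices to track how the density variable $\bQ$ of~\eqref{eq:rescaling} transforms under one step.

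Next I would compute the $Q$-process kernel of $\bM^{(h)}$ from~\eqref{eq:def_K}. Writing $\widetilde\bM^{(h)}=\bI+h(\widetilde\bM-\bI)$ and using $\widetilde\bM^\dagger\widetilde\bw=\mu\widetilde\bw$, a short manipulation exhibits it as the convex combination
\[
\bK^{(h)}=\frac{(1-h)\bI+h\mu\,\bK}{\mu_h},
\]
which for $h$ small enough (so that $\bM^{(h)}$, hence $\bK^{(h)}$, has nonnegative entries — this is the only role played by the smallness of $h$) is a genuine row-stochastic kernel, reversible with respect to $\bpi$ by Lemma~\ref{lem:Qproc_is_reversible} applied to $\bM^{(h)}$.

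The crux of the argument, and the step I expect to be the main obstacle, is to verify that one discrete step on $\bp$ is intertwined with one application of $\bK^{(h)}$ on $\bQ$, that is $\bQ'=\bK^{(h)}\bQ$. Using the block structure~\eqref{eq:def_admissible} the transient part updates as $\widetilde\bp\mapsto\widetilde\bp'=\widetilde\bM^{(h)}\widetilde\bp$, the absorbing rows not feeding back; the weighted mass rescales cleanly as $\langle\widetilde\bw,\widetilde\bp'\rangle=\mu_h\langle\widetilde\bw,\widetilde\bp\rangle$ because $\widetilde\bw$ is a left eigenvector; and substituting $\widetilde p_i=Q_i\,\widetilde z_i\langle\widetilde\bw,\widetilde\bp\rangle$ into the definition of $\bQ'$ produces, after symmetrising via the micro-reversibility identity~\eqref{eq:generalized_detailed_balance}, exactly $\bK^{(h)}\bQ$. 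This is precisely where the transposition between the $\bq$- and $\bQ$-pictures together with detailed balance must be handled carefully, and it is exactly what would fail without micro-reversibility. (If $\langle\widetilde\bw,\widetilde\bp\rangle=0$ then $\widetilde\bp=\bzero$ is preserved under the step, both sides equal $+\infty$, and there is nothing to prove.)

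With the intertwining in hand the monotonicity is immediate. Since $\bK^{(h)}$ is row-stochastic, each $(\bK^{(h)}\bQ)_i=\sum_j K^{(h)}_{ij}Q_j$ is a convex combination, so Jensen's inequality for the convex $\phi$ gives $\phi\big((\bK^{(h)}\bQ)_i\big)\le\sum_j K^{(h)}_{ij}\phi(Q_j)$; weighting by $\pi_i$, summing over $i$, and exchanging the order of summation using reversibility $K^{(h)}_{ij}\pi_i=K^{(h)}_{ji}\pi_j$ together with $\sum_iK^{(h)}_{ji}=1$ collapses the right-hand side to $\sum_j\phi(Q_j)\pi_j=G_\phi(\bp|\overline\bpi)$, yielding $G_\phi(\bM^{(h)}\bp|\overline\bpi)\le G_\phi(\bp|\overline\bpi)$. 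The nonnegativity claim follows exactly as at the end of Lemma~\ref{lem:entropy_CD}: if $\phi(x)\ge x-1$ then $G_\phi(\bp|\overline\bpi)\ge\sum_i(Q_i-1)\pi_i=\sum_iq_i-\sum_i\pi_i=1-1=0$.
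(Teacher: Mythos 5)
Your proof is correct and follows essentially the same route as the paper's: the Jensen weights $\widetilde w_j M^{(h)}_{ji}/(\mu_h\,\widetilde w_i)$ that the paper introduces via micro-reversibility are exactly the entries of your kernel $\bK^{(h)}$, so your intertwining step $\bQ\mapsto\bK^{(h)}\bQ$ is the paper's chain of equalities in disguise. The only difference is bookkeeping: you name the one-step $Q$-process kernel explicitly and then invoke the standard convexity argument for a $\bpi$-reversible row-stochastic kernel, while the paper carries out the same computation directly in the $\bp$ variable.
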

\begin{proof}
Let $(\mu,\widetilde\bw,\widetilde\bz)$ be the characteristic triple of $\bM$ and $h\in(0,1/(1-\mu))$.
If $G_\phi(\bp|\overline\bpi)=+\infty$ there is nothing to prove, hence we only consider $\langle\widetilde\bw,\widetilde\bp\rangle >0$.
Let us first prove that $\bM^{(h)}\bp$ has finite entropy, i.e. $\langle\widetilde \bw, \widetilde{\bM^{(h)}\bp}\rangle>0$ as well.
To this end we first extend $\widetilde\bw$ to $\overline\bw$ by zeros.
Then, since $\langle \widetilde\bw,\widetilde\bp\rangle=\langle\overline\bw,\bp\rangle=\overline\bw^\dagger\bp$ for any $\bp$, we get
\begin{equation}
\label{eq:discrete_exponential_decay_transient}
 \langle\widetilde \bw, \widetilde{\bM^{(h)}\bp}\rangle
 = \overline\bw^\dagger\bM^{(h)}\bp
 = \mu_h\overline\bw^\dagger \bp
 = \mu_h\langle\widetilde\bw,\widetilde\bp\rangle >0
\end{equation}
if the time step $h\in(0,1/(1-\mu))$ is small enough to guarantee $\mu_h=1-h(1-\mu)>0$.
This proves that $\bM^{(h)}\bp$ cannot be completely absorbed if $\bp$ is not, hence $G_\phi(\bM^{(h)}\bp|\overline\bpi)<\infty$.

The remainder of the proof follows from Lemma~\ref{lem:caracteritic_triple_h}, identity~\eqref{eq:discrete_exponential_decay_transient}, and the convexity of $\phi$:
\begin{align*}
 G_\phi(\bM^{(h)}\bp|\overline\bpi)&=\sum_{i=0}^{N-k}\phi\left(\frac{(\widetilde{\bM^{(h)}\bp})_i}{\widetilde{z_i}\langle\widetilde{\bw},\widetilde{\bM^{(h)}\bp}\rangle}\right)\widetilde{w_i}\widetilde{z_i}\\
 &=\sum_{i=0}^{N-k}\phi\left(\frac{\sum_{j=0}^{N}M^{(h)}_{ij}p_j}{\widetilde{z_i}\mu^{(h)}\langle\widetilde{\bw},\widetilde{\bp}\rangle}\right)\widetilde{w_i}\widetilde{z_i}\\
 &=\sum_{i=0}^{N-k}\phi\left(\sum_{j=0}^{N}\frac{\widetilde{w}_jM^{(h)}_{ji}}{\mu^{(h)}\, \widetilde{w_i}}\frac{p_j}{\widetilde{z}_j\langle\widetilde{\bw},\widetilde{\bp}\rangle}\right)\widetilde{w_i}\widetilde{z_i}\\
&\le\sum_{i=0}^{N-k}\sum_{j=0}^{N}\frac{\widetilde{w}_jM^{(h)}_{ji}}{\mu^{(h)}\, \widetilde{w}_i}\phi\left(\frac{p_j}{\widetilde{z}_j\langle\widetilde{\bw},\widetilde{\bp}\rangle}\right)\widetilde{w_i}\widetilde{z_i}\\
 &=\sum_{j=0}^{N}\phi\left(\frac{p_j}{\widetilde{z}_j\langle\widetilde{\bw},\widetilde{\bp}\rangle}\right)\widetilde{w}_j\widetilde{z}_j
 = G_\phi(\bp|\overline{\bpi})\ .
\end{align*}
\qed
\end{proof}

\begin{remark}
Choosing $\phi(x)=x\log x\geq x-1$ we recover the previous expression~\eqref{eq:def_entropy_BGS_discrete}, providing a generalisation of the relative Boltzmann-Gibbs-Shannon entropy for reducible processes.
On the other hand, choosing $\phi(x)=\frac{x^m-x}{m-1}\geq x-1$ for exponents $m>1$ also provides a generalisation of the relative Tsallis entropies for reducible kernels, i.e.
\[
G_{\text{Tsallis}}(\bp|\overline\bpi)
=\sum_{i=0}^{N-k} \frac{\left(\frac{p_i}{\widetilde z_i\langle\widetilde \bw,\widetilde\bp\rangle}\right)^m-\frac{p_i}{\widetilde z_i\langle\widetilde{\bw},\widetilde{\bp}\rangle}}{m-1}\widetilde w_i\widetilde z_i
=\frac{1}{m-1}\left[\sum_{i=0}^{N-k}\frac{\widetilde w_ip_i^m}{\widetilde z_i^{m-1}\langle\widetilde \bw,\widetilde \bp\rangle^m}-1\right]\ .
\]

When $\bM$ is irreducible we have $k=0$, $\widetilde\bw=\bone$, and the reference measure is $\overline\bpi=\widetilde\bz$: in this case our generalised Tsallis entropies coincides with that in \cite{Furuichi_etal_2004}.
Let us also point out that, at the continuous level $x\in[0,1]$, both the relative Kullback-Leibler divergence $\mathcal H(\rho|\pi)=\int\rho\log(\rho/\pi)\rd x=\int \rho\log\rho\,\rd x+\int \rho V\,\rd x$ (with Gibbs distribution $\pi(x)=e^{-V(x)}/Z$) and the Tsallis entropies $\mathcal E_m(\rho)=\int \frac{\rho^m-\rho}{m-1}\rd x$ play a particular role in continuous optimal transport, since the Fokker-Planck Equation $\partial_t\rho=\Delta\rho+\dive(\rho\nabla V)$ and the Porous Medium Equation $\partial_t\rho=\Delta\rho^m$ can be viewed as their respective Wasserstein gradient flows, see \cite{Jordan_Kindeleherer_Otto,otto2001geometry,santambrogio2015optimal,Villani_Topics} and the next section.
\end{remark}
 
 \begin{remark}
  Results from Lemma~\ref{lem:entropy_CD} apply for Tsallis entropies in the range $m\in(1,2]$.
  This particular range seems to be of relevance to applications in ecology, as one possible interpretation of $m$ is as an interpolation parameter between two well established diversity measures for populations: the Shannon-Wiener index (the BGS entropy, in our notation) in the limit $m\to 1$ and the Simpson index, defined as the limit $m\to2$ of the Tsallis entropies; see~\cite{Xuan_etal}.
 \end{remark}
 
\section{Continuous time, continuous Markov chains}\label{sec:CTCMC}
In this section we take interest in the gradient flow formulation of the Kimura Equation~\eqref{eq:KimuraPDE}. The relation between continuous and discrete models will be investigated in Section~\ref{sec:gfcmp}, but the reader may want to keep in mind the following picture:
The Kimura Equation is the continuous counterpart of the Moran process, where $x\in[0,1]$ plays the role of $i/N$, i.e., the fraction of individuals of a given type in the population.
Although motivated by this large population limit, we focus in this section on a self-contained presentation of a slightly more general setting than the specific Kimura PDE \eqref{eq:KimuraPDE}.

We start by defining this generalised Kimura Equation in Subsection~\ref{subsec:background}.
We continue with the exploration of the two ingredients of the gradient flow: an entropy in Subsection~\ref{sec:continuous_entropy} and a distance in Subsection~\ref{subsec:Wasserstein}.
We then precisely define the gradient flow for the Kimura Equation in Subsection~\ref{ssec:Grad_Flow_form}.

\subsection{Background}\label{subsec:background}

The Kimura Equation is a particular example of a 
stochastic differential equation of the form
\[
\rd X_t =\sigma(X_t)\rd W_t +\mu(X_t)\rd t\ ,
\]
$W_t$ denoting the standard one-dimensional Brownian, cf.~\cite{EthierKurtz,champagnat2006unifying} and the study of this class is of paramount importance in population genetics, see e.g.~\cite{ewens2004mathematical}.

One particular feature of the models that we are interested in is that the dispersion coefficient $\sigma(x)$ is positive in the interior $x\in(0,1)$, but with $\sigma(0)=\sigma(1)=0$, representing two absorbing states on the boundary. 

For the resulting PDEs, the presence of those two absorbing states makes the notion of measure-valued solutions natural and actually necessary, and the probability laws typically take the form
\begin{equation}
\label{eqn:pform}
 p(t)=a(t)\delta_0+\widetilde p(t,x)\rd x+b(t)\delta_1.
\end{equation}
\begin{remark}
 In the same spirit as in the previous sections, the continuous part $\widetilde p(t,x)$ corresponds to the previous transient densities $N\widetilde{\bp}=(N p_i)_{i=1,\dots, N-1}$, while the boundary contributions $a\delta_0,b\delta_1$ correspond to the previous absorbed states $p_0,p_N$. 
\end{remark}
In this work, we call the Kimura Equation a generalised version of Equation~\eqref{eq:KimuraPDE}, i.e,
given by
\begin{equation}
\label{eq:Kimura}
\partial_tp=\frac{\kappa}{2}\partial_{xx}^2\left(\degdiff(x)p\right)+\partial_x\left(\degdiff(x) \partial_x V(x)\,p\right)\ ,
\qquad
t>0, \, x\in [0,1]
\end{equation}
where $\kappa>0$ is a diffusion coefficient and $V:[0,1]\to\R$ is the gradient potential as defined in Subsection~\ref{subsec:state of art}.
Furthermore, 
\begin{equation}
\label{eq:assumptions_theta}
\left\{
\begin{array}{l}
\degdiff(x)>0  \quad \text{for }x\in (0,1)\ ,\\
\degdiff(0)=0 \quad \text{with }\partial_x\degdiff(0)>0\ ,\\
\degdiff(1)=0 \quad \text{with }\partial_x\degdiff(1)<0\ .\\
\end{array}
\right .
\end{equation}
The typical case arising e.g. in \eqref{eq:KimuraPDE} is $\degdiff(x)=x(1-x)$, see Subsection~\ref{subsec:state of art}.
The fact that the zeroes are simple is crucial: the random motion is strong enough to counteract the deterministic advection driven by the velocity field $-\degdiff(x)\partial_x V$, and hence the trajectories are absorbed in finite time almost surely.
In terms of PDEs, the diffusion is locally uniform in any compact set  $K\subset (0,1)$, but degenerates at the boundaries.

As is common, we shall refer to the operator
\begin{equation}
\label{eq:def_L_forward}
\mathcal{L}_\kappa p\bydef\frac{\kappa}{2}\partial_{xx}^2\left(\degdiff  p\right)+\partial_x\left(\degdiff  \partial_x V\,p\right)
\end{equation}
in~\eqref{eq:Kimura} as the forward operator, while we speak of its formal adjoint
\begin{equation}
\label{eq:def_L_backward}
\mathcal{L}_\kappa^\dagger \zeta=\degdiff \Big [\frac\kappa 2 \partial^2_{xx}\zeta - \partial_x V \partial_x\zeta\Big]
\end{equation}
as the backward operator.
If not required from the context, the index $\kappa$ will be omitted from the operators $\mathcal{L}$ and $\mathcal{L}^\dagger$.

According to \eqref{eq:Kimura} it is clear that the interior density $\tilde p(t,x)$ in \eqref{eqn:pform} evolves according to
\[
\partial_t\tilde p=\mathcal L\tilde p\ ,
\ \mbox{for}\ t>0,\,\,x\in(0,1).
\]
However, because the diffusion is degenerate $\degdiff(0)=\degdiff(1)=0$, the evolution problem~\eqref{eq:Kimura} cannot be supplemented with standard boundary conditions as usual.
Additional conservation laws must be used instead to make sense of the Cauchy problem, and those are reminiscent from the discrete conservation laws~\eqref{eq:discrete_conservation_laws} in Section~\ref{sec:CTDMC}. 
In \cite{ChalubSouza09b,ChalubSouza14a} two of the authors studied the forward equation, and obtained a characterisation of the boundary measures in terms of those conservation laws.
More precisely, let $F$ be the solution of 
\begin{equation}
\label{eq:def_fixation_F_PDE}
\left\{
\begin{array}{ll}
\frac{\kappa}{2}\partial_{xx}^2 F - \partial_x V\,\partial_xF=0 & \mbox{for }x\in (0,1),\\
 F(0)=0,\\
 F(1)=1\ .
\end{array}
\right.
\end{equation}
\begin{remark}
\label{rmk:fixation_continuous}
This eigenfunction $F(x)$ is known as the \emph{fixation probability}, which encodes the probability of the population ending with a homogeneous population of type \A, when starting from an initial ratio $x$ (hence $F(0)=0$ and $F(1)=1$ for the two absorbing states) -- see also remark~\ref{rmk:discrete_fixation} at the discrete level.
\end{remark}
Imposing two additional conservation laws for the total mass and fixation
\begin{equation}\label{eq:conservation_laws}
 \frac{\rd\,}{\rd t}\int_{[0,1]}1\,\rd p_t(x)=0\quad\text{and}\quad\frac{\rd\,}{\rd t}\int_{[0,1]} F(x)\,\rd p_t(x)=0
\end{equation}
makes Equation~\eqref{eq:Kimura} well-posed \cite{Chalub_Souza:CMS_2009}, given an initial condition $p^\ini\in\mathcal{BM}^+([0,1])$ -- the space of positive Radon measures in $[0,1]$.
We should stress that both integrals are computed in the closed interval $x\in [0,1]$, since the measure $p(t)$ 
may -- and typically does -- charge the boundaries.
With those conservation laws newly enforced, the resulting Cauchy problem is studied in \cite{Chalub_Souza:CMS_2009} by means of (weighted, singular) Sturm-Liouville theory.
There, it was proved that the transient component ($\widetilde{p}$ in Equation~\eqref{eqn:pform}) is smooth up to the boundary, $\widetilde p\in C^\infty(0,T;C^{k+1}[0,1])$ if $\degdiff,V\in C^k([0,1])$.
More importantly, the two global conservation laws~\eqref{eq:conservation_laws} are equivalent to the two local flux conditions
\[
a'(t)=\degdiff'(0)\widetilde p(t,0)
\quad \mbox{and}\quad 
b'(t)=-\degdiff'(1)\widetilde p(t,1)
\qquad\forall t>0,
\]
driving the loss of mass from the continuous inner (transient) component $\tilde p(t,x)\rd x$ towards the boundary (absorbing) points $a\delta_0+b\delta_1$. These flux conditions can be heuristically obtained by inserting the representation for $p$ given in Equation~\eqref{eqn:pform} into the conservation laws given by Equation~\eqref{eq:conservation_laws}:
\[
0=\frac{\rd}{\rd t}\int_{[0,1]}p =
\frac{\rd}{\rd t}\left(
a(t)+\int_0^1\tilde p(t,x)\,\rd x + b(t)
\right)
=
a'(t) + \int_0^1 \partial_t\tilde{p} + b'(t)
\]
and
\[
0= \frac{\rd}{\rd t}\int_{[0,1]}pF=
\frac{\rd}{\rd t}\left(
a(t)F(0)+\int_0^1 \tilde p(t,x)F(x)\,\rd x + b(t)F(1)
\right)
=
\int_0^1 \partial_t\tilde{p}F + b'(t)
\]
where we used that the fixation $F(x)$ satisfies by definition $F(0)=0$ and $F(1)=1$.
Using the evolution law $\partial_t\tilde p=\mathcal L \tilde p$ for the smooth interior part, the stationary elliptic equation \eqref{eq:def_fixation_F_PDE} for the fixation $F$, and integrating by parts in the inner integrals then gives
\begin{align*}
0&=a'(t) +\degdiff'(1) \tilde{p}(1,t)- \degdiff'(0)\tilde{p}(0,t)+ b'(t)\\
0&=\degdiff'(1) \tilde{p}(1,t) + b'(t)\ .
\end{align*}
For more details, see \cite{Chalub_Souza:CMS_2009}.

\subsection{Entropy}
\label{sec:continuous_entropy}

Observe that we introduced the discrete entropy~\eqref{eq:def_entropy_BGS_discrete} in order to cope with the irreducibility of the relevant Markov chain.
Therefore, we expect that simple definitions of the entropy (such as, e.g., $\int p\log p$) should not work either in the continuous setting.
In fact, our construction in the discrete case crucially relied upon the (sub)dominant characteristic triple $(\mu,\widetilde \bw,\widetilde \bz)$ of $\bM$ (definition~\ref{def:characteristic_triple}).
Therefore, our first step will be to introduce the corresponding continuous objects:
\begin{definition}[Continuous characteristic triple]
\label{def:characteristic_triple_continuous}
The \emph{characteristic triple} is the triple $(\lambda,w,z)$ defined as the principal eigenvalue/eigenfunctions $-\mathcal Lz=\lambda z$ and $-\mathcal L^\dagger w =\lambda w$ with homogeneous Dirichlet boundary conditions, i-e
\begin{equation}
\label{eq:def_principal_eigenvalue_forward}
 \left\{
\begin{array}{l}
 -\frac{\kappa}{2}\partial^2_{xx}(\degdiff z)-\partial_x\left(\degdiff \,z\,\partial_x V\right)=\lambda z \quad \mbox{ for }x\in (0,1)\\
 \lim\limits_{x\to\{0,1\}}\degdiff(x)z(x)=0
 \end{array}
 \right.
\end{equation}
and
\begin{equation}
\label{eq:def_principal_eigenvalue_backward}
\left\{
\begin{array}{l}
 -\frac{\kappa}{2}\degdiff \partial^2_{xx}w+\degdiff \partial_x V \partial_xw=\lambda w \quad \mbox{ for }x\in (0,1)\\
 w(0)=w(1)=0
 \end{array}
 \right.
 .
\end{equation}
We always choose $w,z$ to be positive in $(0,1)$ and normalised as
\begin{equation*}
 \int _0^1 z(x)\rd x=1,
 \qquad 
 \int_0^1 w(x)z(x)\rd x=1.
\end{equation*}
\end{definition}
The fact that those principal eigenfunctions/values are well-defined follows from standard spectral theory, after observing that the above Sturm-Liouville problems are of \emph{limit-circle-non-oscillatory type}, see e.g. \cite{Zettl_2005}.
The eigenvalue $\lambda>0$ will quantify the exponential decay of $\widetilde p (t)$.
Just as in the discrete case, we define the new reference measure
\begin{equation}\label{eq:def_pi_continuous}
\pi(x):=w(x)z(x)
\quad 
\mbox{and}
\quad
\pi=\pi(x)\rd x\ .
\end{equation}
We slightly abuse the notations and identify the measure $\pi$ to its density $\pi(x)$ with respect to the Lebesgue measure.
Equation~\eqref{eq:def_pi_continuous} should be be compared to the discrete counterpart~\eqref{eq:def_pi_discrete}.
Our normalisation of $z$ and $w$ yields $\int_0^1 \rd \pi(x)=1$, and we view $\pi\in\mathcal P(0,1)$ as a reference probability measure.
The measure $\pi$ only charges $x\in (0,1)$, but should in fact be viewed as a probability measure in the whole underlying space $[0,1]$.
A useful information on this reference measure will be
\begin{lemma}\label{lem:ratio_continuous}
\label{lem:pi=wz}
With the same notations, there holds
\begin{equation}
\label{eq:pi=wz}
w(x)z(x)=\frac{1}{C}\frac{w^2(x)e^{-2V(x)/\kappa}}{\degdiff(x)},
\end{equation}
where  $C=\int_0^1\frac{w^2(x)\e^{-2V(x)/\kappa}}{\degdiff(x)}\rd x>0$ is a normalising constant
such that $\int\rd\pi(x)=1$.
 \end{lemma}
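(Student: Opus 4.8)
The plan is to show that the explicit function $\zeta(x)\bydef\frac{\e^{-2V(x)/\kappa}}{\degdiff(x)}\,w(x)$ solves \emph{exactly} the same principal forward eigenvalue problem~\eqref{eq:def_principal_eigenvalue_forward} as $z$; by simplicity of the principal eigenpair this forces $z=c\,\zeta$ for a unique constant $c>0$, whence $\pi=wz=c\,w^2\e^{-2V/\kappa}/\degdiff$, with $c$ pinned to $1/C$ by the normalisation $\int_0^1\rd\pi=1$. As a first step I would recast the backward equation~\eqref{eq:def_principal_eigenvalue_backward} in self-adjoint form. Using the elementary identity $\frac{\kappa}{2}\partial^2_{xx}\phi-\partial_xV\,\partial_x\phi=\frac{\kappa}{2}\e^{2V/\kappa}\partial_x\!\big(\e^{-2V/\kappa}\partial_x\phi\big)$, the equation $-\mathcal L^\dagger w=\lambda w$ becomes, in the interior $(0,1)$ where $\degdiff>0$,
\begin{equation*}
\partial_x\!\big(\e^{-2V/\kappa}\,w'\big)=-\frac{2\lambda}{\kappa}\,\frac{\e^{-2V/\kappa}}{\degdiff}\,w,
\end{equation*}
which already exposes the Sturm--Liouville weight $\e^{-2V/\kappa}/\degdiff$ appearing in the claimed formula~\eqref{eq:pi=wz}.

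The heart of the argument is a direct computation of $\mathcal L\zeta$ from~\eqref{eq:def_L_forward}. Writing $g\bydef\e^{-2V/\kappa}$, one has $\degdiff\zeta=gw$ and, since $\partial_xV\,g=-\frac{\kappa}{2}g'$, also $\degdiff\,\partial_xV\,\zeta=-\frac{\kappa}{2}g'w$. Expanding $\frac{\kappa}{2}\partial^2_{xx}(gw)$ and $\partial_x\!\big(-\frac{\kappa}{2}g'w\big)$ and adding, the two $g''w$ terms cancel and the surviving terms telescope into
\begin{equation*}
\mathcal L\zeta=\frac{\kappa}{2}\,\partial_x\!\big(\e^{-2V/\kappa}\,w'\big).
\end{equation*}
Substituting the self-adjoint identity from the previous paragraph yields $\mathcal L\zeta=-\lambda\zeta$, i.e. $-\mathcal L\zeta=\lambda\zeta$. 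For the boundary condition I would note that $\degdiff\zeta=\e^{-2V/\kappa}w\to0$ at $x=0,1$ because $w(0)=w(1)=0$ and $\e^{-2V/\kappa}$ is bounded, so $\zeta$ satisfies $\lim_{x\to\{0,1\}}\degdiff(x)\zeta(x)=0$; moreover $\zeta>0$ on $(0,1)$ since $w>0$ there.

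Since $\zeta$ is then a positive solution of~\eqref{eq:def_principal_eigenvalue_forward} with the same eigenvalue $\lambda$, the simplicity of the principal eigenpair (Krein--Rutman/Perron--Frobenius for the forward problem, guaranteed by the spectral theory invoked after Definition~\ref{def:characteristic_triple_continuous}) gives $z=c\,\zeta$ for a unique $c>0$. Therefore $\pi=wz=c\,\frac{w^2\e^{-2V/\kappa}}{\degdiff}$, and imposing $\int_0^1\rd\pi=1$ forces $c=1/C$ with $C=\int_0^1\frac{w^2\e^{-2V/\kappa}}{\degdiff}\rd x$, which is precisely~\eqref{eq:pi=wz}.

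The main obstacle is the telescoping computation $\mathcal L\zeta=\frac{\kappa}{2}\partial_x\!\big(\e^{-2V/\kappa}w'\big)$: it is really the statement that the forward operator is conjugate, through the weight $\e^{-2V/\kappa}/\degdiff$, to the backward operator — an intertwining (ground-state) transform — but care is required because the diffusion is degenerate. All manipulations are legitimate only on $(0,1)$ where $\degdiff>0$ and $\zeta$ is smooth (by interior regularity, $w$ being smooth and $\degdiff,V$ smooth), whereas at the endpoints $z$ itself blows up like $1/\degdiff$. The correct boundary requirement is hence the flux-type condition $\degdiff z\to0$ rather than any pointwise bound on $z$, and it is exactly this that $\zeta$ inherits from the Dirichlet condition $w(0)=w(1)=0$.
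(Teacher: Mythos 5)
Your proposal is correct and follows essentially the same route as the paper: the paper defines $Z(x)=\frac{w(x)\e^{-2V(x)/\kappa}}{\degdiff(x)}$, asserts via ``straightforward algebra'' that $-\mathcal L Z=\lambda Z$, checks the boundary condition $\degdiff Z\to 0$ and positivity, and concludes $Z=Cz$ by uniqueness of the principal eigenfunction. You have merely spelled out the intertwining computation $\mathcal L\zeta=\frac{\kappa}{2}\partial_x\bigl(\e^{-2V/\kappa}w'\bigr)$ that the paper leaves implicit, and your algebra checks out.
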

\begin{proof}
Defining $Z(x):=\frac{w(x)\e^{-2V(x)/\kappa}}{\degdiff(x)}$ and exploiting $-\mathcal L^\dagger w=\lambda w$, straightforward algebra leads to $-\mathcal L Z=\lambda Z$.
Since $ Z>0$ satisfies the boundary conditions $\lim\limits_{x\to\{0,1\}}\degdiff(x) Z(x)=0$, we see by uniqueness of the principal eigenfunction in Definition~\ref{def:characteristic_triple_continuous} that necessarily $Z(x)=Cz(x)$ for some constant $C>0$.
Hence $wz =\frac{1}{C}w Z$ takes the desired form, and the value of $C>0$ follows from our normalisation $\int \rd\pi =1$.\qed
\end{proof}

Standard results from measure theory allow to decompose any probability $p\in\mathcal P([0,1])$ as
\[
p=a\delta_0 +\widetilde p + b\delta_1
\qquad \mbox{with }\widetilde p=p\measurerestr{(0,1)}.
\]
Whenever $\widetilde p=\widetilde p(x)\rd x$ is absolutely continuous and the $w$-weighted mass
\[
\langle w,\widetilde p\rangle :=\int_0^1 w(x)\widetilde p(x)\,\rd x
\]
is non-zero, we define the renormalised transient probability measure
\begin{equation}
\label{eq:def_renormalized_q_continuous}
q(x)\rd x
\bydef
\frac{w(x)\widetilde p(x)}{\langle w,\widetilde p\rangle}\rd x.
\end{equation}
Mimicking our definition in the discrete case, we define the relative entropy by setting
\begin{equation}
\label{eq:entropy_continuous} 
\mathcal H(q|\pi):=
\left\{
\begin{array}{ll}
\displaystyle{\frac {\kappa}{2} \int_0^1\frac{\rd q}{\rd \pi}(x) \log \left( \frac{\rd q}{\rd \pi}(x)\right)\rd\pi(x)} & \mbox{if }q\ll \pi,\\
 +\infty & \mbox{otherwise}.
\end{array}
\right.
\end{equation}
Whenever $q\ll \pi$ we have of course $\frac{\rd q}{\rd\pi}(x)=\frac{q(x)}{\pi(x)}$.
For the dynamics, the renormalised $q(t,x)$ variable will be derived from the original $p(t,x)$ probability through~\eqref{eq:def_renormalized_q_continuous}, but we will in fact use $q$ as a primary variable/unknown as before in the discrete setting.
Formula~\eqref{eq:entropy_continuous} accordingly defines the entropy on the whole space $q\in\mathcal P(0,1)$, whether $q$ actually arose from some $p$ or not.
Moreover, the $\kappa/2$ factor appears in~\eqref{eq:entropy_continuous} due to the particular diffusion scaling in~\eqref{eq:Kimura}.
The reference measure $\pi=\pi_\kappa$ and the entropy functional $\H=\H_\kappa$ both depend on $\kappa$.
We shall in fact take the so-called \emph{deterministic limit} $\kappa\to 0$ later on, but we dispense at this stage from tracking the $\kappa$-dependence.

Just as in the discrete case, one can define more general entropies in terms of the original $p$ measure, which we called previously the \emph{substochastic} or \emph{reducible} entropies.
Similarly to~\eqref{eq:def_entropy_tsallis_discrete} and~\eqref{eq:def_entropy_tsallis_p}, if $\phi:\R^+\to\R $ is a convex, superlinear function, those read
\begin{equation}\label{eq:alternative_form_entropy_q}
\mathcal G_\phi(q|\pi)
:=
\int_{(0,1)}\phi\left(\frac{\rd q}{\rd \pi}(x)\right)\rd\pi(x)\ .
\end{equation}
The above expression should be understood to be $+\infty$ whenever $q\not\ll\pi$, and makes sense for general $q$.
However, when $q=\frac{w\widetilde{p}}{\langle w,\widetilde{p}\rangle}$ is obtained as the $Q$-process corresponding to some $p$ with $\langle w,\widetilde{p}\rangle\ne0$ (which propagates from $t=0$ to later times as in the discrete case, see also \eqref{eq:deff_p_q_cont} below), and recalling that $\pi(x)=w(x)z(x)$, we abuse the notations and also express this same entropy in terms of the original $p$ measure as
\begin{equation}
\label{eq:alternative_form_entropy_continuous}
\mathcal G_\phi(p|\pi)
:=
\int_{(0,1)}\phi\left(\frac{\widetilde p(x)}{z(x)\langle w,\widetilde p\rangle}\right) w(x)z(x)\,\rd x.  
 \end{equation}
The continuous BGS entropy~\eqref{eq:entropy_continuous} corresponds of course to the particular choice $\phi(\eta)=\frac\kappa 2\eta\log \eta$, and one can also consider Tsallis entropies $\phi(\eta)=\frac{\kappa}{2}\frac{\eta^m-\eta}{m-1}$. 

\subsection{The Wasserstein distance}
\label{subsec:Wasserstein}
In this section we introduce a suitable Wasserstein distance in the space of probability measures $\mathcal P([0,1])$ that will allow to write the Kimura Equation as a gradient flow.
Due to the presence of the variable coefficient $\degdiff(x)$ in~\eqref{eq:Kimura}, this quadratic Wasserstein distance  will not be based on the usual Euclidean distance, but will rely instead upon viewing the underlying $\Omega=(0,1)$ as a suitable Riemannian manifold.
More precisely, we consider the Riemannian metric with scalar product on the tangent plane $T_x\Omega$ at a point $x\in\Omega$ induced by $\frac{1}{\degdiff(x)}$, namely the norm of a tangent vector $\zeta\in T_x\Omega$ is defined as
\[
|\zeta|^2_{T_x\Omega}:=\frac{|\zeta|^2}{\degdiff(x)}.
\]
The induced generalised Shahshahani distance is
\begin{equation}\label{eq:distance}
 \mathsf{d}^2(x,y)
 \bydef
 \mathop{\mathrm{inf}}_{\substack{\xi\in C^1([0,1];\Omega) \\ \xi(0)=x,\,\xi(1)=y}}\int_0^1|\xi'(t)|^2_{T_{\small \xi(t)}\Omega}\,\rd t
 =
 \mathop{\mathrm{inf}}_{\substack{\xi\in C^1([0,1];\Omega) \\ \xi(0)=x,\,\xi(1)=y}}\int_0^1\frac{|\xi'(t)|^2}{\degdiff(\xi(t))}\rd t
\end{equation}
for $x,y\in\Omega$.
As can be expected, this distance is well-behaved:

\begin{lemma}\label{lem:expression_distance}
Assume that $\degdiff$ satisfies~\eqref{eq:assumptions_theta}.
Then the infimum in~\eqref{eq:distance} is achieved for a unique constant-speed geodesic, $\mathsf{d}$ can be uniformly extended to $\overline\Omega^2$, and
 \begin{equation}
\label{eq:dist_explicit}
  \mathsf{d}(x,y)=\left|\int_x^y\frac{\rd u}{\sqrt{\degdiff(u)}}\right|
  \qquad \forall\ x,y\in[0,1].
 \end{equation}
Moreover, $\mathsf{d}$ defines a distance in $\overline{\Omega}$ and the metric space $(\overline{\Omega},\mathsf{d})$ is Polish.
\end{lemma}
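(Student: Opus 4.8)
The plan is to exploit the fact that in one dimension the Riemannian metric $\rd s=\rd x/\sqrt{\degdiff(x)}$ becomes flat after a single change of variable. Fix a base point $x_0\in(0,1)$ and set
\[
\Phi(x)\bydef\int_{x_0}^x\frac{\rd u}{\sqrt{\degdiff(u)}},\qquad x\in(0,1).
\]
The first and decisive step is to observe that the simple-zero assumptions in~\eqref{eq:assumptions_theta} are precisely what make $\Phi$ extend continuously up to the boundary. Near $x=0$ one has $\degdiff(x)=\degdiff'(0)\,x+o(x)$ with $\degdiff'(0)>0$, so the integrand behaves like $(\degdiff'(0))^{-1/2}x^{-1/2}$, which is integrable; the estimate near $x=1$ uses $\degdiff'(1)<0$ identically. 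Hence $\Phi$ extends to a continuous, strictly increasing map $\Phi:[0,1]\to[A,B]$ with $A\bydef\Phi(0)$, $B\bydef\Phi(1)$ both finite, and $\Phi$ is a homeomorphism onto $[A,B]$; on $(0,1)$ it is a $C^1$ diffeomorphism with $\Phi'(x)=\degdiff(x)^{-1/2}>0$, so by the inverse function theorem $\Phi^{-1}$ is $C^1$ on $(A,B)$. I expect this endpoint integrability to be the only genuine obstacle: were the zeroes of higher order the integral would diverge, the boundary would lie at infinite distance, and the statement would fail.

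Next I would transport the variational problem~\eqref{eq:distance} through $\Phi$. For a $C^1$ curve $\xi:[0,1]\to\Omega$ put $\eta\bydef\Phi\circ\xi$, so $\eta'(t)=\xi'(t)/\sqrt{\degdiff(\xi(t))}$ and therefore
\[
\int_0^1\frac{|\xi'(t)|^2}{\degdiff(\xi(t))}\,\rd t=\int_0^1|\eta'(t)|^2\,\rd t.
\]
Since $\Phi^{-1}$ is a $C^1$ diffeomorphism of $(A,B)$ onto $\Omega$, the assignment $\xi\mapsto\eta$ is a bijection between the admissible curves for the endpoints $x,y$ and those for $\Phi(x),\Phi(y)$ in the flat interval $(A,B)$. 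Thus $\mathsf{d}^2(x,y)$ equals the infimal Euclidean energy between $\Phi(x)$ and $\Phi(y)$ inside $(A,B)$, which is classical: by Cauchy--Schwarz $\int_0^1|\eta'|^2\,\rd t\ge\big(\int_0^1|\eta'|\,\rd t\big)^2\ge|\eta(1)-\eta(0)|^2$, with equality exactly for the constant-speed affine segment $t\mapsto(1-t)\Phi(x)+t\Phi(y)$, which stays in $(A,B)$ because its endpoints do. This simultaneously gives the explicit value $\mathsf{d}(x,y)=|\Phi(x)-\Phi(y)|=\big|\int_x^y\frac{\rd u}{\sqrt{\degdiff(u)}}\big|$ and the existence and uniqueness of the minimising constant-speed geodesic $\xi(t)=\Phi^{-1}\big((1-t)\Phi(x)+t\Phi(y)\big)$ for interior endpoints.

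Finally I would deduce the remaining assertions from the identity $\mathsf{d}=|\Phi(\cdot)-\Phi(\cdot)|$. As $\Phi$ is uniformly continuous on the compact $[0,1]$, the map $(x,y)\mapsto|\Phi(x)-\Phi(y)|$ is uniformly continuous on $[0,1]^2$ and coincides with $\mathsf{d}$ on $\Omega^2$; it is therefore the unique uniform extension of $\mathsf{d}$ to $\overline\Omega^2$, and~\eqref{eq:dist_explicit} holds on all of $[0,1]^2$. Because $\Phi$ is a homeomorphism, $\mathsf{d}$ is the pullback of the Euclidean metric, hence a genuine distance: symmetry and the triangle inequality are inherited, while $\mathsf{d}(x,y)=0\iff\Phi(x)=\Phi(y)\iff x=y$ by injectivity. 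Consequently $\Phi:(\overline\Omega,\mathsf{d})\to([A,B],|\cdot|)$ is an isometry onto a compact interval of $\R$; since compact metric spaces are complete and separable, $(\overline\Omega,\mathsf{d})$ is Polish. Everything after the endpoint integrability of the first paragraph is a transparent consequence of this one-dimensional flattening.
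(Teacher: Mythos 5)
Your proof is correct, and it takes a cleaner and more self-contained route than the paper's. The paper argues locally: it invokes existence of a minimiser as ``a standard exercise in the calculus of variations'', writes the Euler--Lagrange equation $\frac{\rd}{\rd t}\bigl(\xi'(t)^2/\degdiff(\xi(t))\bigr)=0$ to conclude that the intrinsic speed is constant, and then obtains \eqref{eq:dist_explicit} by the change of variables $u=\xi(t)$; the boundary extension and completeness are only asserted to follow from the simple zeros and the explicit formula. You instead flatten the metric globally via $\Phi(x)=\int_{x_0}^x\degdiff(u)^{-1/2}\,\rd u$ and reduce everything to the Euclidean energy on an interval, where Cauchy--Schwarz gives the lower bound, the equality case gives existence \emph{and} uniqueness of the constant-speed geodesic (which the paper leaves unproved), and the identity $\mathsf{d}(x,y)=|\Phi(x)-\Phi(y)|$ immediately yields the metric axioms, the uniform extension to $\overline\Omega^2$, and Polishness via the isometry with a compact interval. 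Both arguments hinge on the same key point you isolate at the outset: the simple zeros in \eqref{eq:assumptions_theta} make $\degdiff^{-1/2}$ integrable at the endpoints, so the boundary sits at finite distance. What your approach buys is that the ``standard exercise'' and the boundary/completeness claims are actually proved rather than deferred; what the paper's buys is a template (constant intrinsic speed from the Euler--Lagrange equation) that it reuses verbatim in the discrete setting of Lemma~\ref{lem:1st_order_expansion_WN}, where no flattening substitution is available.
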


\begin{proof}
 For $x,y\in\Omega$ in the interior, the existence of a unique minimising curve is a standard exercise in the calculus of variations and we omit the details.
 As in the proof of Lemma~\ref{lem:1st_order_expansion_WN}, we start by writing the Euler-Lagrange Equation
 $\frac{\rd\ }{\rd t}\left(\frac{\xi'(t)^2}{\degdiff(\xi(t))}\right)=0$ and conclude that the intrinsic speed is constant, i.e., $|\xi'(t)|^2_{T_{\xi(t)}\Omega}=\mathsf{d}^2(x,y)$.
 In particular $\xi'(t)$ never vanishes and~\eqref{eq:dist_explicit} immediately follows from the change of variables $u=\xi(t)$ in~\eqref{eq:distance}.
 From the fact that $x=0,1$ are simple zeros of $\degdiff$ as assumed in~\eqref{eq:assumptions_theta}, the extension to $\overline{\Omega}$ follows; finally, completeness is an easy consequence of the explicit representation~\eqref{eq:dist_explicit}. \qed
\end{proof}

It is worth pointing out that this Shahshahani distance is locally equivalent to the Euclidean one in the interior (i.e. $c_K|x-y|\leq \mathsf d(x,y) \leq C_K|x-y|$ in any compact set $K\subset \subset (0,1)$), but behaves differently close to the boundary (e.g. $\mathsf d(0,x)\sim \int_0^x\frac{\rd u}{\sqrt{u}}\sim\sqrt{x}$ for small $x$).
This is reflected in the behaviour of the Kimura Equation~\eqref{eq:Kimura}, which is locally uniformly parabolic in the interior, but degenerate at the boundaries.
With the Polish space $(\overline\Omega,\mathsf d)$ at hand, one classically defines the corresponding Wasserstein distance on the space of probabilities $\mathcal P(\overline\Omega)$ as
\begin{equation}\label{eq:cont_Wasserstein}
 \W^2(\mu,\nu)=\min_{\gamma\in\Gamma(\mu,\nu)}\iint_{\Omega^2}\mathsf{d}^2(x,y)\rd\gamma(x,y),
 \qquad \mu,\nu\in\mathcal P(\overline\Omega).
\end{equation}
Here $\Gamma(\mu,\nu)$ denotes the set of admissible \emph{transport plans}, i.e., the set of probability measures $\gamma\in\mathcal{P}(\overline\Omega\times\overline\Omega)$ with first marginal $\gamma_1=\mu$ and second marginal $\gamma_2=\nu$.
The superposition principle
\begin{equation}
\label{eq:superposition_principle} 
\mathsf d(x,y)=\W(\delta_x,\delta_y)
\end{equation}
gives the natural correspondence between the underlying Polish space $(\overline\Omega,\mathsf d)$ and the overlying Wasserstein space $(\mathcal P(\Omega),\W)$, and we refer to \cite{Villani_Topics,villani2008optimal,santambrogio2015optimal} for an extended account on the optimal transport theory and bibliography.

As in the discrete case, we have the dynamical representation
\begin{proposition}[Benamou-Brenier formula \cite{benamou2000computational,lisini2009nonlinear}]
\label{prop:Benamou_Brenier}
For $q_0,q_1\in\mathcal P(\overline\Omega)$ there holds
\begin{equation}
\label{eq:Benamou_Brenier}
\W^2(q_0,q_1)=\inf\limits_{q,\psi}\quad \int_0^1\int_\Omega \frac{|\nabla \psi_t(x)|^2}{\degdiff(x)}\rd q_t(x)\,\rd t,
\end{equation}
where the infimum runs over narrowly continuous curves $[0,1]\ni t\mapsto q_t\in \mathcal P(\overline\Omega)$ with endpoints 
$q|_{t=0}=q_0,\,q|_{t=1}=q_1$ and satisfying the continuity equation
\[
\partial_t q_t+\dive (q_t\nabla\psi_t)=0
\]
with zero-flux boundary conditions.
\end{proposition}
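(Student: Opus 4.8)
The plan is to exploit the fact, established in Lemma~\ref{lem:expression_distance}, that the generalised Shahshahani distance becomes \emph{flat} after a suitable change of variables, thereby reducing the statement to the classical Benamou--Brenier formula on a Euclidean interval. Concretely, I would introduce the map
\[
\Phi(x):=\int_0^x\frac{\rd u}{\sqrt{\degdiff(u)}},\qquad x\in\overline\Omega,
\]
which by~\eqref{eq:dist_explicit} satisfies $\mathsf d(x,y)=|\Phi(x)-\Phi(y)|$ and, thanks to~\eqref{eq:assumptions_theta} and Lemma~\ref{lem:expression_distance}, extends to a homeomorphism from $(\overline\Omega,\mathsf d)$ onto the interval $([0,L],|\cdot|)$ with $L:=\mathsf d(0,1)<\infty$. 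Thus $\Phi$ is an isometry of Polish spaces, and the first step is to lift it to an isometry of the overlying Wasserstein spaces: transport plans correspond bijectively through $(\Phi\times\Phi)_{\#}$ and costs are preserved, so that $\W(q_0,q_1)=\W_{\mathrm{eucl}}(\Phi_{\#}q_0,\Phi_{\#}q_1)$, where $\W_{\mathrm{eucl}}$ is the quadratic Wasserstein distance on $[0,L]$ built from the Euclidean cost.

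Next I would transfer the dynamical side through the same map. Given an admissible pair $(q_t,\psi_t)$ for the continuity equation with velocity $v_t=\partial_x\psi_t$, push the measures forward, $\widetilde q_t:=\Phi_{\#}q_t$, and define the transported velocity $\widetilde v_t(y):=\Phi'(x)\,v_t(x)=v_t(x)/\sqrt{\degdiff(x)}$ at $x=\Phi^{-1}(y)$. A direct computation shows that $(\widetilde q_t,\widetilde v_t)$ solves the flat continuity equation $\partial_t\widetilde q_t+\partial_y(\widetilde q_t\widetilde v_t)=0$ on $[0,L]$, and that the action is preserved exactly,
\[
\int_\Omega\frac{|v_t(x)|^2}{\degdiff(x)}\,\rd q_t(x)=\int_0^L|\widetilde v_t(y)|^2\,\rd\widetilde q_t(y).
\]
Since in one space dimension every $L^2$ velocity is a gradient, the correspondence $(q,\psi)\leftrightarrow(\widetilde q,\widetilde\psi)$ is a bijection between admissible curves that preserves endpoints and the total action. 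The infimum in~\eqref{eq:Benamou_Brenier} therefore equals the infimum of the flat dynamical action on $[0,L]$, which by the classical Benamou--Brenier theorem~\cite{benamou2000computational} equals $\W_{\mathrm{eucl}}^2(\Phi_{\#}q_0,\Phi_{\#}q_1)=\W^2(q_0,q_1)$, as claimed.

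The main obstacle I anticipate is the behaviour at the degenerate boundary $x\in\{0,1\}$, where $\Phi'=1/\sqrt{\degdiff}$ blows up and the Shahshahani and Euclidean geometries genuinely differ (e.g. $\mathsf d(0,x)\sim\sqrt x$ for small $x$). Two points require care: first, the measures $q_0,q_1$ may charge the absorbing points, so one must check that the boundary Diracs transform correctly into Diracs at $y=0,L$ and that the zero-flux (mass-conserving) boundary condition is exactly matched by no-flux at the endpoints of $[0,L]$; second, one must ensure that narrow continuity of $t\mapsto q_t$ and finiteness of the action are preserved under the singular change of variables, for which completeness of $(\overline\Omega,\mathsf d)$ from Lemma~\ref{lem:expression_distance} is essential. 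As an alternative that sidesteps the explicit computation, one may simply observe that $(\overline\Omega,\mathsf d)$ is a Polish geodesic space and invoke the general metric-space Benamou--Brenier representation of Lisini~\cite{lisini2009nonlinear}; the isometry argument above is, however, more self-contained and makes the geometric content transparent.
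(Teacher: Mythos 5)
The paper offers no proof of this proposition: it is stated as a known result, with the static definition \eqref{eq:cont_Wasserstein} on one side and the dynamical formula imported from \cite{benamou2000computational} and, for variable mobility, from \cite{lisini2009nonlinear}. Your closing alternative (invoke Lisini's metric Benamou--Brenier representation for the Polish geodesic space $(\overline\Omega,\mathsf d)$) is therefore exactly the paper's route. Your main argument is a genuinely different and more self-contained one: the map $\Phi(x)=\int_0^x\degdiff(u)^{-1/2}\,\rd u$, finite and a homeomorphism onto $[0,L]$ precisely because of the simple zeros in \eqref{eq:assumptions_theta} and formula \eqref{eq:dist_explicit} of Lemma~\ref{lem:expression_distance}, is an isometry of the underlying spaces, hence $(\Phi\times\Phi)_{\#}$ identifies the static problem \eqref{eq:cont_Wasserstein} with the Euclidean one on $[0,L]$; the identity $\Phi'(x)^2=1/\degdiff(x)$ then matches the weighted action with the flat action, and the one-dimensional fact that every $L^2$ velocity is a gradient makes the correspondence of admissible pairs a bijection, so the classical Benamou--Brenier theorem on $[0,L]$ closes the argument. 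This buys transparency (the Shahshahani--Wasserstein space is revealed as isometric to an ordinary Wasserstein space over a compact interval, consistent with the superposition principle \eqref{eq:superposition_principle}), at the cost of the boundary analysis you correctly flag: $\Phi$ is not a $C^1$ diffeomorphism up to $\partial\Omega$, so the equivalence of narrowly continuous, finite-action curves under this singular change of variables and the matching of the zero-flux conditions at $x\in\{0,1\}$ with those at $y\in\{0,L\}$ must be verified by hand --- these are exactly the technicalities that citing \cite{lisini2009nonlinear} sidesteps. Your sketch is correct; if written out in full, the boundary step is where the real work lies.
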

\noindent
This is the exact equivalent of the dynamical definition of the discrete Wasserstein distance -- Definition~\ref{def:disc_Wasserstein} -- where the discrete 
continuity equation appear (see \cite{Maas_JFA} for discussions).
In the Lagrangian action~\eqref{eq:Benamou_Brenier} the velocity-field $\bv=\nabla\psi$ is measured not with respect to the standard Euclidean norm, but rather with respect to the intrinsic Shahshahani metrics $|\nabla\psi|^2_{T_x\Omega}=\frac{|\nabla\psi|^2}{\degdiff(x)}$.
We refer to \cite{lisini2009nonlinear} for a discussion on Wasserstein distances with variable coefficients, and to \cite{villani2008optimal} for optimal transport on abstract Riemannian manifolds.
Since we called the underlying metrics $\mathsf d$ the generalised Shahshahani distance, we shall sometimes speak of the corresponding Wasserstein distance as the Wasserstein-Shahshahani distance.

From the works of Otto \cite{otto2001geometry} it is known that the Wasserstein distance endows $\mathcal P(\Omega)$ with a (formal, infinite-dimensional) Riemannian structure, see also \cite{Villani_Topics,santambrogio2015optimal} for a comprehensive introduction.
In our setting with the $\degdiff(x)$ intrinsic tensor, the (formal) gradient of a functional $\mathcal F(q)$ with respect to this Riemannian structure reads
\begin{equation}
\label{eq:formula_gradient_Wasserstein}
 \mathrm{grad}_{\mathcal W}\mathcal{F}(q)=-\partial_x\left(\degdiff\,q\,\partial_x\left(\frac{\delta\mathcal{F}}{\delta q}\right)\right),
\end{equation}
see \cite{lisini2009nonlinear}. 
Here $\frac{\delta \mathcal{F}}{\delta q}$ denotes the first variation computed in the usual Euclidean sense, e.g. if $\mathcal F(q)=\int E(q(x))\rd x +\int V(x)q(x)\rd x$ then $\frac{\delta \mathcal{F}}{\delta q}(x)=E'(q(x))+V(x)$.
This is the exact counterpart of the discrete formula \eqref{eq:formula_gradient_discrete_Wasserstein}, with the subtle difference that the geometry on the underlying space $\mathcal X=[0,1]$ is now encoded by the mobility $\degdiff(x)$, while the ``discrete geometry'' on $\mathcal X_N=\{i/N,\,\, i=0\dots N\}$ was previously encoded directly by the kernel $\bK$ in \eqref{eq:formula_gradient_discrete_Wasserstein}.
%
\subsection{Gradient flow formulation}\label{ssec:Grad_Flow_form}
We want to identify now the Kimura Equation \eqref{eq:Kimura} as a gradient flow, based on the formula~\eqref{eq:formula_gradient_Wasserstein}.
To this end we first need to retrieve the evolution equation for the rescaled $Q$-process $q(t,x)$.

From \cite{ChalubSouza_2017} we know that the absolutely continuous part $\widetilde p(t,x)$ satisfies $\partial_t \widetilde p=\mathcal L\widetilde p$ in the classical sense and remains smooth up to the boundary.
Since $\mathcal L^\dagger w=-\lambda w$ with zero boundary values we see that
\begin{align*}
\frac{\rd\ }{\rd t}\int_0^1 w(x) \widetilde p(t,x)\rd x &= \int_0^1 w(x)\partial_t\widetilde p(t,x)\rd x
=\int_0^1 w(x)\mathcal L\widetilde p(t,x)\rd x\\
&=
\int_0^1 \widetilde p(t,x)\mathcal{L}^\dagger w(x)\rd x
=-\lambda\int_0^1 \widetilde p(t,x) w(x)\rd x,
\end{align*}
hence the weighted mass decays exponentially 
\begin{equation}
\label{eq:deff_p_q_cont}
 \langle w,\widetilde p(t)\rangle=\e^{-\lambda t}\langle w,\widetilde p^\ini\rangle
\end{equation}
as in the discrete counterpart~\eqref{eq:expo_decay_p}.
Discarding the case of completely absorbed initial data (leading to a trivial stationary evolution $p(t)\equiv p^\ini=a^\ini\delta_0+0+b^\ini\delta_1$), we can assume that the transient dynamics never gets absorbed, $\langle w,\widetilde p(t)\rangle\neq 0$ for all $t>0$, and thus renormalise
\begin{equation}\label{eq:qsemdx}
 q(t,x) \bydef
 \frac{w(x)\widetilde p(t,x)}{\langle w,\widetilde p\rangle}
 = \e^{\lambda t}\frac{w(x)\widetilde p(t,x)}{\langle w,\widetilde p^\ini\rangle}\ .
\end{equation}
Note that {$q(t,\cdot)$} is a probability (density) by construction.
As in the discrete case, $q(t,\cdot)$ can be obtained as the law of the natural $Q$-process, i.e. the original stochastic process $X_t$ conditioned to non-extinction in infinite time -- see for example \cite{Manturov1991,champagnat2016exponential}.
Since $\widetilde p(t,x)$ is uniformly bounded, cf.~\cite{ChalubSouza_2017}, and the eigenfunction $w(x)$ satisfies Dirichlet boundary conditions, we see that $q$ automatically satisfies
$q(t,0)=q(t,1)=0$ on the boundary, see Remark~\ref{rmk:dirichlet=neumann}.

\begin{remark}
	\label{rmk:dirichlet=neumann}
	In optimal transport and Wasserstein gradient flows, the evolution takes place by construction in the space of probabilities $\mathcal P([0,1])$, and one therefore usually enforces no-flux boundary conditions in the PDEs so as to comply with the conservation of mass---see, however, \cite{figalli2010new} for an application to gradient flows with Dirichlet boundary conditions.
	However in our framework, since $\widetilde p$ is bounded and $w$ vanishes on the boundaries, our new variable $q=\frac{w\widetilde p}{\langle w,\widetilde p\rangle}$ should also vanish and the evolution~\eqref{eq:grad_flow_Q} is implicitly understood here with Dirichlet boundary conditions $q|_{\partial\Omega}=0$.
	This is of course not a contradiction: since $w$ and $\degdiff$ vanish linearly we see from~\eqref{eq:pi=wz} that $\pi=wz$ does too, and the effective flux in~\eqref{eq:grad_flow_Q} is
	\[
	\left.\degdiff q\partial_x\log\frac{q}{\pi}\right|_{\partial\Omega}
	=
	\left.\degdiff q\left(\frac{\partial_x q}{q}-\frac{\partial_x\pi}{\pi}\right)\right|_{\partial\Omega}
	= \underbrace{\degdiff\big|_{\partial\Omega}}_{=0} \partial_x q\big|_{\partial\Omega}
	- q\big|_{\partial\Omega} \underbrace{\left.\frac{\degdiff}{\pi}\partial_x\pi\right|_{\partial_\Omega}}_{\sim C_{0,1}\neq 0}
	\sim C_{0,1}{q}\big|_{\partial\Omega}
	\]
	on the boundaries $x=0,1$.
	Thus the usual no-flux condition is here equivalent to our Dirichlet condition for $q$.
\end{remark}

Let us now identify the evolution law for our new variable $q$.
From \cite{ChalubSouza09b} we know that $\wp$ satisfies $\partial_t\wp=\mathcal L\wp$ in the classical sense, whence
\[
\partial_tq=\frac{\e^{\lambda t}}{\langle w,\widetilde p^\ini\rangle}\left\{\lambda w\widetilde p+w\partial_t\widetilde p\right\}
=
\frac{\e^{\lambda t}}{\langle w,\widetilde p^\ini\rangle}\left\{(-\mathcal L^\dagger w)\widetilde p+w\mathcal L\wp\right\}.
\]
Substituting the explicit expressions~\eqref{eq:def_L_forward},\eqref{eq:def_L_backward} of $\mathcal L,\mathcal L^\dagger$, respectively, we find after a straightforward calculation
\[
 \frac{\langle w,\widetilde p^\ini\rangle}{\e^{\lambda t}}\partial_tq=
 \frac{\kappa}{2}\partial_x\left[\degdiff w \widetilde p\partial_x\log\left(
 \frac{\degdiff\wp e^{2V/\kappa}}{w}
 \right)
 \right]
=
 \frac{\kappa}{2}\partial_x\left[\degdiff w \widetilde p \,\partial_x\log\left(
 \frac{w\wp }{
 \frac{w^2 e^{-2V/\kappa}}{\degdiff}
 }
 \right)
 \right].
\]
Using Equation~\eqref{eq:qsemdx} and Lemma~\ref{lem:ratio_continuous} to identify $\pi=\frac{1}{C}\frac{w^2e^{-2V/\kappa}}{\degdiff}$ inside the logarithm, we get
\begin{equation*}
 \partial_t q =\partial_x\left[\degdiff \,q\,\partial_x\left(\frac{\kappa}{2}\log\left(
 \frac{q}{\pi}\right)
 \right)
 \right].
\end{equation*}
Computing the first variation $\frac{\delta \H}{\delta q}=\frac{\kappa}{2}\left(\log\left(\frac{q}{\pi}\right)-1\right)$ of the entropy~\eqref{eq:entropy_continuous} and applying formula~\eqref{eq:formula_gradient_Wasserstein} for the Wasserstein gradient, we finally recognise the gradient flow structure
\begin{equation}
\label{eq:grad_flow_Q}
 \partial_{t} q=
 \partial_x\left(\degdiff q\partial_x\left(\frac\kappa 2 \log\left(\frac{q}{\pi} \right) \right) \right)
 \qquad \Leftrightarrow \qquad
 \partial_{t} q = -\grad_\W\H(q|\pi)
\end{equation}
for the Kimura Equation.

\begin{remark}\label{rmk:entropy_decreasing}
For any solution $q$ of Equation~\eqref{eq:grad_flow_Q},
it is not difficult to check that $\mathcal{G}_\phi(q|\pi)$ given by Equation~\eqref{eq:alternative_form_entropy_q} is nonincreasing in time as  in Lemma~\ref{lem:entropy_CD} and  Proposition~\ref{lem:decrasing}.
In fact we can even compute the dissipation
\begin{align*}
 \frac{\rd\ }{\rd t}\mathcal{G}_\phi(q|\pi)&=
 \int_0^1\phi'\left(\frac{q(x)}{\pi(x)}\right)\partial_tq\,\rd x
 =\int_0^1\phi'\left(\frac{q(x)}{\pi(x)}\right)\partial_x\left(\degdiff q\partial_x\left(\frac\kappa 2 \log\left(\frac{q}{\pi} \right) \right) \right)\rd x\\
 &=-\frac{\kappa}{2}\int_{(0,1)}\degdiff(x)\phi''\left(\frac{q(x)}{\pi(x)}\right)\left[\partial_x\left(\frac{q(x)}{\pi(x)}\right)\right]^2\rd\pi(x)\le 0
\end{align*}
 because $\phi$ is convex and because $q$ satisfies homogeneous Dirichlet boundary conditions at the end points --- see Remark~\ref{rmk:dirichlet=neumann}.
\end{remark}

\section{The replicator dynamics}
\label{sec:replicator}
 
The replicator dynamics was introduced in \cite{TaylorJonker_1978} and termed so in \cite{schuster1983replicator}.
The model consists in an infinite population of $n$ possible types, with frequency $x_i$ of type $i$.
This dynamics is based upon a simple postulate: the per-capita growth rate $\left(\frac{\dot{x}_i}{x_i}\right)$ of $x_i$ is given by the difference between the expected fitness of type $i$ and the population average fitness, i.e.
\begin{equation}
\label{eq:re:trad}
\dot{x}_i=x_i\left(\psi_i(\bx)-\bar{\psi}(\bx)\right), \quad i=1,\ldots,n,
\end{equation}
where $\psi_i$ is the fitness of type $i$ when the populations is at state $\bx$, and $\bar{\psi}(\bx)=\sum_{i=1}^nx_i\psi_i(\bx)$.
More recently, \cite{HofbauerSigmund} popularised the Replicator dynamics under two-player games, i.e., with $\psi_i(\bx)=(\bA\bx)_i$ for a given $n\times n$ matrix $\bA$ --- typically $\bA$ is associated to payoffs of a $n$-strategy, two-player game.

Equation~\eqref{eq:re:trad} is a cornerstone of evolutionary game theory, and it has been discussed and reviewed in various  works \cite{hofbauer2003evolutionary,weibull1997evolutionary,sigmund1999evolutionary,HofbauerSigmund}.
Incidentally, as we shall review below, it is also associated with the vanishing viscosity limit $\kappa\to 0$ of the Kimura Equation~\eqref{eq:Kimura}.
It is worth noticing that, while Kimura Equation arises in the infinite population limit as a delicate balance between selection effects and genetic drift, 
deviations from this balance lead to either a pure diffusive model or to a hyperbolic one --- the latter arises from selection dominating the genetic drift in the large population limit, and as discussed in \cite{ChalubSouza14a} it is equivalent to the Replicator Equation.
We refer to \cite{ChalubSouza09b,ChalubSouza14a} for a discussion about the different scalings and corresponding limits, see also \cite{ChalubSouza16} for a discussion on the different regimes both in finite and infinite populations.

In what follows, we will consider
 the  case of   $n=2$ types only, and in this case we may write $x_1=X$ and $x_2=1-X$ and write   the generalised one-dimensional \emph{Replicator Equation}
\begin{equation}\label{eq:Replicator}
\dot X=-\degdiff(X)\partial_X V(X)
\end{equation}
with same coefficient $\degdiff(x)$ and potential $V(x)$ as in Section~\ref{sec:CTCMC}.
Through natural embedding of point sets into empirical probability measures, any $L$-tuple of solutions $X_1(t)\dots X_L(t)$ to the ODE~\eqref{eq:Replicator} immediately gives a (probability) measure-valued solution
\[
p(t)=\frac 1L\sum _{l=1}^L \delta_{X_l(t)}
\]
to the corresponding hyperbolic PDE
\begin{equation}
\label{eq:Replicator_PDE}
 \partial_t p=\partial_x\left(\degdiff p\,\partial_xV\right)\,,
\end{equation}
which we also call the Replicator Equation with a slight abuse of notation.
The \emph{characteristics} ODE is the Lagrangian counterpart of the Eulerian hyperbolic PDE.
Note that \eqref{eq:Replicator_PDE} is obtained formally by taking the diffusion $\kappa=0$ in the Kimura Equation~\eqref{eq:Kimura}.

Since~\eqref{eq:Replicator_PDE} is written in terms of the $p$ variable, one might wonder as in Section~\ref{sec:CTDMC} what the conditioned $Q$-process might be, and what the resulting dynamics would be.
The main difference is that, since $\kappa=0$ here, no random fluctuation arises and the process is purely deterministic, $\rd X_t=-\degdiff(X_t)\partial_xV(X_t)\rd t$.
Therefore by Cauchy-Lipschitz uniqueness of trajectories for the ODE~\eqref{eq:Replicator}, any Lagrangian particle $X(t)$ initially on the boundaries remains absorbed, while a particle starting from the interior cannot reach the boundaries in finite time.
For the PDE this implies that the mass of $p^\ini=a^\ini\delta_0+\wp^\ini(x)\rd x+b^\ini\delta_1$ initially on the boundaries remains absorbed, while only the transient mass can evolve in the interior: In other words the distribution remains of the form $p(t)=a^\ini\delta_0+\wp(t,x)\rd x+b^\ini\delta_1$ for $t>0$, with $\|\wp(t)\|_{L^1}=\|\wp^I\|_{L^1}$.
As a consequence absorption never occurs, mass no longer leaks from the interior to the boundaries, and the previous transient rescaling from Section~\ref{sec:CTCMC} now simply reads
\[
q=\frac{\wp}{\|\wp\|_{L^1}}.
\]
Up to the constant-in time scaling factor $\frac{1}{\|\wp(t)\|_{L^1}}=\frac{1}{\|\wp^\ini\|_{L^1}}$ we have thus $q=\widetilde p$ for the replicator dynamics, and in fact one should think of the replicator Equation~\eqref{eq:Replicator_PDE} as acting on the $q$ variable rather than on $p$.

That being said, we have two equivalent gradient flow formulations for the replicator dynamics:
\begin{enumerate}
 \item 
 It is well known that the Replicator ODE~\eqref{eq:Replicator} is a gradient flow with respect to the (generalised) Shahshahani metric, see \cite{antonelli1977geometry,akin1979geometry,akin1990differential}.
 Indeed, choosing again to view $\Omega=(0,1)$ as a Riemannian manifold with the scalar product induced by $\frac{1}{\degdiff}$ on $T_x\Omega$ (see Subsection~\ref{subsec:Wasserstein}), an immediate computation allows to obtain the intrinsic gradients as $\grad_{\mathsf d}=\degdiff(x)\partial_x$, whence
 \[
 \dot X=-\degdiff(X)\partial_xV(X)
 \qquad \Leftrightarrow\qquad
 \dot X = -\grad_{\mathsf d} V(X)
 \]
 \item
 For the Replicator PDE~\eqref{eq:Replicator_PDE}, the Eulerian energy corresponding to the previous Lagrangian $V(X)$ for single particles is naturally
 \[
 \mathcal V(q)=\int_{0}^1V(x)\rd q(x),
 \qquad 
 q\in\mathcal P([0,1]).
 \]
 Our previous formula~\eqref{eq:formula_gradient_Wasserstein} for Wasserstein gradients gives next, with the first variation $\frac{\delta \mathcal V}{\delta q}=V$, the gradient flow structure
 \begin{equation}
 \label{eq:grad_flow_replicator}
 \partial_t q=\partial_x\left(\degdiff \,q \, \partial_xV\right)
 \qquad \Leftrightarrow\qquad
 \partial_t q= -\grad_\W\mathcal V(q).
 \end{equation}
\end{enumerate}

The convergence of the Kimura Equation~\eqref{eq:Kimura} towards the Replicator Equation~\eqref{eq:Replicator_PDE} in the deterministic limit $\kappa\to 0$ is well-known from a classical PDE point of view \cite{ChalubSouza09b,ChalubSouza14a}, but we will show in the next section that, using the right $q$ variable dictated by the conditioning of the corresponding $Q$-process, the convergence is variational (in some precise sense to be discussed later).
This is why we carefully and intentionally wrote the gradient flow~\eqref{eq:grad_flow_replicator} in terms of $q$ instead of using the original $p$ variable.
As just discussed, this is completely equivalent for the Replicator Equation (up to multiplicative scaling $p=q$), but the situation is drastically different in the presence of diffusion.
 %
 %
\section{Variational structures and their compatibility} 
\label{sec:gfcmp} 

So far, we have discussed three different gradient flow structures for models that are relevant to evolutionary biology: 
\begin{enumerate}[(i)] 
 \item 
the finite population dynamics discussed in Section~\ref{sec:CTDMC}, defined for population size $N<\infty$ 
 \item 
 the continuous population counterpart $x\in[0,1]$ discussed in Section~\ref{sec:CTCMC}, defined for $N=\infty$ and diffusion parameter $\kappa>0$ 
 \item 
 the replicator dynamics discussed in Section~\ref{sec:replicator}, which in its Eulerian formulation~\eqref{eq:Replicator_PDE} is defined for $N=\infty$ and $\kappa=0$ 
\end{enumerate} 
The convergence of (i) towards (ii) in the large population limit $N\to \infty$ as well as that of (ii) towards (iii) in the deterministic limit $\kappa\to 0$ have been proved to hold in particular models and in some appropriate sense (e.g. weak convergence of measures, or uniform convergence), see \cite{EthierKurtz,champagnat2006unifying,ChalubSouza09b,ChalubSouza14a}.
In this section we intend to convinde the reader that, under reasonable assumptions satisfied by many processes used in population genetics, our framework allows to further identify these convergences as natural \emph{variational} $\Gamma$-convergence of gradient flows, in a sense to be discussed shortly.
To the best of our knowledge this was never considered before for our classical triad: Not only do we provide a gradient flow structure for each of the three settings, but our structures are moreover energetically compatible with the relevant limits $N\to\infty,\kappa\to 0$. 
We should however stress that we do not aim at proving new convergence results here.
In addition, writing down a full, rigorous proof for $\Gamma$-convergence of gradient flow is usually a nontrivial task involving significant technical work.
Here we will only provide \emph{partial} results in this direction, and we will be content with the convergence of the driving energy (entropy) functionals and of the metric structures.

Perhaps, one of the surprising features of the discussion in the previous sections was its reliance on the $Q$-process variable $q$ for the analysis, since the latter is related to eternal paths in a system where the dynamics is almost surely absorbed in finite time. 
Traditionally, when dealing with models where absorption is certain, one relies on  quasi-stationary distributions (which happen to exist in most models of interest) in order to understand the fate of trajectories prior to absorption. 
However, when investigating a possible variational structure one should not expect this approach to be appropriate. 
Indeed, as already discussed, absorption is a non-reversible process, and reversibility was a key feature in obtaining a variational structure. 
Thus, for a generic trajectory that has not been absorbed at time $t$, the probability that it will remain non-absorbed decreases exponentially over time. 
When this probability decreases very slowly, meta-stable sates arise. 
However, even if these meta-stable states persist for very long times, the dynamics eventually becomes non-reversible in the long run. 
As a consequence one should not expect these trajectories to have a variational dynamics. 
These observations suggest that interesting trajectories should then be the immortal ones, i.e. those that never get absorbed.  
Two remarkable facts then happen: (i) this subset of trajectories is not empty, and one can indeed obtain a variational dynamics for these trajectories; (ii) the knowledge of the dynamics on this very restricted and small (zero-measure, negligible) set of trajectories is sufficient to recover the full transient dynamics (hence the whole dynamics, since the evolution of the absorbed states can be deduced from the transient dynamics with the help of the additional conservation laws).  
Roughly speaking, this is why one should rather consider the $Q$-process $q$ instead of the original $p$ distribution when seeking for a variational (gradient flow) structure, whether it be at the discrete or continuous level. 
 
\subsection{Gamma-convergence of gradient flows} 
\label{sec:Gamma} 
We first discuss shortly the notion of variational convergence of gradient flows needed for our purpose, and follow closely the exposition in \cite{sandier2004gamma,serfaty2011gamma}. 
Let us remind that the notion of $\Gamma$-convergence, introduced by E. De Giorgi in the 70's, is a notion of convergence of functionals that essentially guarantees convergence of the minimisers -- see the classical monograph \cite{dal2012introduction} for a detailed introduction. 
In some sense, this is precisely the notion of convergence needed when handling minimisation problems, and $\Gamma$-convergence is ubiquitous nowadays in variational analysis and modelling.
 
Often times one deals in practice with sequences of functionals that are not necessarily defined on the same space. 
In the following one should roughly keep in mind the idea of a $\Gamma$-converging sequence of functionals defined on a sequence of converging spaces. 
In order to illustrate this general idea, assume for simplicity that we are given a sequence of Hilbert spaces $\{H_\eps\}_{\eps>0}$ and a ``limit'' Hilbert space $H$, with some ``projections'' $\Pi_\eps:H_\eps\to H$. 
We say that a sequence $u_\eps\in H_\eps$ converges to $u\in H$ as $\eps\to 0$, denoted $u_\eps \overset{S}{\rightharpoonup} u$, if $\Pi_\eps(u_\eps) \xrightarrow[]{\sigma}u$ in $H$ for some topology $\sigma$. 
Both the projection $\Pi_\eps$ and topology $\sigma$ are crucial choices that one should make, depending on the model and applications under consideration. 
The Gamma-convergence of functionals on varying spaces is then defined as 
\begin{definition}[$\Gamma$-convergence] 
\label{defi:Gamma_CV} 
We say that a sequence of functionals $F_\eps:H_\eps\to\R\cup\{+\infty\}$ $\Gamma$-converges to $F:H\to\R\cup\{+\infty\}$ as $\eps\to 0$, denoted by $F=\mathop{\operatorname{\Gamma-\lim}} F_\eps$ or $F_\eps\overset{\Gamma}{\to} F$, if the following $\mathop{\operatorname{\Gamma-\liminf}}$ and $\mathop{\operatorname{\Gamma-\limsup}}$ conditions hold 
\begin{enumerate}[(i)] 
 \item 
 \label{item:defi_gamma_liminf} 
 for any sequence $u_\eps \overset{S}{\rightharpoonup} u$ there holds 
\begin{equation} 
\label{eq:def_Gamma_liminf} 
F(u)\leq \liminf\limits_{\eps\to 0} F_\eps(u_\eps). 
\end{equation} 
\item 
\label{item:defi_gamma_limsup} 
for any $u\in H$ there exists a \emph{recovery sequence} $u_\eps \overset{S}{\rightharpoonup} u$ such that 
\begin{equation} 
\label{eq:def_Gamma_limsup} 
\limsup\limits_{\eps\to 0} F_\eps(u_\eps) \leq F(u). 
\end{equation} 
\end{enumerate} 
\end{definition} 

Consider now the sequence of gradient flows $u_\eps(t):[0,T]\to H_\eps$ given by 
\begin{equation} 
\label{eq:eps_grad_flow} 
\partial_t u_\eps =-\grad_{H_\eps}F_\eps(u_\eps),  
\end{equation} 
where we emphasise the fact that the gradient of $F_\eps$ is computed with respect to the $H_\eps$ structure. 
Then, since gradient flows tend to minimise the energy along the evolution, and because $\Gamma$-convergence guarantees convergence of minimisers towards minimisers, one expects that limits of $F_\eps$-gradient flows should be gradient flows for the limiting functional $F=\mathop{\operatorname{\Gamma-\lim}}F_\eps$. 
(We shall refer to any such convergence as $\Gamma$-convergence of gradient flows.) 
This was proved in~\cite{sandier2004gamma,serfaty2011gamma} under additional conditions: 
\begin{theorem}[{\cite[Theorem 1]{serfaty2011gamma}}] 
\label{theo:gamma_CV_grad_flow_serfaty} 
Assume that $F=\mathop{\operatorname{\Gamma-\lim}}F_\eps$ for $\eps\to 0$ as in Definition~\ref{defi:Gamma_CV}, and let $u_\eps(t)$ be a solution of~\eqref{eq:eps_grad_flow} such that 
\begin{equation} 
\label{eq:convergence_curves_serfaty} 
u_\eps(t)\overset{S}{\rightharpoonup} u(t) 
\qquad \mbox{for all }t\in [0,T] 
\end{equation} 
for some limiting curve $u:[0,T]\to H$. 
If, 
additionally, $F(u(0))=\lim F_{\eps}(u_\eps(0))$, and if the lower bounds on the velocity and slope from \cite{serfaty2011gamma} hold, then $u$ is a solution of the limit gradient flow 
\[
\partial_tu=-\grad_H F(u). 
\] 
\end{theorem}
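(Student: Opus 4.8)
The plan is to follow the energy-dissipation strategy of Sandier and Serfaty, whose cornerstone is the characterisation of a Hilbertian gradient flow as a \emph{curve of maximal slope}: a curve $u$ solves $\partial_t u=-\grad_H F(u)$ if and only if it satisfies the energy-dissipation inequality
\begin{equation}
\label{eq:edi_plan}
F(u(0))-F(u(t))\geq \frac12\int_0^t\|\partial_s u\|_H^2\,\rd s+\frac12\int_0^t\|\grad_H F(u(s))\|_H^2\,\rd s
\end{equation}
for every $t\in[0,T]$. Indeed, the reverse inequality is automatic for any sufficiently regular curve via the chain rule $\frac{\rd}{\rd s}F(u)=\langle\grad_H F(u),\partial_s u\rangle$ followed by Young's inequality, so that \eqref{eq:edi_plan} forces the underlying Young and Cauchy--Schwarz inequalities to be equalities, which is exactly the gradient flow relation. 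It therefore suffices to establish \eqref{eq:edi_plan} for the limit curve $u$.

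First I would write the exact energy identity at each level $\eps$. Since $u_\eps$ solves \eqref{eq:eps_grad_flow}, we have $\partial_s u_\eps=-\grad_{H_\eps}F_\eps(u_\eps)$, hence $\|\partial_s u_\eps\|_{H_\eps}=\|\grad_{H_\eps}F_\eps(u_\eps)\|_{H_\eps}$, and integrating $\frac{\rd}{\rd s}F_\eps(u_\eps)=-\|\partial_s u_\eps\|_{H_\eps}^2$ gives the symmetric form
\begin{equation}
\label{eq:edi_eps_plan}
F_\eps(u_\eps(0))-F_\eps(u_\eps(t))=\frac12\int_0^t\|\partial_s u_\eps\|_{H_\eps}^2\,\rd s+\frac12\int_0^t\|\grad_{H_\eps}F_\eps(u_\eps(s))\|_{H_\eps}^2\,\rd s.
\end{equation}
The bound coming from \eqref{eq:edi_eps_plan}, together with the well-prepared initial condition $F_\eps(u_\eps(0))\to F(u(0))$ and a uniform lower bound on $F_\eps$, yields a uniform control of $\int_0^T\|\partial_s u_\eps\|_{H_\eps}^2\,\rd s$; this guarantees that the limit curve $u$ is absolutely continuous, so that its metric speed and the quantities in \eqref{eq:edi_plan} are well defined.

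Next I would pass to the inferior limit $\eps\to0$ in \eqref{eq:edi_eps_plan}. On the left-hand side the initial term converges by well-preparedness, while the $\mathop{\operatorname{\Gamma-\liminf}}$ inequality \eqref{eq:def_Gamma_liminf} applied along $u_\eps(t)\overset{S}{\rightharpoonup}u(t)$ gives $F(u(t))\leq\liminf_\eps F_\eps(u_\eps(t))$. On the right-hand side I would invoke the two hypothesised lower bounds, namely the velocity bound $\liminf_\eps\int_0^t\|\partial_s u_\eps\|_{H_\eps}^2\,\rd s\geq\int_0^t\|\partial_s u\|_H^2\,\rd s$ and the slope bound $\liminf_\eps\int_0^t\|\grad_{H_\eps}F_\eps(u_\eps(s))\|_{H_\eps}^2\,\rd s\geq\int_0^t\|\grad_H F(u(s))\|_H^2\,\rd s$. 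Using the superadditivity of the inferior limit over the three terms on the right, and combining with the two convergences on the left, produces precisely \eqref{eq:edi_plan}, which by the first paragraph identifies $u$ as a solution of $\partial_t u=-\grad_H F(u)$.

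The main obstacle is the verification of the two lower bounds, which are taken as hypotheses here but carry all the model-dependent content. The velocity bound is essentially a lower semicontinuity of the squared metric speed with respect to the chosen topology $\sigma$ and projections $\Pi_\eps$, and is usually the more benign of the two. The slope bound is the genuinely delicate point: it requires controlling the behaviour of the metric slope $\|\grad_{H_\eps}F_\eps\|_{H_\eps}$ under the joint limit of the varying spaces $H_\eps\to H$ and the $\Gamma$-convergence of $F_\eps$, and it is here that the compatibility between the discrete/continuous Wasserstein--Shahshahani structures and their limiting counterparts must be exploited. A secondary technical point is ensuring that $\|\grad_H F\|_H$ is a genuine (strong) upper gradient for the limit functional, which is what legitimates the automatic reverse inequality and hence the passage from \eqref{eq:edi_plan} to the pointwise gradient flow equation.
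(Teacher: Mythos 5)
The paper does not prove this statement -- it is quoted verbatim as an external result, \cite[Theorem 1]{serfaty2011gamma}, and the surrounding text deliberately stays at the level of a formal discussion. Your reconstruction via the energy-dissipation inequality (exact energy identity at level $\eps$, then $\liminf$ using well-preparedness, the $\Gamma$-$\liminf$ of $F_\eps$, and the two hypothesised velocity and slope lower bounds, finally forcing equality in Cauchy--Schwarz and Young) is correct and is precisely the argument of the cited Sandier--Serfaty reference, so there is nothing to compare against within the paper itself.
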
 

We deliberately remain formal at this level and refrain from further discussing the precise definition of the above velocity and slope lower bounds, see again \cite{sandier2004gamma,serfaty2011gamma}. 
We should however stress that these two additional conditions are not a mere technical detail, and checking their validity is usually the most difficult part when trying to prove completely rigorous $\Gamma$-convergence of gradient flows.
Since our main concern here is the new biological paradigm rather than the careful and rigorous mathematical analysis, we deliberately choose to omit this technical part in order not to obfuscate the exposition.
However, we do not want to convey the wrong idea that $\Gamma$-convergence of the driving functional $F_\eps\xrightarrow[]{\Gamma}F$ and of the Hilbert spaces $H_\eps\to H$ are sufficient for the convergence of the associated gradient flows: There are of course counterexamples, but we believe that both the speed and slope conditions should hold in practice for our models in the large population and deterministic limits, $N\to\infty$ and $\kappa\to 0$.

Obtaining the convergence~\eqref{eq:convergence_curves_serfaty} of the sequence $u_\eps(t)$ towards \emph{some} limit curve $u(t)$ is in general not involved (standard weak compactness arguments typically apply), the challenge is rather to conclude that this limit is in fact a gradient flow for the limiting functional. (In such nonlinear settings this usually requires strong convergence.) 
 
One can actually build a theory of gradient flows in mere metric spaces (thus dispensing from any Hilbert or differential structures), as originally formulated by {De Giorgi} \cite{dg1980} in terms of \emph{curves of maximal slope}.
This is the notion we shall implicitly refer to in the sequel when we speak of \emph{variational evolution} or \emph{metric gradient flow}.
For the sake of exposition we refrain from discussing this delicate definition and refer instead to the classical monograph \cite{ambrosio2008gradient} (see also Subsection~\ref{sec:JKO} below). 
It was observed in \cite{sandier2004gamma} that the above scenario of $\Gamma$-convergence of gradient flows should hold in this very general metric setting, namely: if $(\mathcal X_\eps,d_\eps)$ is a sequence of metric spaces ``converging'' to a limit metric space $(\mathcal X,d)$ -- e.g. in the Gromov-Hausdorff sense -- then the $\Gamma$-convergence of the driving functionals $F_\eps:\mathcal X_\eps\to \R$ towards $F:\mathcal X\to \R$ should reasonably suffice (under additional speed and slope conditions) to guarantee the convergence of the corresponding metric gradient flows. 
Again, we should stress that this is not completely rigorous:
the additional speed and slope lower bounds from \cite{serfaty2011gamma}, actually required for a full convergence, correspond to suitable $\Gamma-\liminf$ estimates for the metric structure and metric derivative of the driving functional.
The former can sometimes be related to Gromov-Hausdorff convergence of metric spaces through Benamou-Brenier formulations, and both together provide suitable energy dissipation in the limit \cite{mielkeevo2016}.
Although the $\Gamma$-convergence of the functional and the Gromov-Hausdorff convergence of the underlying metric spaces alone do not suffice in general for the $\Gamma$-convergence of the gradient flows \cite{mielkeevo2016,MMP2020}, we claim that both the additional speed an slope convergences hold for our three specific evolutionary models, but we will not push further the rigorous mathematical analysis.

In the next section we shall exemplify this general scenario in two particular cases: 
\begin{enumerate} 
 \item  
In the limit of large populations $N\to\infty$ the space of discrete probability measures $\mathcal X_N=\mathcal P(\Delta_{N+1})$, endowed with the discrete Wasserstein distance $\W_N$ from Definition~\ref{def:disc_Wasserstein}, will Gromov-Hausdorff converge to the continuous Wasserstein space $\mathcal X=\mathcal P([0,1])$, endowed with the continuous Wasserstein-Shahshahani distance $\W$ from Section~\ref{sec:CTCMC}. 
 For the driving functionals we shall consider the sequence of discrete relative BGS entropies $H(\bq|\bpi_N)$ 
 that $\Gamma$-converge to the continuous counterpart $\mathcal H(q|\pi)$ 
 (up to scaling factors). 
 As a result and loosely speaking, the Moran process will $\Gamma$-converge to the Kimura model; more precisely, Equation~\eqref{eq:discrete_grad_flow_Q} will $\Gamma$-converge to Equation~\eqref{eq:grad_flow_Q}.
 \item 
 In the deterministic limit of small diffusion $\kappa\to 0$, the metric space will be fixed to be the Wasserstein space $\mathcal P([0,1])$, endowed with the fixed Wasserstein-Shahshahani distance $\W$, and we will consider the sequence of functionals 
 \begin{align*} 
  \H(q|\pi_\kappa) 
  &=\frac{\kappa}{2}\int_0^1 \frac{q(x)}{\pi_\kappa(x)}\log\left(\frac {q(x)}{\pi_\kappa(x)}\right)\rd\pi_\kappa(x) \\ 
  &=\frac \kappa 2 \int_0^1 q(x)\log q(x)\rd x +\int_0^1 V_\kappa(x)q(x)\rd x 
  \xrightarrow[\kappa\to 0]{\Gamma} 
  \mathcal V(q)\bydef\int_0^1 V(x)\rd q(x). 
 \end{align*} 
 Here $V(x)$ is the same potential initially prescribed for the Kimura Equation~\eqref{eq:Kimura}, and the effective potential $V_\kappa=-\frac{\kappa}{2}\log\pi_\kappa$ (to be defined below in more details) will converge to $V$. 
 As a result the Kimura gradient flow~\eqref{eq:grad_flow_Q} will $\Gamma$-converge to the Replicator gradient flow~\eqref{eq:grad_flow_replicator}. 
  
\end{enumerate} 
In both cases the $\Gamma$-convergence will be taken relatively to the weak-$*$ convergence of measures $q_n\weakstar q$, which we recall is defined by duality with bounded, continuous test-functions as $\int\varphi(x)\rd q_n(x)\to \int\varphi(x)\rd q(x)$ for all $\varphi\in\mathcal C_b$. 
This is a very reasonable choice, because the Wasserstein distance metrises the weak-$*$ convergence \cite{villani2008optimal}. 
 
For each case we will try to justify below why the above scenario should hold, namely we discuss the relevant $\Gamma$-convergences and the Gromov-Hausdorff convergence of the metric spaces. 
However, we will neither address the speed and slope lower bounds, nor discuss the well-preparedness of the initial data in Theorem~\ref{theo:gamma_CV_grad_flow_serfaty}. 
 We believe that those important assumptions hold true at least for the examples in Figure~\ref{fig:outline},
 but, as pointed out by Sandier and Serfaty in \cite{sandier2004gamma}, this is a case-to-case issue to be addressed by hand based on the specific structure of the problem under consideration. 
 We also refer to \cite{disser2015gradient} for the rigorous derivation of such lower bounds in the particular setting of finite-volume discretization of the Fokker-Planck Equation. 
For most of the classical processes, the convergence of solutions was proved most of the time in very strong topologies in previous works, using various techniques essentially based on PDE methods \cite{EthierKurtz,champagnat2006unifying,ChalubSouza09b,ChalubSouza14a}.
 Again, our interest does not lie here in rigorous proofs of convergence, and we rather wish to illustrate the energetic compatibility between the gradient flow structures. 
 
\subsection{The large population limit $N\to\infty$} 
With the particular Moran process and Kimura Equation in mind, we restrict here the general statements from Section~\ref{sec:CTDMC} to $k=2$ absorbing states. 
We chose to do so mainly for the ease of exposition, but the discussion below however extends to more general situations. 

In this particular setup the original $\bp$ variable is thus $(N+1)$ dimensional, and the rescaled $Q$-process $\bq$ is $(N-1)$-dimensional. 
The $(N+1)$-dimensional kernel $\bM$ gives rise to the rescaled $(N-1)$-dimensional kernel $\bK$, and we often write $\bL:=\bK^\dagger-\bI$ for the effective kernel driving the transient evolution $\frac{\rd\,}{\rd t} \bq =(\bK^\dagger-\bI)\bq =\bL\bq$ from Lemma~\ref{lem:three_equivalent}. 
 
Writing again $x_i=\frac{i}{N}$ for the uniform partition of $[0,1]$, we map canonically any $(N-1)$ probability vector $\bq$ to the associated empiric measure 
\[ 
\widehat q := \sum\limits_{i=1}^{N-1}q_i\delta_{x_i}\in \mathcal P([0,1]) 
\] 
(and in fact $\widehat q\in\mathcal P(0,1)$ since the absorbing states $i=0,N$ were discarded in the construction of the $Q$-process). 
Note that $\bq\mapsto\widehat q$ actually defines the ``projections $\Pi_\eps$'' from Subsection~\ref{sec:Gamma}, allowing to embed a sequence of varying discrete (probability) spaces into the limit (probability) space $\mathcal P([0,1])$. 
Slightly abusing the notations, we say that a sequence $\bq_N$ of probability vectors converges weakly-$*$ to the probability measure $q\in\mathcal P([0,1])$ as $N\to\infty$, denoted $\bq_N\weakstar q$, if $\widehat{q}_N\weakstar q$ weakly-$*$ in the sense of measures. 
Note that this defines the abstract convergence ``$\overset{S}{\rightharpoonup}$'' from Subsection~\ref{sec:Gamma}. 
 
We have then 
\begin{lemma} 
\label{lem:Gamma_CV_H}
Assume that $\bpi_N\weakstar \pi$, let $H(\bq|\bpi_N)$ be given by Equation~\eqref{eq:def_entropy_BGS_discrete} if $\bq\ll\bpi_N$ and $+\infty$ otherwise, and $\mathcal{H}(q|\pi)$ be given by Equation~\eqref{eq:entropy_continuous}.
Then 
\[
\H(\cdot\, |\pi) =\mathop{\operatorname{\Gamma-\lim}}_{N\to\infty}\,H(\cdot\,|\bpi_N)
\]
in the sense of Definition~\ref{defi:Gamma_CV}. 
\end{lemma}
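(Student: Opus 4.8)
The plan is to recognise the discrete entropy as a genuine relative entropy between measures on $[0,1]$, and then to combine the joint lower semicontinuity of relative entropy (for the $\Gamma$-$\liminf$) with an explicit recovery-sequence construction (for the $\Gamma$-$\limsup$). Throughout I identify a probability vector $\bq=(q_i)$ with its empirical measure $\widehat q=\sum_i q_i\delta_{x_i}$, and likewise write $\widehat\bpi_N=\sum_i\pi_{N,i}\delta_{x_i}$, so that the hypothesis reads $\widehat\bpi_N\weakstar\pi$. Since $\widehat q$ and $\widehat\bpi_N$ are supported on the same atoms, a direct computation gives the identity $H(\bq|\bpi_N)=\mathrm{Ent}(\widehat q\,|\,\widehat\bpi_N)$, where $\mathrm{Ent}(\mu|\nu)=\int\frac{\rd\mu}{\rd\nu}\log\frac{\rd\mu}{\rd\nu}\,\rd\nu$ is the measure-theoretic relative entropy ($=+\infty$ when $\mu\not\ll\nu$). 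This reduces the whole statement to the $\Gamma$-convergence $\mathrm{Ent}(\cdot\,|\,\widehat\bpi_N)\overset{\Gamma}{\to}\mathrm{Ent}(\cdot\,|\,\pi)$ for the weak-$*$ topology. The prefactor $\kappa/2$ present in $\mathcal H$ but absent from $H$ is a fixed positive constant (dictated by the diffusive scaling of~\eqref{eq:Kimura}); multiplying the discrete entropies by $\kappa/2$ and using that $\Gamma$-convergence is stable under multiplication by positive constants, it is harmless and I suppress it below.

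For the $\Gamma$-$\liminf$ inequality I would use the Donsker--Varadhan dual representation
\[
\mathrm{Ent}(\mu|\nu)=\sup_{\varphi\in\mathcal C_b([0,1])}\left\{\int\varphi\,\rd\mu-\log\int \e^{\varphi}\,\rd\nu\right\}.
\]
For each fixed $\varphi\in\mathcal C_b$ the functional $(\mu,\nu)\mapsto\int\varphi\,\rd\mu-\log\int \e^{\varphi}\,\rd\nu$ is continuous for the joint weak-$*$ topology, so $\mathrm{Ent}$ is jointly weak-$*$ lower semicontinuous, being a supremum of continuous functionals. Thus, for any $\bq_N\weakstar q$, combining $\widehat q_N\weakstar q$ with $\widehat\bpi_N\weakstar\pi$ yields
\[
\mathrm{Ent}(q|\pi)\le\liminf_{N\to\infty}\mathrm{Ent}(\widehat q_N\,|\,\widehat\bpi_N)=\liminf_{N\to\infty}H(\bq_N|\bpi_N),
\]
which is precisely~\eqref{eq:def_Gamma_liminf}.

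For the $\Gamma$-$\limsup$ inequality~\eqref{eq:def_Gamma_limsup} I would first treat a density $\rho:=\frac{\rd q}{\rd\pi}$ that is continuous and bounded on $[0,1]$, setting $q_{N,i}:=\rho(x_i)\,\pi_{N,i}/Z_N$ with normalisation $Z_N:=\sum_i\rho(x_i)\pi_{N,i}$. Since $\rho$ and $\rho\log\rho$ both belong to $\mathcal C_b([0,1])$, the convergence $\widehat\bpi_N\weakstar\pi$ gives $Z_N\to\int\rho\,\rd\pi=1$, the convergence $\widehat q_N\weakstar q$, and, after inserting $q_{N,i}/\pi_{N,i}=\rho(x_i)/Z_N$,
\[
H(\bq_N|\bpi_N)=\frac{1}{Z_N}\int\rho\log\rho\,\rd\widehat\bpi_N-\log Z_N\xrightarrow[N\to\infty]{}\int\rho\log\rho\,\rd\pi=\mathrm{Ent}(q|\pi),
\]
so the recovery sequence even produces a genuine limit. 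For an arbitrary $q$ with $\mathrm{Ent}(q|\pi)<\infty$ (the case $+\infty$ being vacuous), I would approximate $\rho\in L\log L(\pi)$ by truncation and mollification by continuous bounded densities $\rho_m$, with $q_m:=\rho_m\,\pi$ (suitably renormalised) satisfying $q_m\weakstar q$ and $\mathrm{Ent}(q_m|\pi)\to\mathrm{Ent}(q|\pi)$, and then close the argument by the standard diagonal extraction, the recovery inequality being preserved because the $\Gamma$-$\limsup$ functional is itself weak-$*$ lower semicontinuous.

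I expect the main obstacle to be precisely this last approximation step: producing continuous bounded densities $\rho_m$ converging to a general $\rho\in L\log L(\pi)$ \emph{in entropy} (not merely weakly), while keeping the reference sequence $\bpi_N$ fixed and controlling the $L\log L$ integrability near the boundary, where $\pi=wz$ degenerates. The convexity and superlinearity of $s\mapsto s\log s$ make the truncation/mollification controllable, but the non-negativity constraint, the normalisation of $\rho_m$ against $\pi$, and the interplay with the vanishing boundary weight require care; everything else reduces to the two clean ingredients above.
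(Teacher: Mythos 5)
Your proof follows essentially the same route as the paper's: both identify $H(\bq|\bpi_N)$ with the relative entropy of the associated empirical measures, obtain the $\Gamma$-$\liminf$ from joint weak-$*$ lower semicontinuity of the relative entropy (the paper cites the convex-integrand lower semicontinuity theorem of Ambrosio--Fusco--Pallara where you use the Donsker--Varadhan dual representation --- two standard and interchangeable justifications of the same fact), and construct recovery sequences by sampling a continuous bounded density at the grid points and renormalising, reducing the general finite-entropy case by a density-in-energy argument that the paper likewise dispatches in one line. Your explicit treatment of the $\kappa/2$ prefactor and your flagging of the care required in the final approximation step (near the boundary where $\pi=wz$ degenerates) are, if anything, slightly more careful than the published argument.
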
 
This only requires convergence of the reference measures $\bpi_N\weakstar \pi$, which should be satisfied in practice for reasonably physical models. 
A typical situation that we shall consider below is when the effective kernel $\bL_N=\bK_N^\dagger-\bI$ converges to the continuous operator $\mathcal L$, in which case one should also expect the eigenvectors $\bw_N,\bz_N$ to converge to the continuous eigenfunctions $w(x),z(x)$ and thus the measures $\bpi_N=\bw_N\circ\bz_N\rightharpoonup w(x)z(x)\rd x=\pi(x)\rd x=\pi$ at least in some weak sense. 
One can check that this holds at least for the convergence of the neutral Moran process towards the Kimura Equation, see Subsection~\ref{ssec:kimura_eq} below. 
\begin{proof} 
We need to check the two conditions in Definition~\ref{defi:Gamma_CV}, and we begin with the $\mathop{\operatorname{\Gamma-\liminf}}$ part. 
 Let $\bq_N\weakstar q$ be an arbitrary converging sequence, and notice that by definition we have $H(\bq_N|\bpi_N)=\H(\widehat q_N|\widehat\pi_N)$ with $\widehat q_N\weakstar q$ and $\widehat{\pi}_N\weakstar \pi$ in the sense of measures. 
 From the convexity and lower semi-continuity of $\eta\mapsto \eta\log\eta$ we can immediately apply  
 \cite[Theorem 2.34]{ambrosio2000functions} to conclude that the $\mathop{\operatorname{\Gamma-\liminf}}$ condition~\eqref{item:defi_gamma_liminf} holds as 
 \[ 
 \H(q|\pi)\leq  
 \liminf\limits_{N\to\infty}  \H(\widehat q_N|\widehat{\pi_N}) 
 =\liminf\limits_{N\to\infty} H(\bq_N|\bpi_N). 
 \]  
 For the $\mathop{\operatorname{\Gamma-\limsup}}$ part~\eqref{item:defi_gamma_limsup}, fix any $q\in\mathcal P([0,1])$. 
 If $\mathcal H(q|\pi)=+\infty$ there is nothing to prove, hence we can assume that $\mathcal H(q|\pi)<+\infty$ and in particular $q\ll\pi$ and $f:=\frac {\rd q}{\rd \pi}\in L\log L(\rd \pi)\subset L^1(\rd\pi)$. 
 By approximation it is enough to consider $f\in \mathcal C_b([0,1])$ positive, in particular $f\log f\in \mathcal C_b([0,1])$. 
 Let $f_N(x):=\frac{1}{\sum_{j=1}^N f(x_j)\pi_j}f(x)$ be the renormalisation of $f$, and observe that $f_N\to f$ uniformly because $\sum_j f(x_j)\pi_j=\int f(x)\rd\widehat \pi_N\to \int f(x)\rd\pi(x)=\int \rd q(x)=1$ (our main assumption is precisely that $\widehat \pi_N\weakstar \pi$). 
 Defining next the discrete probability vector $\bq_N$ by $q_i:= f_N(x_i)\pi_i$, we have of course $\bq_N\ll\bpi_N$, $\frac{\rd \bq_N}{\rd\bpi_N}=f_N$ and $\bq_N\weakstar f \pi=q$. 
 Moreover 
 \begin{align*} 
   H(\bq_N|\bpi_N)&=\sum\limits_{i=1}^{N-1} f_N(x_i)\log f_N(x_i)\pi_i=\int_0^1 f_N(x)\log f_N(x)\rd \widehat \pi_N(x) \\ 
   &\xrightarrow[N\to\infty]{}  \int_0^1 f(x)\log f(x)\rd \pi(x)=\mathcal H(q|\pi) 
 \end{align*} 
 as desired, where the convergence follows from our initial assumption that $\bpi_N\weakstar \pi $ (i-e $\widehat{\pi_N}\weakstar \pi$) and the strong uniform convergence $f_N\log f_N\to f\log f$ (the function $\eta\mapsto\eta\log \eta$ being uniformly continuous in any bounded interval $\eta\in[0,M]$). \qed
\end{proof} 
 
Turning now to the convergence of the metric spaces, N. Gigli and J. Maas proved in \cite{gigli2013gromov} that, if $\bL_N=\bK_N^\dagger-\bI$ is the standard Laplacian matrix in the discrete $d$-dimensional torus $\mathbb T_N^d=(\Z/N\Z)^d$ with mesh size $1/N$ (i-e if $\bK_N$ is the uniform random walk), then the discrete Wasserstein space $(\mathcal P(\mathbb T_N^d),\W_N)$ converges to the continuous Wasserstein space $(\mathcal P(\mathbb T^d),\W)$ in the sense of Gromov-Hausdorff. 
This is a non trivial result \emph{per se}, whose proof is the whole purpose of \cite{gigli2013gromov}. 
Let us mention that Gigli and Maas already pointed out potential applications to $\Gamma$-convergence of gradient flows, see also \cite{al201715,disser2015gradient} for a description of finite volume schemes for Fokker-Planck Equations as discrete Wasserstein gradient flows and their convergence towards continuous counterparts \cite{Jordan_Kindeleherer_Otto}. 
 
Here we do not pretend to prove any rigorous statement in this direction, and we shall be content with the following heuristics: 
\begin{claim} 
\label{claim:Gromov_Hausdorff} 
Let $\degdiff(x)$ satisfy our assumptions~\eqref{eq:assumptions_theta}, and $U\in\mathcal C^2(0,1)$. 
 If the effective kernel $\bL_N=\bK_N^\dagger-\bI$ arises from any reasonable finite difference discretization of the operator 
 \[ 
 \mathcal Lq:=\dive(\degdiff \nabla q) + \dive(\degdiff q \nabla U) 
 \] 
 on the domain $\Omega=(0,1)$ with no-flux boundary conditions and uniform mesh size $1/N$, then the discrete Wasserstein space $(\mathcal P_N,\W_N)$ Gromov-Hausdorff converges, as $N\to\infty$, to the continuous Wasserstein-Shahshahani space $(\mathcal P,\W)$. 
\end{claim}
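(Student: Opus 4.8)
The plan is to build on the Gromov--Hausdorff convergence obtained by Gigli and Maas \cite{gigli2013gromov} for the uniform random walk on the flat torus, and to adapt it to the present degenerate, drift-perturbed situation. Recall that Gromov--Hausdorff convergence is equivalent to the existence of approximate isometries $f_N:(\mathcal P_N,\W_N)\to(\mathcal P,\W)$ with $f_N(\mathcal P_N)$ asymptotically dense; the natural choice is the empirical embedding $\bq\mapsto\widehat q=\sum_i q_i\delta_{x_i}$ already used in Lemma~\ref{lem:Gamma_CV_H}, paired with the cell-averaging discretization in the reverse direction. Asymptotic density is immediate, since empirical measures are weak-$*$ dense and $\W$ metrises weak-$*$ convergence \cite{villani2008optimal}. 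Everything thus reduces to the approximate-isometry estimate $|\W_N(\bq_0,\bq_1)-\W(\widehat q_0,\widehat q_1)|\to 0$ --- understood, as in \cite{gigli2013gromov}, with $\W_N$ taken in its mesh-rescaled parabolic normalisation, the same $N^2$ time-scaling already visible in the neutral-Moran eigenvalue $\lambda^{(\mathsf n)}=-2/N^2$. I would derive this estimate from the convergence of the discrete dynamical action~\eqref{eq:def_action_discrete_Wasserstein} to the continuous Benamou--Brenier action~\eqref{eq:Benamou_Brenier}.

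The cleanest evidence that the discrete data encode the Shahshahani tensor $1/\degdiff$ is already at the metric level. By Lemma~\ref{lem:1st_order_expansion_WN} the distance between adjacent sites is $\W_N(\be_i,\be_{i+1})=\int_0^1\rd r\,/\sqrt{\beta\big(K_{i+1,i}r,K_{i,i+1}(1-r)\big)}$, and for the prototypical neutral Moran process the off-diagonal rates satisfy $K_{i,i\pm1}\approx x_i(1-x_i)=\degdiff(x_i)$. Using the homogeneity $\beta(\lambda a,\lambda b)=\lambda\beta(a,b)$ of the logarithmic mean~\eqref{eq:def_log_mean}, this gives $\W_N(\be_i,\be_{i+1})\approx c_0/\sqrt{\degdiff(x_i)}$ with $c_0=\int_0^1\rd r\,/\sqrt{\beta(r,1-r)}$, whereas the explicit Shahshahani distance of Lemma~\ref{lem:expression_distance} gives $\mathsf d(x_i,x_{i+1})\approx(1/N)/\sqrt{\degdiff(x_i)}$. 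Thus $\tfrac1{Nc_0}\W_N(\be_i,\be_{i+1})\to\mathsf d(x_i,x_{i+1})$, with the mobility $\degdiff$ entering correctly in the denominator: large conductance means short distance. The same expansions --- $\psi_{i+1}-\psi_i\approx\tfrac1N\partial_x\psi$, $\beta(Q_i,Q_j)\approx q/\pi$, $\pi_i\approx\tfrac1N\pi(x_i)$, and $K_{i,i\pm1}\approx\degdiff(x_i)$ --- turn the discrete action~\eqref{eq:def_action_discrete_Wasserstein}, after the same rescaling, into the continuous Benamou--Brenier action~\eqref{eq:Benamou_Brenier} with integrand $\frac{|\partial_x\psi|^2}{\degdiff(x)}$, and the discrete continuity equation~\eqref{eq:discrete_continuity} into $\partial_t q+\dive(q\,\partial_x\psi)=0$, recovering Proposition~\ref{prop:Benamou_Brenier}. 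The drift $\dive(\degdiff q\nabla U)$ and the variable coefficient $\degdiff$ only perturb lower-order terms and the reference measure $\pi$; since $\bpi_N\weakstar\pi$ by assumption, they are benign modifications of the Gigli--Maas computation.

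To upgrade this pointwise matching into the full estimate, I would argue through the dynamical formulation, exactly as $\Gamma$-convergence of the actions. For the lower bound ($\Gamma$-$\liminf$, giving $\W\le\liminf\W_N$) one takes a sequence of discrete curves $\tau\mapsto\bQ^\tau_N$ of bounded action, extracts by weak-$*$ compactness a limiting curve solving the continuous continuity equation, and invokes joint lower semicontinuity of the action in the pair $(q,q\,\partial_x\psi)$. For the upper bound ($\Gamma$-$\limsup$, the recovery sequence giving $\limsup\W_N\le\W$) one discretizes a near-optimal smooth continuous curve joining $\widehat q_0$ to $\widehat q_1$, mirroring the $\Gamma$-$\limsup$ construction of Lemma~\ref{lem:Gamma_CV_H}. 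Combining the two inequalities along both the embedding and the discretization maps furnishes the $\eps_N$-isometry property with $\eps_N\to0$, which together with density yields the Gromov--Hausdorff convergence.

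The main obstacle is the boundary degeneracy. Whereas \cite{gigli2013gromov} works on the flat torus with constant mobility and no boundary, here $\degdiff$ vanishes linearly at $x=0,1$ by~\eqref{eq:assumptions_theta}, so the Shahshahani metric is genuinely singular there --- recall $\mathsf d(0,x)\sim\sqrt x$ from Lemma~\ref{lem:expression_distance} --- and the weight $1/\degdiff$ in the limiting action blows up at the endpoints. One must check that the discrete conductances vanish at exactly the right rate near the boundary, so that no spurious low-cost transport shortcut appears across the degenerate region and the singular weight is reproduced faithfully. The natural route is localisation: on any compact $K\subset\subset(0,1)$ the Shahshahani and Euclidean metrics are uniformly equivalent and the Gigli--Maas analysis transfers almost verbatim, while the boundary layers $[0,\delta]\cup[1-\delta,1]$ are controlled separately using the explicit representation $\mathsf d(x,y)=\big|\int_x^y\rd u/\sqrt{\degdiff(u)}\big|$ to bound the vanishing contribution of mass near the endpoints as $\delta\to0$. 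The delicate point within this boundary analysis is to reconcile the no-flux discrete boundary condition with the effective Dirichlet behaviour of the continuous $Q$-process flow, cf. Remark~\ref{rmk:dirichlet=neumann}.
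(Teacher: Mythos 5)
Your second paragraph is, in substance, exactly the paper's own justification: the authors do not prove Claim~\ref{claim:Gromov_Hausdorff} either (they explicitly present it as a heuristic), and their sole piece of evidence is Corollary~\ref{cor:infinitesimal_geometry_match}, obtained as you do by feeding the off-diagonal rates of $\bK_N$ into Lemma~\ref{lem:1st_order_expansion_WN}, invoking the homogeneity $\beta(\lambda a,\lambda b)=\lambda\beta(a,b)$, and comparing with the explicit Shahshahani distance of Lemma~\ref{lem:expression_distance}. One discrepancy in that computation: the Claim takes $\bL_N=\bK_N^\dagger-\bI$ to be the finite-difference discretization of $\mathcal L$, so the rates are $K_{i,i\pm1}=\degdiff_i/\Delta x^2+\mathcal O(1/\Delta x)$, of order $N^2$, and one gets $\W_N(\be_i,\be_{i+1})=\tfrac{1}{N\sqrt{\degdiff(x_i)}}+\mathcal O(1/N^2)$ with \emph{no} rescaling; you instead use the order-one Moran rates $K_{i,i\pm1}\approx\degdiff(x_i)$ and must then insert the factor $1/(Nc_0)$ by hand, so the statement you address is a rescaled variant of the one claimed (the paper also evaluates your constant $c_0=\int_0^1\rd r/\sqrt{\beta(r,1-r)}=1$). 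The observation that the drift enters only at order $\mathcal O(1/\Delta x)$, so that the limit metric does not see $U$, is made identically in the paper.

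Your remaining paragraphs go beyond what the paper attempts, and there the proposal is a roadmap rather than a proof. The strategy (empirical embedding as an approximate isometry, $\Gamma$-convergence of the action \eqref{eq:def_action_discrete_Wasserstein} to \eqref{eq:Benamou_Brenier}, localisation away from the degenerate boundary) is the right one and is what \cite{gigli2013gromov,gladbach2018scaling,gladbach2019homogenisation} carry out in non-degenerate settings, but every genuinely hard step is only named: the $\Gamma$-liminf with the weight $1/\degdiff$ blowing up at the endpoints, the exclusion of low-cost shortcuts through the boundary layer, and the matching of discrete no-flux with the Dirichlet behaviour of Remark~\ref{rmk:dirichlet=neumann}. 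Since \cite{gladbach2018scaling} shows that such Gromov--Hausdorff convergence can fail without isotropy/uniformity conditions on mesh and kernel, the assertion that the variable coefficient and drift are ``benign modifications of the Gigli--Maas computation'' overstates what you have established; the variable, degenerating mobility is precisely the new difficulty, and you acknowledge it only after having waved it away. In short: your core computation coincides with the paper's heuristic, and the additional scaffolding, while sensible, leaves open exactly the steps that would turn the Claim into a theorem.
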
 
\begin{remark}
Similar questions were investigated in \cite{gladbach2018scaling,gladbach2019homogenisation}, where it was shown that some homogeneity and uniformity of the space meshing is essential (as assumed here). 
Note that part of our statement in Claim~\ref{claim:Gromov_Hausdorff} is that the limiting distance $\W$ does not ``see'' the potential $U$, while the discrete Wasserstein distance does depend on the whole kernel $\bK_N$ (hence a priori on $U$). 
This means that, as the population size increases, the discrete distance only retains the purely diffusive part $\dive(\degdiff \nabla q)$ of the elliptic operator, while the influence of the drift $\dive(\degdiff q\nabla U)$ smears out and vanishes in the limit. 
We believe that this should be the case in higher dimensions as well, see again \cite{gigli2013gromov} for a particular and rigorous $d$-dimensional statement in the torus. 
Let us point out that, in general, the discrete distance $\W_N$ is highly non local. 
Our assumption that the kernel arises from a finite difference approximation essentially localises the distance as $N\to\infty$: for example if the discretization is given by a fixed $m$-points stencil, then each discrete state interacts with a neighbouring fraction $m/N\to 0$ of all possible states (the kernel $\bK_N$ becomes sparser and sparser, and asymptotically concentrates on the diagonal). 
This dependency of the discrete metric on the potential can be seen as a drawback, and raises the question whether our processes possess more appropriate gradient structures, independently of the potential.
For the sake of exposition we prefer keeping this question open for future work, and refer to \cite{MMP2020} for recent results and related concepts of tilt and contact convergence of gradient flows.
\end{remark}
\begin{remark} 
Here we slightly abuse the notations: the operator $\mathcal L$ in our claim is not the forward operator~\eqref{eq:def_L_forward} acting on $p$ as before, but rather the divergence-form operator~\eqref{eq:grad_flow_Q} acting on $q$, namely $\mathcal L q=\partial_x\left(\degdiff q\partial_x\left( \log\left(\frac{q}{\pi} \right) \right) \right)=\partial_x(\degdiff \nabla q)-\partial_x(\degdiff q\partial_x\log \pi)$ with potential $U=-\log\pi$ and $\kappa=2$. 
\end{remark} 
 
 To give a hint of why the claim should hold, we remember Lemma~\ref{lem:1st_order_expansion_WN} and state that
\begin{corollary} 
\label{cor:infinitesimal_geometry_match} 
Let $\bL_N=\bK_N^\dagger-\bI$ be the three-points stencil, forward finite difference discretization of $\mathcal Lq=\partial_x(\degdiff \partial_x q) + \partial_x(\degdiff q \partial_x U)$ as in our Claim~\ref{claim:Gromov_Hausdorff}. 
For fixed $x\in(0,1)$ choose $i=i_N\sim\lfloor Nx\rfloor$ such that $x_{i}=\frac{i}{N}\to x$. 
Then 
\begin{equation} 
\label{eq:infinitesimal_variation_WN} 
\W_N(\be_{i},\be_{i+1})=\frac{1}{\sqrt{\degdiff(x)}}\times\frac{1}{N} +\mathcal O(1/N^2) 
\qquad 
\mbox{as }N\to\infty. 
\end{equation} 
\end{corollary}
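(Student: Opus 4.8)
The plan is to push everything through the explicit one–dimensional integral already furnished by Lemma~\ref{lem:1st_order_expansion_WN}. A three–point stencil discretization of $\mathcal L$ produces a tridiagonal kernel $\bK_N$, which is irreducible and (for any reasonable, detailed–balance–preserving scheme) reversible with respect to the discretized stationary weights $\bpi_N\sim \e^{-U}$; hence the hypotheses of Lemma~\ref{lem:1st_order_expansion_WN} are met and
\[
\W_N(\be_i,\be_{i+1})=\int_0^1\frac{\rd r}{\sqrt{\beta\big(K_{i,i+1}\,r,\;K_{i+1,i}(1-r)\big)}}.
\]
First I would read off the two off–diagonal entries directly from the stencil. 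The diffusive part $\partial_x(\degdiff\,\partial_x q)$ contributes the symmetric leading term $N^2\degdiff_{i+1/2}$, with $\degdiff_{i+1/2}=\degdiff\big((i+\tfrac12)/N\big)$, to \emph{both} $K_{i,i+1}$ and $K_{i+1,i}$, whereas the drift $\partial_x(\degdiff\,q\,\partial_x U)$ breaks this symmetry only at order $N$. I would therefore write $K_{i,i+1}=N^2\degdiff_{i+1/2}(1-\delta_N)$ and $K_{i+1,i}=N^2\degdiff_{i+1/2}(1+\delta_N)$ with $\delta_N=\mathcal O(1/N)$, the precise coefficient depending on the drift discretization.

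Next I would use that the logarithmic mean is homogeneous of degree one, $\beta(\lambda a,\lambda b)=\lambda\,\beta(a,b)$. Factoring $N^2\degdiff_{i+1/2}$ out of both arguments yields
\[
\W_N(\be_i,\be_{i+1})=\frac{1}{N\sqrt{\degdiff_{i+1/2}}}\int_0^1\frac{\rd r}{\sqrt{\beta\big((1-\delta_N)r,\,(1+\delta_N)(1-r)\big)}}.
\]
Since $x_i\to x$ and $x_{i+1/2}\to x$, replacing $\degdiff_{i+1/2}$ by $\degdiff(x)$ costs only a relative $\mathcal O(1/N)$, hence an absolute $\mathcal O(1/N^2)$; likewise a Taylor expansion of the integrand in the small asymmetry $\delta_N=\mathcal O(1/N)$ perturbs the integral only at relative order $1/N$. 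The conclusion is $\W_N(\be_i,\be_{i+1})=\dfrac{C}{N\sqrt{\degdiff(x)}}+\mathcal O(1/N^2)$, where $C=\int_0^1\rd r/\sqrt{\beta(r,1-r)}$ is a universal constant. Notice that the potential $U$ has already dropped out at leading order, exactly as announced after Claim~\ref{claim:Gromov_Hausdorff}.

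The main obstacle is twofold. The routine part is the uniform control of the two $\mathcal O(1/N)$ corrections, which I would base on $\degdiff,U\in\mathcal C^2$ together with uniform bounds for the expansion on compact subsets of $(0,1)$; the boundary degeneracy $\degdiff(0)=\degdiff(1)=0$ is harmless since $x$ is fixed in the interior. The genuinely delicate point is the identification of the leading coefficient $C$: to match the line element of Lemma~\ref{lem:expression_distance} (via the superposition principle $\mathsf d(x,y)=\W(\delta_x,\delta_y)$) one must show that the universal integral built from the logarithmic mean reproduces the continuous Shahshahani length. This is precisely the normalization question underlying the Gromov--Hausdorff heuristic of Claim~\ref{claim:Gromov_Hausdorff}, and it is exactly the cell–problem/admissible–mean sensitivity analysed in \cite{gladbach2018scaling,gladbach2019homogenisation}; I would expect this step to demand the most care, since the convergence of discrete transport metrics towards their continuous counterparts is known to depend subtly on the chosen mean and on the uniformity of the mesh.
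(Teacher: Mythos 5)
Your argument is, up to its final line, the same as the paper's: read the off-diagonal entries $K_{i,i+1},K_{i+1,i}$ off the stencil, observe that the diffusion contributes the symmetric leading term of order $N^2\degdiff$ while the drift enters only at order $N$, invoke Lemma~\ref{lem:1st_order_expansion_WN}, and use the $1$-homogeneity of $\beta$ to extract $\frac{1}{N\sqrt{\degdiff(x)}}$ times the universal integral $C=\int_0^1\rd r/\sqrt{\beta(r,1-r)}$. (The paper uses $\degdiff_i$ where you use $\degdiff_{i+1/2}$, which is immaterial at this order; like you, it does not spell out the uniform control of the $\mathcal O(1/N)$ corrections; and the reversibility of $\bK_N$ with respect to $\bpi_N$, which Lemma~\ref{lem:1st_order_expansion_WN} requires, is taken for granted in both arguments.)

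The one point where you diverge is the evaluation of $C$, which you explicitly leave open and single out as the delicate step. The paper closes it in one stroke by asserting, via an underbrace, that $\int_0^1\rd r/\sqrt{\beta(r,1-r)}=1$. Your caution is vindicated: that identity is false. Since the logarithmic mean is dominated by the arithmetic mean, $\beta(r,1-r)\le\tfrac{r+(1-r)}{2}=\tfrac12$ with equality only at $r=\tfrac12$, hence $C>\sqrt 2$ (numerically $C\approx 1.56$). Consequently neither your argument nor the paper's establishes~\eqref{eq:infinitesimal_variation_WN} with the constant as written; what both arguments actually prove is $\W_N(\be_i,\be_{i+1})=\frac{C}{N\sqrt{\degdiff(x)}}+\mathcal O(1/N^2)$ with this universal $C>1$. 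The qualitative message survives --- the infinitesimal discrete metric sees the Shahshahani tensor $1/\degdiff(x)$ and forgets the potential $U$ --- but the discrete and continuous line elements match only up to the fixed multiplicative constant $C$, precisely the kind of normalisation discrepancy between discrete transport metrics and their continuum limits analysed in the scaling-limit references you cite. In short: your proof has a gap relative to the literal statement, but it is a gap that cannot be closed, because the literal statement (and the paper's own evaluation of the constant) is off by the factor $C$.
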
 
This means that the discrete Wasserstein distance fully encodes, at least asymptotically, the local geometry of the continuous Shahshahani space $(\Omega,\mathsf d)$ -- i.e. the Riemannian tensor $\frac{1}{\degdiff(x)}$ on $T_x\Omega$. 
Indeed formula~\eqref{eq:infinitesimal_variation_WN} gives the $\frac 1N$-infinitesimal variations of the discrete distance between point-measures located at $x_i\sim x$ and $x_{i+1}\sim x+1/N$. 
This is the exact counterpart of the continuous Wasserstein distance between point-masses 
\[ 
\W(\delta_x,\delta_{\small x+1/N}) 
=\mathsf d\left(x,x+1/N\right)=\frac{1}{\sqrt{\degdiff(x)}}\times \frac 1N +\mathcal O(1/N^2), 
\] 
which is easily checked using the representation formula~\eqref{eq:dist_explicit} for the Shahshahani distance. 
Since the overlying Wasserstein space $(\mathcal P([0,1],\W))$ is built upon -- and reflects the geometry of -- the underlying Shahshahani space $(\Omega,\mathsf d)$, this explains now our claim~\ref{claim:Gromov_Hausdorff}. 
 
\begin{proof}[Proof of Corollary~\ref{cor:infinitesimal_geometry_match}] 
In order to keep the notations light we write $x_i=i/N$, $\Delta x=x_{i+1}-x_i=\frac 1N$, $\degdiff_i=\degdiff(x_i)$, and $U_i=U(x_i)$. 
 Our assumption that $\bL$ is the forward finite difference approximation of $\mathcal Lq=\partial_x(\degdiff \partial_x q) + \partial_x(\degdiff q \partial_x U)$ means here that 
 \[ 
 (\bL\bq)_i= 
 \underbrace{\frac{ 
 \degdiff_i\frac{q_{i+1}-q_i}{\Delta x} 
 - 
 \degdiff_{i-1}\frac{q_{i}-q_{i-1}}{\Delta x} 
 }{\Delta x} 
 }_{\approx \,\partial_x(\degdiff \partial_x q)} 
 + 
 \underbrace{ 
 \frac{ 
 \degdiff_iq_i\frac{U_{i+1}-U_{i}}{\Delta x} 
 -\degdiff_{i-1}q_{i-1}\frac{U_{i}-U_{i-1}}{\Delta x} 
 } 
 {\Delta x} 
  }_{\approx \, \partial_x(\degdiff q\partial_x U)}. 
 \] 
 The off-diagonal coefficients of $\bK=\bL-\bI$ thus read 
 \[ 
 K_{\small i,i+1}=\frac{\degdiff_i}{\Delta x^2} 
 \] 
 and 
 \[ 
 K_{\small i+1,i}=\frac{\degdiff_i}{\Delta x^2} - \frac{\degdiff_i}{\Delta x}\cdot\frac{U_{i+1}-U_i}{\Delta x} 
 =\frac{\degdiff_i}{\Delta x^2}+\mathcal O(1/\Delta x), 
 \] 
 where the leading $\frac 1 {\Delta x ^2}$ order only stems from the higher-order diffusive part of the operator $\mathcal L$. 
 The lower $\mathcal O(1/\Delta x)$ order arises from the drift part exclusively, after absorbing one order of $1/\Delta x$ into $\left|\frac{U_{i+1}-U_i}{\Delta x}\right|\approx |\partial_x U|\lesssim C$ uniformly in $\Delta x$. 
 This explains why the limit distance $\W$ does not see the potential $U$ at leading order, and the same argument would carry through with any consistent discretization (centered, backwards, five-point stencil\dots) 
 Appealing next to Lemma~\ref{lem:1st_order_expansion_WN}, substituting the above expressions in~\eqref{eq:1st_order_expansion_WN}, and by scaling properties $\beta(\lambda x,\lambda y)=\lambda\beta(x,y)$ of the logarithmic mean, we finally get 
 \begin{align*} 
  \W_N(\delta_{x_i},\delta_{x_{\small i+1}})&=\int_0^1\frac{\rd r}{\sqrt{\beta\big(K_{\small i+1,i}r,K_{\small i,i+1}(1-r)\big)}} 
  \underset{N\to\infty}{\sim} 
  \int_0^1\frac{\rd r}{\sqrt{\beta\left(\frac{\degdiff _i}{\Delta x^2}r,\frac{\degdiff _i}{\Delta x^2}(1-r)\right)}}\\ 
  &=\frac{\Delta x}{\sqrt{\degdiff_i}}\underbrace{\int_0^1\frac{\rd r}{\sqrt{\beta(r,1-r)}}}_{=1} 
  \underset{N\to\infty}{\sim} 
  \frac{1}{\sqrt{\degdiff(x)}}\times\frac 1N\ , 
 \end{align*} 
and the proof is complete. \qed
\end{proof} 
It is worth noticing that all the computations in this proof are locally uniform in $i$ as long as $x_i=i/N$ remains bounded away from the boundaries, which in our statement was guaranteed since we considered $x_i\to x\in(0,1)$. 
Of course the behaviour close to the boundaries is drastically different because $\degdiff(x)$ vanishes and the diffusion degenerates.
 
\subsection{The deterministic limit $\kappa\to 0$} 
\label{sec:deterministic} 
Here we aim at recovering the Replicator Equation~\eqref{eq:Replicator_PDE} from the Kimura Equation~\eqref{eq:Kimura}. 
For both dynamics the metric space is fixed once and for all to be the Wasserstein-Shahshahani space $(\mathcal P([0,1]),\W)$, hence the only delicate point is to check the $\Gamma$-convergence for the sequence of functionals from Section~\ref{sec:CTCMC} when $\kappa\to0$. 
Emphasising now the dependence on $\kappa$, our relative entropy~\eqref{eq:entropy_continuous} was 
\[ 
\mathcal F_\kappa(q)\bydef
\frac{\kappa}{2}\H(q|\pi_\kappa) 
  =\frac{\kappa}{2}\int_0^1 \frac{q(x)}{\pi_\kappa(x)}\log\left(\frac {q(x)}{\pi_\kappa(x)}\right)\pi_\kappa(x)\rd x. 
\] 
Let us recall that we implicitly set $\mathcal F_\kappa(q)=+\infty$ whenever $q$ is not absolutely continuous with respect to $\pi_\kappa=\pi_\kappa(x) \rd x$ (or equivalently with respect to $\rd x$), and from Lemma~\ref{lem:pi=wz} the reference measure can be written 
\[ 
\pi_\kappa(x)=w_\kappa(x)z_\kappa(x)=\frac{1}{C_\kappa}\frac{w_\kappa ^2(x)e^{-2V(x)/\kappa}}{\degdiff(x)}. 
\] 
Here $C_\kappa$ is a normalising constant such that $\int \pi_\kappa =1$, and $w_\kappa,z_\kappa$ are the principal eigenfunctions of the operators $\mathcal L^\dagger_\kappa,\mathcal L_\kappa$ from Definition~\ref{def:characteristic_triple_continuous}. 
Using the above expression of $\pi_\kappa$ in terms of the fixed potential $V(x)$ in order to expand $\log\pi_\kappa$, we rewrite the entropy in the more convenient form 
\begin{align} 
\mathcal F_\kappa(q)&= 
\frac{\kappa}{2} \int_0^1 q\log q \,\rd x -\frac{\kappa}{2} \int_0^1 q\log \pi_{\kappa} \,\rd x 
\nonumber\\ 
&= \frac{\kappa}{2} \int_0^1 q\log q \,\rd x +\int _0^1 
\underbrace{\left( 
V+\frac{\kappa}{2}\log\left(\frac{C_\kappa\degdiff}{ w_\kappa^2}\right) 
\right)}_{\textstyle :=V_\kappa} 
q\, \rd x.
\label{eq:F_kappa_expanded} 
\end{align} 
 
As anticipated, we have next 
\begin{proposition} 
\label{prop:Gamma_liminf_deterministic} 
Assume that the sequence of probability measures $\{\pi_\kappa\}_{\kappa>0}$ satisfies a Large Deviation Principle (LDP) with speed $2\kappa^{-1}$ and rate function $V(x)$  in the deterministic limit $\kappa\to 0$, i.e. 
\begin{equation} 
\label{eq:LDP} 
-\inf\limits_{x\in \mathring{E}}\,V(x) 
\leq  
\liminf\limits_{\kappa\to 0}\frac{\kappa}{2}\log \pi_\kappa(E) 
\leq  
\limsup\limits_{\kappa\to 0}\frac{\kappa}{2}\log \pi_\kappa(E) 
\leq 
-\inf\limits_{x\in \overline E}\,V(x) 
\end{equation} 
for all Borel sets $E\subset [0,1]$, where $\mathring{E}$ and $\overline{E}$ are the interior and the closure of $E$, respectively.
For $q\in\mathcal P([0,1])$, define 
\[ 
\mathcal V(q):= 
\int_{0}^1 \{V(x)-\min V\}\, \rd q(x). 
\] 
Then 
\[
\mathop{\operatorname{\Gamma-\lim}}_{\kappa\to0}\mathcal F_\kappa=\mathcal V\ .
\]
\end{proposition}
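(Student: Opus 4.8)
The plan is to verify the two conditions of Definition~\ref{defi:Gamma_CV} for the weak-$*$ topology, recalling that here $\mathcal F_\kappa(q)=\frac{\kappa}{2}\int_0^1\frac{\rd q}{\rd\pi_\kappa}\log\frac{\rd q}{\rd\pi_\kappa}\,\rd\pi_\kappa$, i.e.\ $\frac{\kappa}{2}$ times the (unscaled) relative entropy of $q$ against $\pi_\kappa$. Both inequalities will be obtained from a single device: a partition of $[0,1]$ into finitely many cells on each of which the LDP controls the mass $\pi_\kappa(I_j)$ at the exponential scale $2/\kappa$. I would first normalise the rate function: testing the LDP on $E=[0,1]$ and using $\pi_\kappa([0,1])=1$ gives $-\inf_{(0,1)}V\le 0\le -\inf_{[0,1]}V$, so $\min V=0$ by continuity of $V$, whence $\mathcal V(q)=\int_0^1 V\,\rd q$ and I may treat $V\ge 0$ as the genuine rate function. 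For each $M\in\N$, fix a partition $0=x_0<\dots<x_M=1$ whose interior nodes are $q$-continuity points, with half-open cells $I_j=[x_{j-1},x_j)$ of width $\le 1/M$; since $\pi_\kappa>0$ on $(0,1)$ we have $\pi_\kappa(I_j)>0$.

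For the $\Gamma$-$\liminf$ inequality, let $q_\kappa\weakstar q$ and assume $\liminf\mathcal F_\kappa(q_\kappa)<\infty$. The data-processing inequality for relative entropy (monotonicity under the cell-indexing coarse-graining, a consequence of Jensen) gives
\[
\mathcal F_\kappa(q_\kappa)\ge \frac{\kappa}{2}\sum_{j=1}^M q_\kappa(I_j)\log\frac{q_\kappa(I_j)}{\pi_\kappa(I_j)}.
\]
The purely entropic part $\frac{\kappa}{2}\sum_j q_\kappa(I_j)\log q_\kappa(I_j)$ lies in $[\,-\tfrac{\kappa}{2}\log M,\,0\,]$ and hence tends to $0$; the LDP upper bound yields $\liminf_{\kappa\to 0}\big(-\tfrac{\kappa}{2}\log\pi_\kappa(I_j)\big)\ge \inf_{\overline{I_j}}V$, and $q_\kappa(I_j)\to q(I_j)$ because $I_j$ is a $q$-continuity set. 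Combining, $\liminf_{\kappa\to 0}\mathcal F_\kappa(q_\kappa)\ge\sum_j q(I_j)\inf_{\overline{I_j}}V$; letting $M\to\infty$ and using the continuity of $V$ upgrades the right-hand side to $\int_0^1 V\,\rd q=\mathcal V(q)$.

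For the $\Gamma$-$\limsup$ inequality I would, for each fixed $M$, take the cell-adapted competitor $q^{(M)}_\kappa\in\mathcal P([0,1])$ whose density with respect to $\pi_\kappa$ equals the constant $q(I_j)/\pi_\kappa(I_j)$ on each cell $I_j$; thus $q^{(M)}_\kappa$ carries the correct cell mass $q(I_j)$ but reproduces the \emph{shape} of $\pi_\kappa$ inside $I_j$. Because that density is constant per cell, the intra-cell relative entropy vanishes identically and a one-line computation gives
\[
\mathcal F_\kappa(q^{(M)}_\kappa)=\frac{\kappa}{2}\sum_{j=1}^M q(I_j)\log\frac{q(I_j)}{\pi_\kappa(I_j)}.
\]
Again the entropy term vanishes with $\kappa$, while the LDP lower bound gives $\limsup_{\kappa\to 0}\big(-\tfrac{\kappa}{2}\log\pi_\kappa(I_j)\big)\le \inf_{\mathring I_j}V$, so that $\limsup_{\kappa\to 0}\mathcal F_\kappa(q^{(M)}_\kappa)\le\sum_j q(I_j)\inf_{I_j}V\le\mathcal V(q)$ for every $M$. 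Since $q^{(M)}_\kappa$ and $q$ share all cell masses and differ only within cells of width $\le 1/M$, they lie within $2/M$ in the bounded-Lipschitz metric \emph{uniformly in} $\kappa$; a routine diagonal extraction $M=M(\kappa)\to\infty$ then produces a genuine recovery sequence $q_\kappa:=q^{(M(\kappa))}_\kappa\weakstar q$ with $\limsup_\kappa\mathcal F_\kappa(q_\kappa)\le\mathcal V(q)$.

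The main obstacle is precisely the recovery sequence. A naive fixed mollified density $f\,\rd x$ does not work: inside a cell $\pi_\kappa$ concentrates (at scale $\sqrt\kappa$) near the cell-minimiser of $V$, so the intra-cell relative entropy of $f$ against $\pi_\kappa$ grows like $\frac12\log(1/\kappa)$, and taming this would require quantitative concentration estimates unavailable from the bare LDP. The cell-adapted competitor sidesteps this by matching the shape of $\pi_\kappa$ on each cell, annihilating the offending term, after which only soft ingredients remain: the two-sided LDP on cells, the continuity of $V$ (to pass from cell-infima to $\int V\,\rd q$), the choice of $q$-continuity nodes, and the diagonalisation. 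The remaining genuinely analytic input, namely the validity of the hypothesised two-sided LDP for $\{\pi_\kappa\}$, is taken as an assumption of the statement.
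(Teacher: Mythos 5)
Your proof is correct, but it takes a genuinely different route from the paper: the paper's entire proof is a one-line appeal to the abstract equivalence of Mariani (``A $\Gamma$-convergence approach to large deviations'', \S 3), which states that an LDP with speed $\beta_\kappa$ and rate $I$ is equivalent to $\Gamma$-convergence of the rescaled relative entropies $\beta_\kappa^{-1}H(\cdot\,|\pi_\kappa)$ to $q\mapsto\int I\,\rd q$ in the weak-$*$ topology. What you have written is, in effect, a self-contained proof of the relevant implication of that equivalence in the compact one-dimensional setting: the coarse-graining (data-processing) lower bound for the $\Gamma$-$\liminf$ and the cell-adapted competitor whose density is constant on each cell for the recovery sequence are exactly the mechanisms underlying Mariani's argument. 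The trade-off is clear: the citation is shorter and valid on general Polish spaces, while your version is transparent about where each hypothesis enters (the two-sided LDP on cells, continuity of $V$ to pass from cell infima to $\int V\,\rd q$, compactness of $[0,1]$ to bound the intra-cell entropy by $\log M$), and your observation that a naively mollified recovery sequence fails because of the $\tfrac12\log(1/\kappa)$ intra-cell entropy is a genuinely useful remark that the paper omits entirely. Two cosmetic points: the last cell should be taken closed, $I_M=[x_{M-1},1]$, so that the cells actually partition $[0,1]$ and $q^{(M)}_\kappa$ has total mass one when $q$ charges $\{1\}$; and in the chain $\sum_j q(I_j)\inf_{\mathring I_j}V\le\sum_j q(I_j)\inf_{I_j}V$ the inequality as written points the wrong way set-theoretically, but it is harmless since $\inf_{\mathring I_j}V=\inf_{\overline{I_j}}V$ for continuous $V$ on an interval, which is the identity you are implicitly using.
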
 
We refer to \cite{dembo1998large} for an introduction to Large Deviations. 
We stress that the problem strongly resembles the so-called \emph{entropic regularisation} in optimal transport, see e.g. \cite{carlier2017convergence,leonard2012schrodinger,cuturi2013sinkhorn}. 
 
Although not surprising at least formally in view of the expansion~\eqref{eq:F_kappa_expanded}, this convergence is not trivial in our particular setting: For any fixed $\kappa>0$ the eigenfunction $w_\kappa(x)$ vanishes at the boundary, hence the effective potential blows-up as $\lim_{x\to\{0,1\}} V_\kappa(x)=+\infty$. 
However, since the initial potential $V(x)$ is smooth up to the boundaries we see that the convergence $V_\kappa\to V$ cannot be uniform in $[0,1]$: the degenerate diffusion induces a thin boundary layer, whose length-scale should be accounted for by the $\Gamma$-convergence. 
Moreover, and even less evidently, there can exist internal transition layers near interior local maximum points of $V$ leading to metastable states. 
 
\begin{proof} 
With our strong assumption~\eqref{eq:LDP} this is an immediate consequence of the abstract results in \cite[\S 3]{mariani_GammaCV}. \qed
\end{proof} 
In practice, one should check by hand the strong hypothesis~\eqref{eq:LDP} in each case of interest, and this is a nontrivial task that strongly depends on the structure of the potential $V$. 
Let us point out that this trivially holds at least for the neutral Kimura Equation, i.e., Equation~\eqref{eq:KimuraPDE} with constant potential $V$.
In this case $\degdiff(x)=x(1-x)$ and $z_\kappa(x)=1,w_\kappa(x)=6 x(1-x)$, hence $\pi_{\kappa}(x)=w_\kappa(x)z_\kappa(x)=6 x(1-x)$ is independent of $\kappa$ and trivially satisfies the LDP~\eqref{eq:LDP}.
 
\subsection{Minimising movements and JKO schemes} 
\label{sec:JKO} 

In Section~\ref{sec:DTDMC}, we briefly discussed some popular models in finite population evolutionary dynamics, namely the Moran and the Wright-Fisher processes.
In these models, as time is a discrete variable, no gradient flow formulation is possible.
However, the so-called JKO scheme provides a direct view of the functional that is minimised by evolution, even when considering discrete in time models. In fact, one possible way to make sense of gradient flows in mere metric space is De Giorgi's \emph{minimising movement} \cite{de1993new}, which is roughly speaking an implicit Euler time-stepping. 
More precisely, given a driving functional $F$ in the abstract metric space $(\mathcal X,d)$, one wishes to make sense of $\partial_t x = -\grad _d F(x)$. 
To this end, choose a small time step $h>0$, and define the minimising scheme by solving recursively 
\[
x^{n+1}=\underset{x}{\operatorname{Argmin}} \left\{ 
\frac{1}{2h}d^2(x,x^n)+F(x) 
\right\}. 
\] 
This is sometimes called a discrete (in time) gradient flow: 
Indeed, for the particular instance of a Hilbert space $H$, the distance is $d^2(x,y)=\|x-y\|^2$, and the Euler-Lagrange Equation associated with $\min\limits_x\{\|x-x^n\|^2/2h +F(x)\}$ is nothing but the Euler implicit scheme $\frac{x^{n+1}-x^n}{h}=-\grad_H F(x^{n+1})$. 
Defining next the piecewise-constant interpolation in time 
$ x_h(t)\bydef x^{n+1}$ for $t\in (nh,(n+1)h]$ 
one should in general expect convergence $x_h(t)\to x(t)$ as the time step $h\to 0$, and obtain in the limit a solution to the abstract gradient flow. 
We refer to \cite{ambrosio2008gradient} for a detailed overview of this metric theory. 
 
When the space under consideration is that of probability measures endowed with the quadratic Wasserstein distance, Jordan, Kinderlehrer and Otto identified the classical Fokker-Planck Equation as the Wasserstein gradient flow of the BGS entropy \cite{Jordan_Kindeleherer_Otto}, precisely by proving the convergence of the minimising movement in the limit of small time steps. 
In this particular context this time discretization is often called the JKO scheme, after the three authors, and this has proved to be a powerful tool in order to prove existence of weak solutions of PDEs and construct numerical approximations.
 
Let us illustrate this general idea in the simplest possible case here, namely the one-dimensional replicator ODE~\eqref{eq:Replicator}. 
In this context, the minimising movement reads 
\begin{equation} 
\label{eq:JKO_Replicator} 
X^{n+1} 
\in\mathop{\mathrm{arg\,min}}_{X\in[0,1]}\left\{\frac{1}{2 h}\mathsf{d}(X,X^{n})^2+V(X)\right\}\ , 
\end{equation} 
where $\mathsf d$ is the Shahshahani distance from~\eqref{eq:distance} and we initialise $X^0=X^I$ for some given initial condition $X^I$. 
With our smoothness assumptions on the potential $V$, standard convexity arguments show that $X^{n+1}$ is uniquely well-defined, and as expected we have 
\begin{theorem} 
\label{theo:CV_JKO_replicator_ODE} 
Let $X_h(t):[0,\infty)\to [0,1]$ denote the piecewise-constant interpolation of the sequence $X^n$ defined in~\eqref{eq:JKO_Replicator}. 
Then $X_h(t)$ converges uniformly to $X(t)$ in any finite interval $t\in[0,T]$ as $h\to 0$, where $X(t)$ is the unique solution to the Replicator Equation~\eqref{eq:Replicator} with initial condition $X(0)=X^I$. 
\end{theorem}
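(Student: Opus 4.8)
The plan is to flatten the Shahshahani metric by a one-dimensional change of variables, thereby reducing the scheme~\eqref{eq:JKO_Replicator} to the classical Euclidean minimising movement (implicit Euler) for a \emph{smooth} one-dimensional gradient flow, whose convergence is standard. First I would introduce the Shahshahani coordinate $\Phi(x):=\int_{1/2}^x\frac{\rd u}{\sqrt{\degdiff(u)}}$. By the explicit representation~\eqref{eq:dist_explicit} from Lemma~\ref{lem:expression_distance} we have $\mathsf d(x,y)=|\Phi(x)-\Phi(y)|$, so $\Phi$ is an isometry from $([0,1],\mathsf d)$ onto an interval $[a,b]:=[\Phi(0),\Phi(1)]\subset\R$ equipped with the Euclidean distance; the interval is compact precisely because $\degdiff$ has simple zeros, so the integral converges at both endpoints. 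Its inverse $\psi:=\Phi^{-1}$ is strictly increasing and satisfies $\psi'=\sqrt{\degdiff\circ\psi}$; since the zeros of $\degdiff$ are simple one checks that $\psi$ extends to a $C^2$ (indeed $C^\infty$ if $V,\degdiff$ are) map up to the endpoints, with $\psi'(a)=\psi'(b)=0$. Consequently $\widetilde V:=V\circ\psi\in C^2([a,b])$ and $\widetilde V'(a)=\widetilde V'(b)=0$.

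Next I would rewrite the scheme. Setting $s^n:=\Phi(X^n)$ and using $\mathsf d(X,X^n)=|\Phi(X)-s^n|$ together with $V(X)=\widetilde V(\Phi(X))$, the minimisation~\eqref{eq:JKO_Replicator} becomes the Euclidean minimising movement for $\widetilde V$, namely $s^{n+1}$ minimises $s\mapsto\frac{1}{2h}|s-s^n|^2+\widetilde V(s)$ over $[a,b]$, with $s^0=\Phi(X^I)$. After extending $\widetilde V$ to a $C^{1,1}(\R)$ function, the objective is strongly convex as soon as $h<1/\|\widetilde V''\|_\infty$, so $s^{n+1}$ is unique and satisfies the implicit Euler relation $\frac{s^{n+1}-s^n}{h}=-\widetilde V'(s^{n+1})$. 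The map $s^n\mapsto s^{n+1}$ is the inverse of the strictly increasing map $\sigma\mapsto\sigma+h\widetilde V'(\sigma)$, hence itself strictly increasing, and the endpoints $a,b$ are fixed points because $\widetilde V'$ vanishes there; by this monotonicity the scheme keeps $s^n\in[a,b]$, matching the constrained minimiser and reflecting the absorbing nature of the boundary.

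I would then invoke the classical convergence of the implicit Euler scheme for the smooth one-dimensional gradient flow $\dot s=-\widetilde V'(s)$: uniform Lipschitz bounds on the piecewise-affine interpolant (a direct consequence of the discrete relation and of $\widetilde V'$ being bounded) give, via Arzel\`a--Ascoli, a uniformly convergent subsequence whose limit solves $\dot s=-\widetilde V'(s)$; uniqueness of that limit (Cauchy--Lipschitz, since $\widetilde V\in C^{1,1}$) upgrades this to full uniform convergence $s_h(t)\to s(t)$ on $[0,T]$. Finally I would transform back. By the chain rule $\dot s=\Phi'(x)\dot x=\dot x/\sqrt{\degdiff(x)}$ and $\widetilde V'(s)=V'(x)\sqrt{\degdiff(x)}$, so $\dot s=-\widetilde V'(s)$ is equivalent to $\dot X=-\degdiff(X)V'(X)$, i.e.\ to the Replicator Equation~\eqref{eq:Replicator}. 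Since $\psi$ is uniformly continuous on the compact interval $[a,b]$ and $X_h=\psi(s_h)$, the uniform convergence $s_h\to s$ yields the uniform convergence $X_h(t)\to X(t)=\psi(s(t))$ claimed in the statement.

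The main obstacle is not the Euclidean convergence, which is entirely routine, but rather justifying the change of variables up to the degenerate boundary: one must verify that, although $\Phi'=1/\sqrt{\degdiff}$ blows up at $x=0,1$, the inverse $\psi$ is genuinely $C^2$ there (so that $\widetilde V$ is $C^{1,1}$ and the transformed flow is well-posed), and that both the limiting flow and the discrete scheme remain trapped inside $[a,b]$ without leaking through the absorbing endpoints. Both points hinge on the simplicity of the zeros of $\degdiff$ assumed in~\eqref{eq:assumptions_theta}, which forces $\psi'$ to vanish, and hence $\widetilde V'$ to vanish, exactly at the endpoints.
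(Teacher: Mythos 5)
Your proof is correct, but it takes a genuinely different route from the paper's. The paper works intrinsically in the metric space $([0,1],\mathsf d)$: testing $X=X^n$ as a competitor in \eqref{eq:JKO_Replicator} yields the total square distance estimate $\frac{1}{2h}\sum_n\mathsf d^2(X^{n+1},X^n)\le V(X^0)-\inf V$, which gives equicontinuity and hence uniform compactness of $\{X_h\}$ by Arzel\`a--Ascoli; the limit is then identified by writing the first-order optimality condition of the scheme and passing to the limit with the help of the mean value theorem applied to $\int_{X^n}^{X^{n+1}}\degdiff(z)^{-1/2}\,\rd z$. You instead exploit the fact that a one-dimensional Riemannian metric is always flat: the coordinate $\Phi(x)=\int_{1/2}^x\degdiff(u)^{-1/2}\,\rd u$ is an isometry onto a compact Euclidean interval $[a,b]$, and in that chart the scheme is the textbook implicit Euler / minimising movement for the potential $\widetilde V=V\circ\Phi^{-1}$. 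Your verifications at the degenerate endpoints are exactly the points that need care, and they do go through: since $\psi=\Phi^{-1}$ satisfies $\psi'=\sqrt{\degdiff\circ\psi}$ and hence $\psi''=\tfrac12\,\degdiff'\circ\psi$, the simplicity of the zeros in \eqref{eq:assumptions_theta} gives $\psi\in C^2([a,b])$, $\widetilde V\in C^{1,1}$, and $\widetilde V'(a)=\widetilde V'(b)=0$, so the endpoints are fixed points of the resolvent and the unconstrained Euler step never leaves $[a,b]$. What your route buys is elementarity and extra structure (strong convexity of each step for $h<1/\|\widetilde V''\|_\infty$, hence uniqueness and, if desired, quantitative rates); what it loses is generality: the flattening trick is strictly one-dimensional, whereas the paper's intrinsic argument (a priori estimate extracted from the scheme plus passage to the limit in the optimality condition) is the template that survives in the measure-valued Wasserstein settings of Sections~\ref{sec:CTDMC} and~\ref{sec:CTCMC}. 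The only cosmetic mismatch is that you establish compactness for the piecewise-affine interpolant while the statement concerns the piecewise-constant one; the two differ by $O(h)$ uniformly in time, so this is immaterial.
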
 
 
\begin{proof} 
Testing $X=X^n$ as a competitor in the variational scheme~\eqref{eq:JKO_Replicator} and summing over $n$, one gets the classical \emph{total square distance estimate} 
\[ 
\frac{1}{2h}\sum\limits_{n\geq 0}\mathsf d^2(X^{n+1},X^n)\leq V(X^0)-\inf\limits_{X\in[0,1]}V(X). 
\] 
This is a fairly general but crucial property of the abstract minimising movement, see \cite{ambrosio2008gradient}. 
One should think of this as an $H^1$ estimate $\int_0^\infty |\dot X_h(t)|^2_{X_h(t)}\rd t\leq C$, where the metric speed $\dot X_h(t)\in T_{X_h(t)}\Omega$ is measured in the intrinsic Shahshahani metrics. 
This in turn gives equicontinuity in time and thus compactness of $\{X_h\}_{h>0}$ in the uniform topology. 
Up to extraction of a subsequence we can therefore assume that $X_h$ converges uniformly to some $X$, and we only have left to prove that this limit curve is the unique solution to the Replicator ODE. 
 
To this end we use the definition~\eqref{eq:distance} of the Shahshahani distance to differentiate $\mathsf d^2(\cdot,X^n)$ and write the optimality condition for~\eqref{eq:JKO_Replicator} as 
\[ 
0=\left. 
\frac{d}{d X}\right|_{X^{n+1}}\left(\frac{1}{2 h }\mathsf{d}(X,X^n)^2+V(x)\right) 
= 
\frac{\mathsf{d}(X^{n+1},X^n)}{h\sqrt{\degdiff(X^{n+1})}}+V'(X^{n+1})\ . 
\] 
Exploiting formula~\eqref{eq:dist_explicit} and applying the mean value theorem, there exists $X_*\in[X(t),X(t+h)]$ such that 
\[
 \frac{X(t+h)- X(t)}{\sqrt{\degdiff(X_*)}}=	\int_{X(t)}^{X(t+h)}\frac{\rd z}{\sqrt{\degdiff(z)}}= 
	\mathsf{d}(X^{n+1},X^n) 
	=-h\sqrt{\degdiff(X(t+h))}V'(X(t+h))\ . 
\]
Taking $h\to0$, and from the uniform convergence $X_h(t)\to X(t)$, we find the Replicator Equation~\eqref{eq:Replicator}, together with the initial condition $X_h(0)=X^\ini\Rightarrow X(0)=X^\ini$. \qed 
\end{proof} 
 
 Similarly, all models discussed so far can be obtained by recursive minimisation of the following JKO functionals (denoting the previous steps $\bq_0,q_0$ as parameters): 
The finite state model given by Equation~\eqref{eq:discrete_grad_flow_Q} is associated to the minimisation of the functional 
 \begin{equation} 
  \label{eq:JKO_CTDMC} 
 J_N(\bq\,; \bq_0)\bydef\frac{1}{2h}\W_N^2(\bq,\bq_0)+H(\bq|\bpi_N)\ .
 \end{equation} 
The generalised Kimura Equation given by~\eqref{eq:grad_flow_Q} is obtained via minimisation of the functional
 \begin{equation} 
  \label{eq:JKO_KimuraQ} 
  \mathcal J_\kappa(q\,;q_0)\bydef\frac{1}{2h}\W^2(q,q_0)+\H(q|\pi_\kappa) 
 \end{equation} 
and, finally, the hyperbolic Replicator PDE~\eqref{eq:Replicator_PDE} corresponds to
\begin{equation} 
  \label{eq:JKO_replicator_PDE} 
 \mathcal J(q\,;q_0)\bydef\frac{1}{2h}\W^2(q,q_0)+\mathcal V(q|\pi)\ .
\end{equation} 
From our previous discussions on the various $\Gamma$-limits and Gromov-Hausdorff convergence, the one-step operators converge accordingly: on the one hand $J_N(\cdot\,;\bq_0^N)\xrightarrow[]{\Gamma} \mathcal J_\kappa (\cdot\,;q_0)$ in the large population limit $N\to\infty$ (presuming that the previous step $\bq^N_0\weakstar q_0$), and on the other hand $\mathcal J_\kappa(\cdot;q_0) \xrightarrow[]{\Gamma} \mathcal J(\cdot;q_0)$ in the deterministic limit $\kappa\to 0$. 
Since $\Gamma$-convergence guarantees convergence of the minimisers towards minimisers of the limit functional, this means roughly speaking that the discrete-in-time gradient flows (JKO schemes) also converge. 

We finish this subsection with some words on the relevance of JKO schemes in biological modelling. On one hand, there is an old argument if there is a functional which is minimised by biological evolution, cf. e.g.~\cite{Behera_1996,Ewens_1992}; on the other hand, all models discussed here were introduced without any explicit reference to such functionals. 
We just showed that not only do such functionals appear in all the models discussed so far, but also that they naturally split into a ``free energy'' -- which nature tries to minimise in time -- and an inertial term making such changes more difficult.
These ideas will be further explored in a subsequent work.
 
\subsection{From Moran processes to the Kimura Equation}\label{ssec:kimura_eq} 
 
 From the discussion in Subsection~\ref{subsec:state of art}, it is clear that the Kimura Equation~\eqref{eq:KimuraPDE} can be derived from the Moran process~\ref{def:Moran_TM}. In this section, we will reproduce this known result (see~\cite{ChalubSouza09b}) with a different approach. In fact, we will show that the precise form of the degenerate diffusion in Equation~\eqref{eq:KimuraPDE} is a consequence of the limiting behaviour of the characteristic triple, both in the discrete and continuous cases.
 More precisely
 
 \begin{lemma}\label{lem:Thetax}
  Consider the Moran process and assume the weak selection principle~\eqref{eq:wsp}. Let $(\mu,\widetilde{\bw},\widetilde{\bz})$ be its characteristic triple. Consider the generalised Kimura Equation~\eqref{eq:Kimura} and its continuous characteristic triple $(\lambda,w,z)$.
  Assume further that 
  $\frac{1}{N}\widetilde{\bw}\circ\widetilde{\bz}^{-1}$  converges pointwise uniformly to  $\frac{w}{z}$.
  Then $\degdiff(x)=x(1-x)$.
 \end{lemma}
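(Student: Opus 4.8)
The plan is to compute the $N\to\infty$ limit of the discrete ratio $\frac{1}{N}\frac{w_i}{z_i}$ explicitly from the closed formula~\eqref{eq:ratio_ev}, and to match it against the continuous ratio $\frac{w}{z}$, which Lemma~\ref{lem:ratio_continuous} pins down in terms of $\degdiff$ and $V$. Indeed, rewriting~\eqref{eq:pi=wz} gives $\degdiff(x)=\frac{1}{C}\,\frac{w(x)}{z(x)}\,\e^{-2V(x)/\kappa}$ for a positive constant $C$, so that knowing $\frac{w}{z}$ determines $\degdiff$ up to an irrelevant multiplicative constant. The whole point is therefore to show that $\lim_{N}\frac{1}{N}\frac{w_i}{z_i}$ is proportional to $x(1-x)\,\e^{2V(x)/\kappa}$: the exponential factors then cancel and only $x(1-x)$ survives.

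To this end I would first insert the weak selection ansatz~\eqref{eq:wsp} into~\eqref{eq:ratio_ev}. Writing $x_j=j/N$, the selection probabilities factor as $\tsp_j=x_j\left(1-\frac{2}{\kappa N}(1-x_j)V'(x_j)\right)$ and $1-\tsp_j=(1-x_j)\left(1+\frac{2}{\kappa N}x_jV'(x_j)\right)$. Substituting these into~\eqref{eq:ratio_ev} splits the product into a purely \emph{neutral} part, built from the factors $\frac{1-x_j}{x_j}=\frac{N-j}{j}$, and a \emph{selective} part collecting the $O(1/N)$ corrections. The neutral part combines with $\binom{N}{i}^{-1}$ and collapses, by an exact factorial cancellation, to $x_i(1-x_i)$; this is the same elementary computation that underlies the explicit neutral formulas recalled in Subsection~\ref{ssec:moran_proc}.

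The heart of the matter is the selective part $\prod_j\left(1+\frac{c_j}{N}\right)$, whose logarithm equals, at leading order and modulo a single $O(1/N)$ boundary term, $\frac{2}{\kappa N}\sum_{j<i}\left[x_j+(1-x_j)\right]V'(x_j)=\frac{2}{\kappa N}\sum_{j<i}V'(x_j)$, a Riemann sum converging to $\frac{2}{\kappa}\int_0^{x}V'(s)\,\rd s=\frac{2}{\kappa}(V(x)-V(0))$. Hence the selective factor tends to $\e^{2V(x)/\kappa}$ up to the constant $\e^{-2V(0)/\kappa}$. Collecting both pieces gives $\lim_{N}\frac{1}{N}\frac{w_i}{z_i}=C_\infty\,x(1-x)\,\e^{2V(x)/\kappa}$; comparing with $\frac{w}{z}=C\,\degdiff\,\e^{2V(x)/\kappa}$ and cancelling the exponentials forces $\degdiff(x)\propto x(1-x)$. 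The proportionality constant is then fixed to $1$ by the normalisation, as one verifies directly in the neutral case: there $z=1$, $w=6x(1-x)$, while $z_i^{(\mathsf{n})}=\frac{1}{N-1}$ and $w_i^{(\mathsf{n})}=\frac{6i(N-i)}{N(N+1)}$ give $\frac{1}{N}\frac{w_i}{z_i}\to 6x(1-x)=\frac{w}{z}$.

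I expect the main obstacle to be the rigorous and \emph{uniform} (in $x=i/N$) control of the selective product: one multiplies $O(N)$ factors, each of the form $1+O(1/N)$, and must ensure both that the second-order Taylor remainders sum to a vanishing quantity and that the Riemann sum converges uniformly in the upper endpoint $i$. This is delicate near the absorbing boundaries $x\in\{0,1\}$, where the neutral factors $\frac{N-j}{j}$ degenerate and where the smoothness of $V$ up to the boundary must be invoked to keep the corrections bounded. Since the statement already \emph{assumes} pointwise uniform convergence of $\frac{1}{N}\frac{w_i}{z_i}$ towards $\frac{w}{z}$, this compactness is granted, and the remaining task is precisely the explicit asymptotic identification sketched above.
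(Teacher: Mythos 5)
Your proposal is correct and follows essentially the same route as the paper's proof: insert the weak-selection ansatz \eqref{eq:wsp} into \eqref{eq:ratio_ev}, let the binomial/factorial part collapse to $x_i(1-x_i)$, identify the selective product as a Riemann-sum exponential $\e^{2(V(x)-V(1/N))/\kappa}$, and compare with the identity $\frac{w}{z}=C\degdiff\,\e^{2V/\kappa}$ from Lemma~\ref{lem:ratio_continuous}. Your extra remarks on fixing the multiplicative constant via the neutral case and on uniform control of the product are sensible but not a different argument; the paper handles the product convergence by citing the product-integral result and relies on the assumed uniform convergence exactly as you note.
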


 \begin{proof}
 From Equation~\eqref{eq:ratio_ev} we first have
 \begin{align*}
  \frac{w_i}{Nz_i}&=C\binom{N}{i}^{-1}\frac{\prod_{j=1}^i(1-\tsp_j)}{\prod_{j=1}^{i-1}\tsp_j}
  =C\binom{N}{i}^{-1}\frac{\prod_{j=1}^i\frac{N-j}{N}\left[1+\frac{2j}{\kappa N^2}V'(j/N)\right]}{\prod_{j=1}^{i-1}\frac{j}{N}\left[1-\frac{2(N-j)}{\kappa N^2}V'(j/N)\right]}\\ 
&=C\binom{N}{i}^{-1}\frac{(N-1)!N^{i-1}}{N^i(N-i-1)!(i-1)!}\left[1-\frac{2(N-i)}{\kappa N^2}V'(i/N)\right]\prod_{j=1}^i\frac{1+\frac{2j}{\kappa N^2}V'(j/N)}{1-\frac{2(N-j)}{\kappa N^2}V'(j/N)}\\ 
&=C\frac{i(N-i)}{N^2}\left[1+\mathcal{O}\left(N^{-1}\right)\right]
\prod_{j=1}^i\left[1+\frac{2}{\kappa N}V'(j/N)+\mathcal{O}\left(N^{-2}\right)\right]\ . 
 \end{align*}
 Following~\cite{Manturov1991} we get next
 \begin{align*} 
 \prod_{j=1}^i\left[1+\frac{2}{\kappa N}V'(i/N)+\mathcal{O}\left(N^{-2}\right)\right]&=\e^{\frac{2}{\kappa}\int_{1/N}^{i/N}V'(y)\rd y}\left[1+\mathcal{O}\left(N^{-1}\right)\right]\\
 &=\e^{\frac{2}{\kappa}(V(i/N)-V(1/N))}\left[1+\mathcal{O}\left(N^{-1}\right)\right] \ .
\end{align*} 
On the other hand, from Lemma~\ref{lem:pi=wz}, it is clear that $\frac{w(x)}{z(x)}=C\degdiff(x)\e^{2V(x)/\kappa}$, for a possibly different constant $C$, and imposing $x=i/N$ we finish the proof. \qed
 \end{proof}

\begin{remark}
The  convergence assumed in Lemma~\ref{lem:Thetax} is usually not straightforward to verify, but it is easily checked in the neutral case.
\end{remark}

As a direct consequence of Lemmas~\ref{lem:expression_distance} and~\ref{lem:Thetax}, we recover that the distance between two populations at deterministic states  is given by the so-called Shahshahani distance 
 \begin{equation}\label{eq:distance_continuous} 
  \mathcal W(\delta_x,\delta_y)
  =\mathsf{d}(x,y)=\left|\int_x^y\frac{\rd z}{\sqrt{z(1-z)}}\right|=2\left|\arcsin\sqrt{y}-\arcsin\sqrt{x}\right|. 
 \end{equation} 
 This distance turns out to be the same as the ``genetic distance'' introduced by Edwards and Cavalli-Sforza in the 1970's \cite{cavalli1999genetics,edwards1984likelihood}--- see also \cite{akin1990differential,antonelli1977geometry}.

\begin{remark}
	We hasten to point out that, while the Gromov-Hausdorff convergence of the metrics $\mathcal W_N\to \mathcal W$ is neither necessary nor sufficient for the $\Gamma$-convergence of gradient flows discussed above.
	However, the compatibility of such metrics suggests that there are generalisations of the metric displayed in Equation~\eqref{eq:distance_continuous} that are both compatible and relevant to discrete and continuous non-deterministic models.
\end{remark}

Finally, we point out that in most cases of biological interest one expects that the leading eigenvectors of the discrete process will converge pointwise
uniformly to the eigenfuctions of the continuous operator, and this is sufficient  to verify the  assumptions in  Lemma~\ref{lem:Gamma_CV_H}.   In this situation, we will also have that $\mathcal{G}_\phi(p|\pi)=G_\phi(\bp|\overline\bpi)+\mathcal{O}\left(N^{-1}\right)$, provided $\phi$ is sufficiently regular --- e.g. of locally bounded variation in $\R^+$, cf. \cite{Chui1971}.

\section{Conclusion}\label{sec:conclusion}

This work was born from a crossbreeding between two unrelated research programs: (i) to clarify differences and similarities in the triad of evolutionary models, and hence to understand them  in a unified way;  
(ii) to investigate the existence and relevance of local maximisation principles for evolutionary models and, by extension, in evolutionary biology.  

From this viewpoint, the main consequence of this work is to show that local maximisation principles --- and, consequently, variational structures --- may be formulated for the main models in evolutionary dynamics and are compatible among themselves. 

Perhaps not surprisingly, the correct approach to address this question was to understand the gradient flow formulation of all considered processes.
This program  has been already carry out for the Replicator Dynamics in \cite{akin1979geometry,Shahshahani_1979}.
For finite populations, the gradient flow formulation introduced by Maas in \cite{Maas_JFA} was the basic tool used to obtain a similar formulation for the continuous in time Moran process. The latter was then used as a starting point to obtain  this formulation for the continuous process (Kimura Equation). In addition, following earlier work \cite{ChalubSouza14a},  we embed the replicator dynamics in the hyperbolic equation formally obtained from the Kimura Equation in the vanishing genetic drift limit, and verify that the gradient flow structure found in \cite{Shahshahani_1979,akin1979geometry} is preserved as expected. 

Once these gradient structures are well established, a very natural question arises next:
Since there is a limiting relation among the processes involved, are the gradient structures compatible with one another?
A very natural framework to answer this question was for us the use of   $\Gamma$-convergence of gradient flows.  

The current work can also be seen as  part of a long program on the geometrisation of evolution initiated by \cite{akin1979geometry} (see also \cite{antonelli1977geometry,Shahshahani_1979,akin1990differential,Hofrichter_Jost_Tran}).
The gradient structure  for the continuous Moran process extends this  geometric approach to stochastic models for  finite populations without mutation.

As a byproduct, we introduced several possible free-energies (or entropies) for evolutionary processes.
A priori, there is no reason to favour any particular one, and it seems fruitful to  understand the dynamics induced by all these entropies in different scenarios (i.e., for several fitness potentials $V$), in particular the classical ones in evolutionary dynamics: dominance, convergence, and coexistence.
This will require a detailed study of the entropic dynamics of discrete and reducible Markov chains, not limited to the standard BGS entropy~\cite{Sober_2011}. 

An important question is how far the results obtained here can be extended. Ideally, there should be at least some relevant classes of more complex models (multi-type, structured) that should be amenable to a similar analysis.
An initial attempt towards generalisations appears in~\cite{ChalubSouza18}, with promising results; in particular the study of the fitness potential $V$ seems to provide hints on 3-types dynamics without mutations and in the 2-types dynamics with mutations.

Finally, it is important to point out that the optimisation process addressed here are all local --- i.e. they are \textit{myopic} in economics parlance --- and this does not guarantee the existence of  a global optimisation process. 
This framework is compatible with the adaptive dynamics point of view of evolution; see~\cite{Gyllenberg_Service_JM2011,Gyllenberg_MDL2011,meszena2002evolutionary,metz1996does,metz2008does,mylius2004does,gyllenberg2011necessary,kisdi1998frequency}; it is also compatible with Maximum Entropy Production Principles \cite{martyushev2006maximum,ziman1956general,jones1983variational}.
An extreme illustration of the difference between local and global maximisation in biological dynamics is the so-called \emph{evolutionary suicide}~\cite{Gyllenerg_Parvinen_2001}.

\begin{acknowledgements}
FACCC and AMR were partially supported by FCT/Portugal Strategic Project UID/MAT/00297/2019 (Centro de Matem\'atica e Aplica\c c\~oes, Faculdade de Ci\^encias e Tecnologia, Universidade Nova de Lisboa).
FACCC also benefited from an ``Investigador FCT'' grant. 
LM was supported by FCT/Portugal projects PTDC/MAT-STA/0975/2014 and PTDC/MAT-STA/28812/2017.
MOS was partially supported  by  CNPq  under grants \# 309079/2015-2, 310293/2018-9 and by CAPES — Finance Code 001.
\end{acknowledgements}

\begin{acknowledgements}
FACCC and AMR were partially supported by FCT/Portugal Strategic Project UID/MAT/00297/2019 (Centro de Matem\'atica e Aplica\c c\~oes, Faculdade de Ci\^encias e Tecnologia, Universidade Nova de Lisboa).
FACCC also benefited from an ``Investigador FCT'' grant. 
LM was supported by FCT/Portugal projects PTDC/MAT-STA/0975/2014 and PTDC/MAT-STA/28812/2007. MOS was partially supported  by  CNPq  under grants \# 309079/2015-2, 310293/2018-9 and by CAPES — Finance Code 001.
\end{acknowledgements}


\begin{thebibliography}{100}

\bibitem{akin1979geometry}
E.~Akin.
\newblock {\em The geometry of population genetics}, volume~31 of {\em Lecture
  Notes in Biomathematics}.
\newblock Springer Verlag, Berlin-Heidelberg-New York, 1979.

\bibitem{akin1990differential}
E.~Akin.
\newblock The differential geometry of population genetics and evolutionary
  games.
\newblock In S.~Lessard, editor, {\em Mathematical and Statistical Developments
  of Evolutionary Theory. Proc. Semin., Montr\'eal/Can. 1987, NATO ASI Ser.,
  Ser. C 299, 1-93 (1990).}, pages 1--93. Kluwer Academic Publisher, Dordrecht,
  The Netherlands, 1990.

\bibitem{al201715}
F.~{Al Reda} and B.~{Maury}.
\newblock {Interpretation of finite volume discretization schemes for the
  Fokker-Planck equation as gradient flows for the discrete Wasserstein
  distance.}
\newblock In {\em {Topological optimization and optimal transport in the
  applied sciences}}, pages 400--416. Berlin: De Gruyter, 2017.

\bibitem{ambrosio2000functions}
L.~{Ambrosio}, N.~{Fusco}, and D.~{Pallara}.
\newblock {\em {Functions of bounded variation and free discontinuity
  problems.}}
\newblock Oxford: Clarendon Press, 2000.

\bibitem{Ambrosio_Gigli_UserGuide}
L.~Ambrosio and N.~Gigli.
\newblock {A User's Guide to Optimal Transport}.
\newblock In {\em {Modelling and optimisation of flows on networks, Cetraro,
  Italy 2009}}, volume {2062} of {\em {Lecture Notes in Mathematics}}, pages
  {1--155}. Springer, Berlin, Heidelberg, {2013}.

\bibitem{ambrosio2008gradient}
L.~{Ambrosio}, N.~{Gigli}, and G.~{Savar\'e}.
\newblock {\em {Gradient flows in metric spaces and in the space of probability
  measures. 2nd ed.}}
\newblock Basel: Birkh\"auser, 2nd ed. edition, 2008.

\bibitem{antonelli1977geometry}
P.~L. Antonelli and C.~Strobeck.
\newblock {The geometry of random drift I. Stochastic distance and diffusion}.
\newblock {\em Adv. Appl. Probab.}, 9(2):238--249, 1977.

\bibitem{athreya1972branching}
K.~B. {Athreya} and P.~E. {Ney}.
\newblock {\em {Branching processes.}}, volume 196.
\newblock Springer, Berlin, 1972.

\bibitem{Behera_1996}
N.~Behera.
\newblock Variational principles in evolution.
\newblock {\em Bull. Math. Biol.}, 58(1):175--202, 1996.

\bibitem{bellomo2008mathematical}
N.~Bellomo and M.~Delitala.
\newblock From the mathematical kinetic, and stochastic game theory to
  modelling mutations, onset, progression and immune competition of cancer
  cells.
\newblock {\em Phys. Life Rev.}, 5(4):183--206, 2008.

\bibitem{benamou2000computational}
J.-D. Benamou and Y.~Brenier.
\newblock {A computational fluid mechanics solution to the Monge-Kantorovich
  mass transfer problem}.
\newblock {\em Numer. Math.}, 84(3):375--393, 2000.

\bibitem{Blanchet_2013}
A.~Blanchet and P.~Laurencot.
\newblock {The Parabolic-Parabolic Keller-Segel System with Critical Diffusion
  as a Gradient Flow in $\mathbb{R}^d$, $d\ge3$}.
\newblock {\em {Commun. Partial Differ. Equ.}}, {38}({4}):{658--686}, {2013}.

\bibitem{burger2000mathematical}
R.~B{\"u}rger.
\newblock {\em The mathematical theory of selection, recombination, and
  mutation}.
\newblock John Wiley \& Sons, UK, 2000.

\bibitem{carlier2017convergence}
G.~Carlier, V.~Duval, G.~Peyr{\'e}, and B.~Schmitzer.
\newblock Convergence of entropic schemes for optimal transport and gradient
  flows.
\newblock {\em {SIAM J. Math. Anal.}}, 49(2):1385--1418, 2017.

\bibitem{cattiaux2009quasi}
P.~Cattiaux, P.~Collet, A.~Lambert, S.~Mart{\'\i}nez, S.~M{\'e}l{\'e}ard,
  J.~San~Mart{\'\i}n, et~al.
\newblock Quasi-stationary distributions and diffusion models in population
  dynamics.
\newblock {\em The Annals of Probability}, 37(5):1926--1969, 2009.

\bibitem{cattiaux2010competitive}
P.~Cattiaux and S.~M{\'e}l{\'e}ard.
\newblock Competitive or weak cooperative stochastic lotka--volterra systems
  conditioned on non-extinction.
\newblock {\em J. Math. Biol.}, 60(6):797--829, 2010.

\bibitem{cavalli1999genetics}
L.~L. Cavalli-Sforza and W.~F. Bodmer.
\newblock {\em The genetics of human populations}.
\newblock W. H. Freeman, 1971.

\bibitem{ChalubSouza18}
F.~A. Chalub and M.~O. Souza.
\newblock Fitness potentials and qualitative properties of the {Wright-Fisher}
  dynamics.
\newblock {\em J. Theor. Biol.}, 457:57 -- 65, 2018.

\bibitem{ChalubSouza09b}
F.~A. C.~C. Chalub and M.~O. Souza.
\newblock From discrete to continuous evolution models: a unifying approach to
  drift-diffusion and replicator dynamics.
\newblock {\em Theor. Pop. Biol.}, 76(4):268--277, 2009.

\bibitem{Chalub_Souza:CMS_2009}
F.~A. C.~C. Chalub and M.~O. Souza.
\newblock A non-standard evolution problem arising in population genetics.
\newblock {\em Commun. Math. Sci.}, 7(2):489--502, 2009.

\bibitem{ChalubSouza14a}
F.~A. C.~C. Chalub and M.~O. Souza.
\newblock {The frequency-dependent Wright-Fisher model: diffusive and
  non-diffusive approximations}.
\newblock {\em J. Math. Biol.}, 68(5):1089--1133, 2014.

\bibitem{ChalubSouza16}
F.~A. C.~C. Chalub and M.~O. Souza.
\newblock Fixation in large populations: a continuous view of a discrete
  problem.
\newblock {\em J. Math. Biol.}, 72(1-2):283--330, 2016.

\bibitem{ChalubSouza_2017}
F.~A. C.~C. {Chalub} and M.~O. {Souza}.
\newblock {On the stochastic evolution of finite populations.}
\newblock {\em {J. Math. Biol.}}, 75(6-7):1735--1774, 2017.

\bibitem{champagnat2006unifying}
N.~Champagnat, R.~Ferri{\`e}re, and S.~M{\'e}l{\'e}ard.
\newblock Unifying evolutionary dynamics: from individual stochastic processes
  to macroscopic models.
\newblock {\em Theor. Pop. Biol.}, 69(3):297--321, 2006.

\bibitem{champagnat2016exponential}
N.~Champagnat and D.~Villemonais.
\newblock Exponential convergence to quasi-stationary distribution and
  {$Q$}-process.
\newblock {\em Probab. Theory Rel.}, 164(1-2):243--283, 2016.

\bibitem{chow2012fokker}
S.-N. Chow, W.~Huang, Y.~Li, and H.~Zhou.
\newblock {Fokker--Planck equations for a free energy functional or Markov
  process on a graph}.
\newblock {\em Arch. Ration. Mech. An.}, 203(3):969--1008, 2012.

\bibitem{Chui1971}
C.~K. Chui.
\newblock {Concerning rates of convergence of Riemann sums}.
\newblock {\em J. Approx. Theory}, 4(3):279 -- 287, 1971.

\bibitem{cmm2013}
P.~Collet, S.~Mart\'\i nez, and J.~S.~Mart\'\i n.
\newblock Quasi-Stationary Distributions.
\newblock Springer Berlin Heidelberg, 2013.

\bibitem{cuturi2013sinkhorn}
M.~Cuturi.
\newblock Sinkhorn distances: Lightspeed computation of optimal transport.
\newblock In {\em Proceedings of the 26th International Conference on Neural
  Information Processing Systems - Volume 2}, NIPS'13, pages 2292--2300, USA,
  2013. Curran Associates Inc.

\bibitem{dal2012introduction}
G.~{Dal Maso}.
\newblock {\em {An introduction to $\Gamma$-convergence.}}, volume~8.
\newblock Basel: Birkh\"auser, 1993.

\bibitem{Danilkina_etal}
O.~Danilkina, M.~O. Souza, and F.~A. C.~C. Chalub.
\newblock Conservative parabolic problems: Nondegenerated theory and
  degenerated examples from population dynamics.
\newblock {\em Math. Method. Appl. Sci.}, 41(12):4391--4406, 2018.

\bibitem{dg1980}
E.~{De Giorgi}, A.~Marino, and M.~Tosques.
\newblock {Problems of evolution in metric spaces and maximal decreasing curve.}
\newblock {\em Atti Accad. Naz. Lincei}, Rend. Cl. Sci. Fis. Mat. Natur.(8), 1980.

\bibitem{de1993new}
E.~{De Giorgi}.
\newblock {New problems on minimizing movements.}
\newblock In {\em {Boundary value problems for partial differential equations
  and applications. Dedicated to Enrico Magenes on the occasion of his 70th
  birthday}}, pages 81--98. Paris: Masson, 1993.

\bibitem{deangelis2005individual}
D.~L. DeAngelis and W.~M. Mooij.
\newblock Individual-based modeling of ecological and evolutionary processes.
\newblock {\em Annu. Rev. Ecol. Evol. Syst.}, 36:147--168, 2005.

\bibitem{dembo1998large}
A.~{Dembo} and O.~{Zeitouni}.
\newblock {\em {Large deviations techniques and applications. 2nd ed.,
  corrected 2nd printing.}}, volume~38.
\newblock Berlin: Springer, 2nd ed., corrected 2nd printing edition, 2010.

\bibitem{DiFrancesco_2013}
M.~Di~Francesco and S.~Fagioli.
\newblock {Measure solutions for non-local interaction PDEs with two species}.
\newblock {\em {Nonlinearity}}, {26}({10}):{2777--2808}, {2013}.

\bibitem{disser2015gradient}
K.~Disser and M.~Liero.
\newblock On gradient structures for {M}arkov chains and the passage to
  {W}asserstein gradient flows.
\newblock {\em Netw. Heterog. Media}, 10(2):233--253, 2015.

\bibitem{edwards1984likelihood}
A.~W.~F. Edwards.
\newblock {\em Likelihood}.
\newblock Cambridge University Press, 1972.

\bibitem{erbar2016gradient}
M.~Erbar, M.~Fathi, V.~Laschos and A.Schlichting.
\newblock {Gradient flow structure for McKean-Vlaslov equations on discrete spaces}.
\newblock {\em Discrete Contin. Dyn. Syst.}, {36}{12}:{6799--6833}, {2016}.

\bibitem{etheridge2011some}
A.~Etheridge.
\newblock {\em Some Mathematical Models from Population Genetics: {\'E}cole
  D'{\'E}t{\'e} de Probabilit{\'e}s de Saint-Flour XXXIX-2009}, volume 2012.
\newblock Springer Science \& Business Media, 2011.

\bibitem{EthierKurtz}
S.~N. {Ethier} and T.~G. {Kurtz}.
\newblock {\em {Markov processes. Characterization and convergence.}}
\newblock Hoboken, NJ: John Wiley \& Sons, 1986.

\bibitem{Ewens_1992}
W.~Ewens.
\newblock An optimizing principle of natural selection in evolutionary
  population genetics.
\newblock {\em Theoret. Popul. Biol.}, 42(3):333 -- 346, 1992.

\bibitem{ewens2004mathematical}
W.~J. {Ewens}.
\newblock {\em {Mathematical population genetics. I: Theoretical introduction.
  2nd ed.}}
\newblock New York, NY: Springer, 2nd ed. edition, 2004.

\bibitem{Ewens_2011}
W.~J. Ewens.
\newblock What is the gene trying to do?
\newblock {\em Brit. J. Philos. Sci.}, 62(1):155--176, 2011.

\bibitem{Ewens_2014}
W.~J. Ewens.
\newblock {Grafen, the Price equations, fitness maximization, optimisation and
  the fundamental theorem of natural selection}.
\newblock {\em Biol. Philos.}, 29(2):197--205, Mar 2014.

\bibitem{Ewens_2015}
W.~J. Ewens and S.~Lessard.
\newblock On the interpretation and relevance of the fundamental theorem of
  natural selection.
\newblock {\em Theor. Pop. Biol.}, 104:59 -- 67, 2015.

\bibitem{figalli2010new}
A.~Figalli and N.~Gigli.
\newblock {A new transportation distance between non-negative measures with applications to gradient flows with Dirichlet boundary conditions}.
\newblock {\em Journal de Mathem\'ematiques Pures et Appliqu\'ees}, {94(2)}:{107--130}, {2010}.

\bibitem{Fisher_1922}
R.~A. Fisher.
\newblock On the dominance ratio.
\newblock {\em Proc. Royal Soc. Edinburgh}, 42:321--341, 1922.

\bibitem{Fisher_1930}
R.~A. Fisher.
\newblock {\em {The Genetical Theory of Natural Selection}}.
\newblock Clarendon Press, Oxford, 1930.

\bibitem{Furuichi_etal_2004}
S.~Furuichi, K.~Yanagi, and K.~Kuriyama.
\newblock Fundamental properties of {Tsallis} relative entropy.
\newblock {\em J. Math. Phys.}, 45(12):4868--4877, 2004.

\bibitem{gigli2013gromov}
N.~Gigli and J.~Maas.
\newblock {Gromov--Hausdorff} convergence of discrete transportation metrics.
\newblock {\em {SIAM J. Math. Anal.}}, 45(2):879--899, 2013.

\bibitem{gladbach2018scaling}
P.~Gladbach, E.~Kopfer, and J.~Maas.
\newblock Scaling limits of discrete optimal transport.
\newblock {\em arXiv preprint arXiv:1809.01092}, 2018.

\bibitem{gladbach2019homogenisation}
P.~Gladbach, E.~Kopfer, J.~Maas, and L.~Portinale.
\newblock Homogenisation of one-dimensional discrete optimal transport.
\newblock {\em arXiv preprint arXiv:1905.05757}, 2019.

\bibitem{gyllenberg2011necessary}
M.~Gyllenberg et~al.
\newblock Necessary and sufficient conditions for the existence of an
  optimisation principle in evolution.
\newblock {\em J. Math. Biol.}, 62(3):359--369, 2011.

\bibitem{Gyllenberg_MDL2011}
M.~Gyllenberg, J.~A. J.~H. Metz, and R.~Service.
\newblock When do optimisation arguments make evolutionary sense?
\newblock In {Chalub, F. A. C. C. and Rodrigues, J. F.}, editor, {\em {The
  Mathematics of Darwin's Legacy}}, {Mathematics and Biosciences in
  Interaction}, pages {233--268}. {Birkh\"auser}, {Basel, Switzerland}, {2011}.

\bibitem{Gyllenerg_Parvinen_2001}
M.~Gyllenberg and K.~Parvinen.
\newblock Necessary and sufficient conditions for evolutionary suicide.
\newblock {\em B. Math. Biol.}, 63(5):981 -- 993, 2001.

\bibitem{Gyllenberg_Service_JM2011}
M.~Gyllenberg and R.~Service.
\newblock Necessary and sufficient conditions for the existence of an
  optimisation principle in evolution.
\newblock {\em J. Math. Biol.}, 62(3):359--369, 2011.

\bibitem{Harper_2011}
M.~Harper.
\newblock Escort evolutionary game theory.
\newblock {\em Physica D}, 240(18):1411 -- 1415, 2011.

\bibitem{hauert2004spatial}
C.~Hauert and M.~Doebeli.
\newblock Spatial structure often inhibits the evolution of cooperation in the
  snowdrift game.
\newblock {\em Nature}, 428(6983):643, 2004.

\bibitem{Hofbauer_1985}
J.~Hofbauer.
\newblock The selection mutation equation.
\newblock {\em J. Math. Biol.}, 23(1):41--53, Dec 1985.

\bibitem{HofbauerSigmund}
J.~Hofbauer and K.~Sigmund.
\newblock {\em Evolutionary Games and Population Dynamics}.
\newblock Cambridge University Press, Cambridge, UK, 1998.

\bibitem{hofbauer2003evolutionary}
J.~Hofbauer and K.~Sigmund.
\newblock Evolutionary game dynamics.
\newblock {\em B. Am. Math. Soc.}, 40(4):479--519, 2003.

\bibitem{Hofrichter_Jost_Tran}
J.~{Hofrichter}, J.~{Jost}, and T.~D. {Tran}.
\newblock {\em {Information geometry and population genetics. The mathematical
  structure of the Wright-Fisher model.}}
\newblock Springer, Cham, Switzerland, 2017.

\bibitem{horn1990matrix}
R.~A. {Horn} and C.~R. {Johnson}.
\newblock {\em {Matrix analysis. 2nd ed.}}
\newblock Cambridge: Cambridge University Press, 2nd ed. edition, 2013.

\bibitem{jones1983variational}
W.~Jones.
\newblock Variational principles for entropy production and predictive
  statistical mechanics.
\newblock {\em J. Phys. A-Math. Gen.}, 16(15):3629, 1983.

\bibitem{Jordan_Kindeleherer_Otto}
R.~{Jordan}, D.~{Kinderlehrer}, and F.~{Otto}.
\newblock {The variational formulation of the Fokker-Planck equation.}
\newblock {\em {SIAM J. Math. Anal.}}, 29(1):1--17, 1998.

\bibitem{karlin1981second}
S.~Karlin and H.~E. Taylor.
\newblock {\em A second course in stochastic processes}.
\newblock Elsevier, 1981.

\bibitem{kelly2011reversibility}
F.~P. Kelly.
\newblock {\em Reversibility and stochastic networks}.
\newblock Cambridge University Press, 2011.

\bibitem{Kimura_1958}
M.~Kimura.
\newblock On the change of population fitness by natural selection.
\newblock {\em Heredity}, 12:145--167, 1958.

\bibitem{Kimura}
M.~Kimura.
\newblock On the probability of fixation of mutant genes in a population.
\newblock {\em Genetics}, 47:713--719, 1962.

\bibitem{Karev}
G. P.~Karev and E. V.~Koonin.
\newblock Parabolic replicator dynamics and the principle of minimum Tsallis information gain.
\newblock {\em {Biol. Direct}} 8, 19, 2013.

\bibitem{Kinderlehrer_Monsaigeon_Xu_2017}
D.~{Kinderlehrer}, L.~{Monsaingeon}, and X.~{Xu}.
\newblock {A Wasserstein gradient flow approach to Poisson-Nernst-Planck
  equations.}
\newblock {\em {ESAIM, Control Optim. Calc. Var.}}, 23(1):137--164, 2017.

\bibitem{kisdi1998frequency}
{\'E}.~Kisdi.
\newblock Frequency dependence versus optimization.
\newblock {\em Trends Ecol. Evol.}, 13(12):508, 1998.

\bibitem{Laguzet_2018}
L.~Laguzet.
\newblock {High order variational numerical schemes with application to
  Nash-MFG vaccination games}.
\newblock {\em {Ric. Mat.}}, {67}({1, SI}):{247--269}, {JUN} {2018}.

\bibitem{lamperti1968conditioned}
J.~Lamperti and P.~Ney.
\newblock Conditioned branching processes and their limiting diffusions.
\newblock {\em Theor. Probab. Appl.}, 13(1):128--139, 1968.

\bibitem{Laurencot_Matioc_2013}
P.~{Lauren\c cot} and B.-V. {Matioc}.
\newblock {A gradient flow approach to a thin film approximation of the Muskat
  problem.}
\newblock {\em {Calc. Var. Partial Differ. Equ.}}, 47(1-2):319--341, 2013.

\bibitem{leonard2012schrodinger}
C.~L{\'e}onard.
\newblock {From the Schr{\"o}dinger problem to the Monge--Kantorovich problem}.
\newblock {\em {J. Funct. Anal.}}, 262(4):1879--1920, 2012.

\bibitem{lisini2009nonlinear}
S.~Lisini.
\newblock Nonlinear diffusion equations with variable coefficients as gradient
  flows in wasserstein spaces.
\newblock {\em {ESAIM, Control Optim. Calc. Var.}}, 15(3):712--740, 2009.

\bibitem{Maas_JFA}
J.~{Maas}.
\newblock {Gradient flows of the entropy for finite Markov chains.}
\newblock {\em {J. Funct. Anal.}}, 261(8):2250--2292, 2011.

\bibitem{Manturov1991}
O.~V. Manturov.
\newblock The product integral.
\newblock {\em J. Sov. Math.}, 55(5):2042--2076, 1991.

\bibitem{mariani_GammaCV}
M.~Mariani.
\newblock A {$\Gamma$}-convergence approach to large deviations.
\newblock {\em Ann. Sc. Norm. Super. Pisa Cl. Sci. (5)}, 18(3):951--976, 2018.

\bibitem{martyushev2006maximum}
L.~M. Martyushev and V.~D. Seleznev.
\newblock Maximum entropy production principle in physics, chemistry and
  biology.
\newblock {\em Phys. Rep.}, 426(1):1--45, 2006.

\bibitem{meszena2002evolutionary}
G.~Meszena, {\'E}.~Kisdi, U.~Dieckmann, S.~A. Geritz, and J.~A. Metz.
\newblock Evolutionary optimisation models and matrix games in the unified
  perspective of adaptive dynamics.
\newblock {\em Selection}, 2(1-2):193--220, 2002.

\bibitem{metz2008does}
J.~Metz, S.~Mylius, and O.~Diekmann.
\newblock When does evolution optimize?
\newblock {\em Evol. Ecol. Res.}, 10:629--654, 2008.

\bibitem{metz1996does}
J.~A.~J. Metz, S.~D. Mylius, and O.~Diekmann.
\newblock {When does evolution optimize? On the relation between types of
  density dependence and evolutionarily stable life history parameters}.
\newblock IIASA Working Paper. IIASA, Laxenburg, Austria: WP-96-004, 1996.

\bibitem{mielke2013geodesic}
A.~Mielke.
\newblock {Geodesic convexity of the relative entropy in reversible Markov
  chains}.
\newblock {\em {Calc. Var. Partial Differ. Equ.}}, 48(1-2):1--31, 2013.

\bibitem{mielkeevo2016}
A.~Mielke.
\newblock { On evolutionary $\Gamma$-convergence for gradient systems}.
\newblock In {\em {Macroscopic and Large Scale Phenomena: Coarse Graining, Mean Field Limits and Ergodicity}}, pages 187–249. Springer, 2016.

\bibitem{MMP2020}
A.~Mielke, A.~Montefusco, and M.A.~ Peletier.
\newblock Exploring families of energy-dissipation landscapes via tilting--three types of EDP convergence.
\newblock {\em arXiv preprint} arXiv:2001.01455, 2020. 

\bibitem{Moran_1962}
P.~{Moran}.
\newblock {\em {The statistical processes of evolutionary theory.}}
\newblock Clarendon, Oxford, 1962.

\bibitem{mylius2004does}
S.~D. Mylius and J.~A.~J. Metz.
\newblock {When does evolution optimize? On the relationship between
  evolutionary stability, optimization and density dependence}.
\newblock {\em Elements of Adaptive Dynamics. Cambridge University Press,
  Cambridge}, 2004.

\bibitem{norris1998markov}
J.~R. {Norris}.
\newblock {\em {Markov chains. Reprint.}}, volume~2.
\newblock Cambridge: Cambridge Univ. Press, reprint edition, 1998.

\bibitem{otto2001geometry}
F.~{Otto}.
\newblock {The geometry of dissipative evolution equations: The porous medium
  equation.}
\newblock {\em Commun. Part. Diff. Eq.}, 26(1-2):101--174, 2001.

\bibitem{sandier2004gamma}
E.~Sandier and S.~Serfaty.
\newblock Gamma-convergence of gradient flows with applications to
  {Ginzburg-Landau}.
\newblock {\em Commun. Pur. Appl. Anal.}, 57(12):1627--1672, 2004.

\bibitem{santambrogio2015optimal}
F.~{Santambrogio}.
\newblock {\em {Optimal transport for applied mathematicians. Calculus of
  variations, PDEs, and modeling.}}, volume~87.
\newblock Birkh\"auser/Springer, 2015.

\bibitem{santambrogio2017euclidean}
F.~Santambrogio.
\newblock $\{$Euclidean, metric, and Wasserstein$\}$ gradient flows: an
  overview.
\newblock {\em B. Math. Sci.}, 7(1):87--154, 2017.

\bibitem{schuster1983replicator}
P.~Schuster and K.~Sigmund.
\newblock Replicator dynamics.
\newblock {\em J. Theor. Biol.}, 100(3):533--538, 1983.

\bibitem{serfaty2011gamma}
S.~Serfaty.
\newblock Gamma-convergence of gradient flows on hilbert and metric spaces and
  applications.
\newblock {\em Discret. Contin. Dyn. S.}, 31(4):1427--1451, 2011.

\bibitem{Shahshahani_1979}
S.~{Shahshahani}.
\newblock {A new mathematical framework for the study of linkage and
  selection.}
\newblock {\em {Mem. Am. Math. Soc.}}, 211:34, 1979.

\bibitem{Sigmund_1987a}
K.~Sigmund.
\newblock Game dynamics, mixed strategies, and gradient systems.
\newblock {\em Theor. Pop. Biol.}, 32(1):114 -- 126, 1987.

\bibitem{Sigmund_1987b}
K.~Sigmund.
\newblock A maximum principle for frequency dependent selection.
\newblock {\em Math. Biosci.}, 84(2):189 -- 195, 1987.

\bibitem{sigmund1999evolutionary}
K.~Sigmund and M.~A. Nowak.
\newblock Evolutionary game theory.
\newblock {\em Curr. Biol.}, 9(14):R503--R505, 1999.

\bibitem{smith1978optimization}
J.~M. Smith.
\newblock Optimization theory in evolution.
\newblock {\em Annu. Rev. Ecol. Syst.}, 9(1):31--56, 1978.

\bibitem{Sober_2011}
E.~Sober and M.~Steel.
\newblock Entropy increase and information loss in markov models of evolution.
\newblock {\em Biol. Philos.}, 26(2):223--250, Mar 2011.

\bibitem{SB2002}
J.~Stoer and R.~Bulirsch.
\newblock {\em Introduction to numerical analysis}, volume~12 of {\em Text in
  Applied Mathematics}.
\newblock Springer, New York, 2002.

\bibitem{TaylorJonker_1978}
P.~D. Taylor and L.~B. Jonker.
\newblock Evolutionarily stable strategies and game dynamics.
\newblock {\em Math. Biosci.}, 40(1-2):145--156, 1978.

\bibitem{veloz2014reaction}
T.~Veloz, P.~Razeto-Barry, P.~Dittrich, and A.~Fajardo.
\newblock Reaction networks and evolutionary game theory.
\newblock {\em J. Math. Biol.}, 68(1-2):181--206, 2014.

\bibitem{Villani_Topics}
C.~{Villani}.
\newblock {\em {Topics in optimal transportation.}}, volume~58.
\newblock Providence, RI: American Mathematical Society (AMS), 2003.

\bibitem{villani2008optimal}
C.~{Villani}.
\newblock {\em {Optimal transport. Old and new.}}, volume 338.
\newblock Berlin: Springer, 2009.

\bibitem{weibull1997evolutionary}
J.~W. Weibull.
\newblock {\em Evolutionary game theory}.
\newblock MIT press, 1997.

\bibitem{Wright_1931}
S.~Wright.
\newblock {Evolution in Mendelian populations}.
\newblock {\em Genetics}, 16(2):0097--0159, MAR 1931.

\bibitem{Xuan_etal}
L. V.~Xuan, N. T. Lan, and N. A. Viet.
\newblock On application of non-extensive statistical mechanics to studying ecological diversity.
\newblock J. Phys. Conf. Ser. 726, 012024, 2016.

\bibitem{Zettl_2005}
A.~{Zettl}.
\newblock {\em {Sturm-Liouville theory.}}
\newblock Providence, RI: American Mathematical Society (AMS), 2005.

\bibitem{ziman1956general}
J.~Ziman.
\newblock The general variational principle of transport theory.
\newblock {\em Can. J. Phys.}, 34(12A):1256--1273, 1956.


\end{thebibliography}
\end{document}